\theoremstyle{plain}
\newtheorem{theorem}{Theorem}
\newtheorem{corollary}[theorem]{Corollary}
\newtheorem{lemma}[theorem]{Lemma}
\newtheorem{proposition}[theorem]{Proposition}
\theoremstyle{definition}
\newtheorem{remark}[theorem]{Remark}
\newtheorem*{remark*}{Remark}
\DeclareMathOperator{\Exp}{Exp}
\DeclareMathOperator{\Poi}{Poi}
\newcommand{\R}{\mathbb{R}}
\newcommand{\ind}{\mathds {1}}
\renewcommand{\P}{\mathbb{P}}
\newcommand{\E}{\mathbb{E}}
\newcommand{\ve}{\varepsilon}
\newcommand{\tT}{\widetilde{T}}
\newcommand{\tX}{\widetilde{X}}
\newcommand{\ww}{\widehat{w}}
\newcommand{\wy}{\widehat{y}}
\newcommand{\tw}{\widetilde{w}}
\newcommand{\ty}{\widetilde{y}}
\begin{document}

\pagenumbering{arabic}
\title[Stochastic Spikes and Poisson Approximation of one-dim. SDEs]{Stochastic spikes and Poisson Approximation of one-dimensional stochastic differential equations with applications to continuously measured Quantum Systems }
\author[Kolb]{Martin Kolb}
\address{Institut f\"ur Mathematik, Universit\"at Paderborn, Warburger Str. 100,
        33098 Paderborn, Germany}
        \email{kolb@math.uni-paderborn.de}
\author[Liesenfeld]{Matthias Liesenfeld}
\address{Institut f\"ur Mathematik, Universit\"at Paderborn, Warburger Str. 100,
        33098 Paderborn, Germany}
        \email{liesenfe@math.uni-paderborn.de}
\maketitle
\begin{abstract}
Motivated by the recent contribution \cite{BB17} we study the scaling limit behavior of a class of one-dimensional stochastic differential equations which has a unique attracting point subject to a small additional repulsive perturbation. Problems of this type appear in the analysis of continuously monitored quantum systems. We extend the results of \cite{BB17} and prove a general result concerning the convergence to a homogeneous Poisson process using only classical probabilistic tools.
\end{abstract}

\section{Introduction}

Motivated by applications in Quantum Mechanics Bauer and Bernard investigated in the recent contribution \cite{BB17} scaling limits   $\lambda\rightarrow \infty$ and $\varepsilon\rightarrow 0$  for classes of stochastic differential equations of the form
\begin{align} \label{e:mainSDE}
dX_t=\frac{\lambda^2}{2}(\varepsilon \cdot b_1(X_t)-b_2(X_t))\,dt+\lambda \cdot \sigma(X_t)\,dB_t.
\end{align} 
More precisely, in case of constant $b_1>0$ and linear $b_2$ and $\sigma$, i.e. for stochastic differential equations of the form 
\begin{align} \label{e:BB-SDE}
dX_t=\frac{\lambda^2}{2}(\varepsilon - b X_t)\,dt+\lambda \cdot X_t \,dB_t
\end{align}
Bauer and Bernard rigorously study the non-trivial scaling limit of the process $(X_t)_{t\geq 0}$ in the regime $\lambda \rightarrow \infty$ and $\varepsilon \rightarrow 0$ such that $\lambda^2\varepsilon^{b+1}$ is constant and conjecture 
the validity of similar assertions for a larger class of stochastic differential equations of the type \eqref{e:mainSDE}. In this scaling limit the first hitting time of a level $z$ for the diffusion \eqref{e:BB-SDE} started at $x<z$ converges in distribution to a mixture of a point mass in zero and an exponential distributed random variable. Related questions for a slightly different model have previously been physically motivated and then analyzed by Bauer, Bernard and Tilloy in \cite{BBT15} and \cite{BBT16}. Observe that the diffusion given by \eqref{e:BB-SDE} is scale invariant, a fact which allows specific arguments and simplifies several calculations. Bauer and Bernard in particular proved that in the scaling limit $\lambda \rightarrow \infty$ and $\varepsilon \rightarrow 0$ with $\lambda^2\varepsilon^{b+1} = J$ constant the first hitting time of a level $z$ with start from $x<z$ converges in distribution to a convex combination of a exponential distributed random variable and the trivial random variable which is constant equal to zero. Using this result the authors also deduce a Poisson approximation for the number of hits above the level $z$. The analytic approach of Bauer and Bernard allows to cover also certain types of stochastic differential equations which are different from \eqref{e:BB-SDE} but still share the property of scale invariance. Using non-rigorous arguments the authors of \cite{BB17} come to the conjecture that the results will carry over to a larger rather general class of stochastic differential equations and they provide certain natural but not always precisely formulated conditions, under which the results are expected to hold. Our main aim is to provide a different rather elementary approach to the results of Bauer and Bernard, which allows to prove analogous results for general classes of stochastic differential equations, which do not necessarily satisfy a form of scale invariance. In particular we can extend the results to 'linearized version' of the stochastic differential equation describing the homodyne detection of Rabi oscillations. The resulting stochastic differential equation has a clear quantum mechanical background which  is in more detail described in \cite{BB17}. As a fact we will mainly rely on classical methods from probability theory such as Poisson approximation and some further mainly basic properties of diffusion processes. This is in contrast to the tools used by Bauer and Bernard which are analytic i.e. based on analysis of differential equations and basic It\^o theory for diffusions. Apart from extending the validity of the results to a larger class of stochastic differential equations we believe that our approach helps to put the results in a clear probabilistic perspective.

Let us stress that the results are related to known assertions about hitting times of large levels for diffusion processes such as e.g. \cite{M68} and \cite{BK98}. There the authors consider the behavior of hitting times of a high level and deduce that in an appropriate scaling limit this hitting time is exponentially distributed. We want to stress, that in the case of a non scale-invariant diffusion it does not seem possible to directly use known theorems concerning the extreme value behavior of hitting of large sets as given e.g. in \cite{M68} and \cite{BK98}. In the case of equation~\eqref{e:mainSDE} it is possible to connect the hitting of a fixed level $z$ when started from $\varepsilon$ into the question of hitting $z/\varepsilon$ with start in $1$. For this situation one can make direct use of the results in \cite{BK98} and of paragraph \,2, section \Romannum{5} in \cite{M68}. For start in a fixed point $x$ and for more non scale-invariant equations this does not seem possible. In any case due to the connections to the theory of quantum systems under continuous measurement we believe that our results and methods - which might not be that well known in the physics community - are of sufficiently broad interest and are useful in order to derive results for the most interesting higher dimensional situation.

The structure of the paper is the following: In Section 2 we introduce some essential notations and formulate the abstract version of our results, which are rigorously proved in section 3. This abstract result is based on a cycle decomposition of the diffusion and the usual renewal analysis of the associated renewal process. In section 4 we work through two classes of examples. The first class of stochastic differential equations we are dealing with are in some sense perturbation of equation~\eqref{e:BB-SDE}, which are still not covered by the results of Bauer and Bernard. The second fundamental example is deduced from the mathematical description of a 'linearized' version homodyne detection of Rabi oscillations.  

\section{Scaling limits of hitting times}
Let us give some basic definitions and notations. For $x>0$, we denote as $\P_x$ the probability measure for the diffusion process conditioned to start at $x$ and write $\E_x$ for the corresponding expectation. The hitting time for the process $(X_t)_{t\ge 0}$ of some level $z>0$ will be denoted as
\[T_z = \inf\lbrace t\ge 0\mid X_t = z\rbrace.\] 
We require the existence of two functions $0<\alpha(\ve) < \beta(\ve)$ for small $\ve>0$ which are differentiable in $0$ with $\lim_{\ve\downarrow 0} \beta(\ve) = \lim_{\ve\downarrow 0} \alpha(\ve)=0$. Let $(X_t^1)_{t\ge 0}$ denote the process $(X_t)_{t\geq 0}$ with $\lambda=1$ and $(\tX^1_t)_{t\ge 0}$ the process obtained from $(X_t^1)_{t\ge 0}$ by conditioning on $\lbrace T_{\alpha(\ve)} < T_z \rbrace$ via a $h$\=/transform in the sense of Doob (see e.g. \cite{P95}, chapter 4, section 1). We introduce the following quantities:
\begin{align*}
	\widetilde{\sigma}_0=0,\,\widetilde{\tau}_1=\inf\lbrace t\geq 0\mid X_t^1=\beta(\ve)\rbrace,\,\widetilde{\sigma}_1=\inf\lbrace t\geq \widetilde{\tau}_1\mid \tX_t^1=\alpha(\ve)\rbrace;
\end{align*} furthermore, for $i\ge 2$:
\begin{align*}
\widetilde{\tau}_i=\inf\lbrace t\geq \widetilde{\sigma}_{i-1}\mid X_t^1=\beta(\ve)\rbrace,\,\widetilde{\sigma}_i=\inf\lbrace t\geq \widetilde{\tau}_{i}\mid \tX_t^1=\alpha(\ve)\rbrace.
\end{align*}
\begin{figure}
\includegraphics{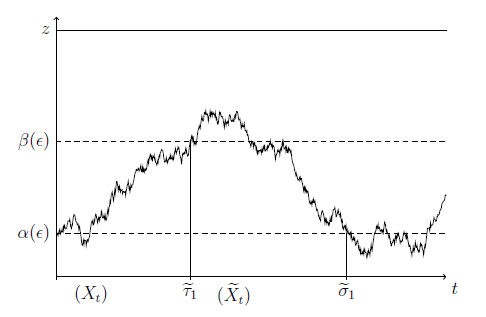}
\caption{Illustration of the cycle decomposition given by the stopping times $\widetilde{\tau}_i$ and $\widetilde{\sigma}_i$}
\end{figure}
Starting at $\alpha(\ve)$ we run the process $(X_t)_t$ until it hits $\beta(\ve)$ (observe that the conditioning event $\lbrace T_{\alpha(\ve)} < T_z \rbrace$ has full probability as we start in $\alpha(\ve)$), then we run the conditioned process $(\tX_t)_{t\geq 0}$ starting in $\beta(\ve)$ until we hit $\alpha(\ve)$. 

When started at $\beta(\ve)$, the probability to hit $z$ before $\alpha(\ve)$ will be denoted as 
\[p_{\ve,z} := \P_{\beta(\ve)}(T_z < T_{\alpha(\ve)}).\] 
If a cycle means a piece of the diffusion path starting at $\alpha(\ve)$, moving to $\beta(\ve)$ and then returning to $\alpha(\ve)$ then $p_{\ve,z}$ describes the probability that the cycle was completed without hitting $z$. 

By (generalized) scaling limit we will mean the limiting process as $\lambda\to\infty$ and $\ve\downarrow 0$ along the curve $\lambda^2 p_{\ve,z} = J >0$. Especially, for the generalized scaling limit to be defined, it is required, that $\ve\downarrow 0$ implies $p_{\ve,z}\to 0$. Let us now employ the following standing assumptions on the considered stochastic differential equation~(\ref{e:mainSDE}): \begin{enumerate}[label=(A\theenumi)] 
	\item There exists a (weak) solution to the SDE~(\ref{e:mainSDE}) in the sense of Definition~25.1 in \cite{B11}, which is unique in law.
	\item The expected cycle length converges to some positive real number independent of $z$: \[\E_{\alpha(\ve)}[\widetilde{\sigma}_1] \xrightarrow[\ve\downarrow 0]{} \kappa^{-1} \in (0,\infty). \]
	\item For small $\ve>0$, the cycles have finite second moment uniformly in $\ve$: \[\limsup_{\ve\downarrow 0} \E_{\alpha(\ve)} [\widetilde{\sigma}_1^2] < \infty. \]
\end{enumerate}
\begin{remark}
	Technically, (A3) may be weakened by $\limsup_{\ve\downarrow 0} \E_{\alpha(\ve)}[\widetilde\sigma_1^{1+\rho}] < \infty$ for some positive $\rho>0$. That is, only $(1+\rho)$-th moment is actually needed. The conditions (A1), (A2) and (A3) are rather natural and not too restrictive. 
\end{remark}

With the help of a regeneration structure based on cycle decompositions we will show
\begin{proposition} \label{p:mainProp}
	Assume that (A1), (A2) and (A3) are satisfied, then in the scaling limit $\lambda \rightarrow \infty$, $\varepsilon \rightarrow 0$ with $\lambda^2p_{\varepsilon,z}=J   \in (0,\infty)$ we have 
	\begin{displaymath}
	\lim_{scaling}\P_{\alpha(\ve)}\bigl(T_z>T)= e^{-\kappa J T}.
	\end{displaymath}
\end{proposition}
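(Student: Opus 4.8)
The plan is to reduce the statement to a renewal‐type limit theorem for a geometric random sum. The first step is the elementary deterministic time change: since in \eqref{e:mainSDE} the drift carries the factor $\lambda^2$ and the diffusion coefficient the factor $\lambda$, one has $(X_t)_{t\ge0}\overset{d}{=}(X^1_{\lambda^2 t})_{t\ge0}$ when both are started at the same point, hence $T_z$ for the $\lambda$‑process equals $\lambda^{-2}T^1_z$ in law, where $T^1_z$ denotes the hitting time of $z$ for $(X^1_t)_{t\ge0}$. Along the scaling curve $\lambda^2=J/p_{\ve,z}$, so it suffices to show
\[
\P_{\alpha(\ve)}\bigl(p_{\ve,z}\,T^1_z>JT\bigr)\xrightarrow[\ve\downarrow0]{}e^{-\kappa JT}.
\]

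The second step is to make the regeneration structure precise. Running $(X^1_t)_{t\ge 0}$ from $\alpha(\ve)$ and applying the strong Markov property at the successive first hitting times of $\beta(\ve)$ — noting that the up‑leg from $\alpha(\ve)$ to $\beta(\ve)$ cannot meet $z$, so that the event ``the current cycle returns to $\alpha(\ve)$ before hitting $z$'' depends only on the path after $\beta(\ve)$ and has probability $1-p_{\ve,z}$ — together with Doob's $h$‑transform description of the conditioned down‑leg, shows that the cycles are independent, each one is ``complete'' with probability $1-p_{\ve,z}$ and then has the law of $\widetilde\sigma_1$ under $\P_{\alpha(\ve)}$, while the first ``escaping'' cycle has the law of an excursion reaching $z$ conditioned on $\{T_z<T_{\alpha(\ve)}\}$ after $\beta(\ve)$. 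Writing $C_1,C_2,\dots$ for the i.i.d.\ complete‑cycle lengths, $S_n=C_1+\dots+C_n$, $N$ for the (geometric, parameter $p_{\ve,z}$) number of complete cycles before the first escape, and $R$ for the length of the escaping cycle, one obtains $T^1_z\overset{d}{=}S_N+R$ with $N$ independent of $(C_i)_{i\ge1}$ and $R$ independent of $\bigl(N,(C_i)_{i\ge1}\bigr)$; by construction $\E_{\alpha(\ve)}[C_1]=\E_{\alpha(\ve)}[\widetilde\sigma_1]$ and $\E_{\alpha(\ve)}[C_1^2]=\E_{\alpha(\ve)}[\widetilde\sigma_1^2]$.

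The third step is the asymptotics. Geometric‑to‑exponential convergence gives $p_{\ve,z}N\Rightarrow\Exp(1)$. By (A2), $\mu_\ve:=\E_{\alpha(\ve)}[\widetilde\sigma_1]\to\kappa^{-1}$, and by (A3) the variances $\mathrm{Var}(C_1)$ are bounded uniformly in $\ve$; therefore $\E\bigl[(p_{\ve,z}(S_N-N\mu_\ve))^2\bigr]=p_{\ve,z}^{2}\,\E[N]\,\mathrm{Var}(C_1)=p_{\ve,z}(1-p_{\ve,z})\,\mathrm{Var}(C_1)\to0$, so that $p_{\ve,z}S_N=(p_{\ve,z}N)\,\mu_\ve+o_{\P}(1)$ and Slutsky's theorem yields $p_{\ve,z}S_N\Rightarrow\Exp(\kappa)$. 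Since also $p_{\ve,z}R\to0$ in probability, $p_{\ve,z}T^1_z\Rightarrow\Exp(\kappa)$, and evaluating the limiting tail at $JT$ gives $\P_{\alpha(\ve)}(p_{\ve,z}T^1_z>JT)\to e^{-\kappa JT}$, as required.

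The main obstacle is the careful justification of the regeneration skeleton together with the negligibility of the residual term. Establishing that the $h$‑transformed ``complete'' cycles really are i.i.d.\ with the law of $\widetilde\sigma_1$ and independent of the geometric count $N$ requires combining the strong Markov property with the Doob‑transform bookkeeping outlined above; this uses (A1) to make ``the diffusion'' and its conditioning well defined, and (A3) to guarantee that the relevant hitting times have finite moments. The estimate $p_{\ve,z}R\to0$ is not literally one of (A1)--(A3): its up‑leg part has bounded mean by the previous remark, and for the part from $\beta(\ve)$ one has $p_{\ve,z}\,\E_{\beta(\ve)}[T_z\mid T_z<T_{\alpha(\ve)}]\le\E_{\beta(\ve)}[T_z\wedge T_{\alpha(\ve)}]$, so what is needed is a uniform‑in‑$\ve$ bound on the expected exit time of $(\alpha(\ve),z)$ from $\beta(\ve)$ (equivalently, on the conditioned escaping time); I expect this to be the most delicate technical point, to be handled by a comparison with the cycle‑length moments from (A3). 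The remaining ingredients — the time change, the classical geometric limit, and the $L^2$ (or, under the weakened hypothesis of the Remark, $L^{1+\rho}$ via a von Bahr--Esseen bound) estimate for the random sum — are routine once this structure is in place.
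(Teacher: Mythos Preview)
Your geometric–random–sum route is natural and the main line (time change, i.i.d.\ $h$-transformed cycles with geometric count $N$, $p_{\ve,z}N\Rightarrow\Exp(1)$, the $L^2$ centering giving $p_{\ve,z}S_N\Rightarrow\Exp(\kappa)$) is correct. However, the paper takes a genuinely different path that avoids the one point you flag as ``the most delicate'': the residual $R$.

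Instead of writing $T_z^1=S_N+R$ and controlling $R$, the paper sandwiches the \emph{event} $\{T_z>T\}=\{T^1_z>\lambda^2 T\}$ between the two ``complete-cycle'' events
\[
\bigl\{\text{none of the first }N(\lambda^2 T)\text{ cycles hit }z\bigr\}
\quad\text{and}\quad
\bigl\{\text{none of the first }N(\lambda^2 T)+1\text{ cycles hit }z\bigr\},
\]
where $N(t)$ is the \emph{unconditioned} cycle count. The key identity (their equation~\eqref{e:NandNTilde}) is that, after dividing by $(1-p_{\ve,z})^k$, the probability $\P_{\alpha(\ve)}\bigl(N(\lambda^2 T)=k,\ \text{cycles }1,\dots,k\ \text{miss }z\bigr)$ equals $\P_{\alpha(\ve)}\bigl(\widetilde N(\lambda^2 T)=k\bigr)$ for the $h$-transformed renewal process $\widetilde N$. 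Summing against $(1-p_{\ve,z})^k$ turns each bound into $\P(\widetilde{\mathfrak N}=0)$ for an \emph{independent} Bernoulli thinning of $\widetilde N$, and a total-variation Poisson bound (Yannaros) together with the uniform renewal theorem of Lai (using (A2) and (A3)) shows $p_{\ve,z}\widetilde N(\lambda^2 T)\to\kappa J T$ in $L^1$. The lower bound uses $\widetilde N(\lambda^2 T)+1$ in place of $\widetilde N(\lambda^2 T)$ and is handled identically. No information about the escaping excursion from $\beta(\ve)$ to $z$ is ever needed.

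This matters because your proposed control of $R$ is not available from (A1)--(A3). Your inequality $p_{\ve,z}\,\E_{\beta(\ve)}[T_z\mid T_z<T_{\alpha(\ve)}]\le\E_{\beta(\ve)}[T_z\wedge T_{\alpha(\ve)}]$ is correct, but the right-hand side is \emph{not} one of the hypotheses: (A2) and (A3) concern only the $h$-transformed down-leg $\E_{\beta(\ve)}[T_{\alpha(\ve)}\mid T_{\alpha(\ve)}<T_z]$, and the decomposition
\[
\E_{\beta(\ve)}[T_z\wedge T_{\alpha(\ve)}]
= p_{\ve,z}\,\E_{\beta(\ve)}[T_z\mid T_z<T_{\alpha(\ve)}]
+(1-p_{\ve,z})\,\E_{\beta(\ve)}[T_{\alpha(\ve)}\mid T_{\alpha(\ve)}<T_z]
\]
shows that bounding the left side is equivalent to the very quantity you wish to bound. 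So the ``comparison with cycle-length moments from (A3)'' you anticipate does not close the gap without an extra assumption (in the examples it is of course true, via the Green-kernel computations, but at the abstract level it is not derivable). The paper's sandwich is precisely the trick that removes the escaping cycle from the problem; if you want to keep your random-sum picture, the cleanest fix is to replace ``$p_{\ve,z}R\to0$'' by the two-sided bound $S_N\le T_z^1$ and $\{T^1_z>\lambda^2 T\}\supset\{N\ge N(\lambda^2 T)+1\}$ and then translate the latter back to the $\widetilde N$-process as the paper does.
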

This result gives the almost exponential behavior of the hitting of a fixed level $z$, when started very close to zero. In order to deduce the result when started from a fixed level $0<x<z$ we assume the following conditions: 
\begin{enumerate}[label=(B\theenumi)]
	\item In the generalized scaling limit for any $z>0$ and $0<x<z$, under $\P_x$, \[T_{\alpha(\ve)} \wedge T_z \xrightarrow[scaling]{\mathcal{D}} 0, \]
	      and the law of $T_{\alpha(\ve)}$ under $\P_x(\cdot \mid T_{\alpha(\ve)} < T_z)$ converges to the point mass in zero for any $z>0$.
	\item Furthermore, the limit \[\P_x (T_{\alpha(\ve)} < T_z) \xrightarrow[\ve\to 0]{} \alpha_{x,z} \in (0,1) \]
		exists for all $z>0$, $0<x<z$.
\end{enumerate} 
\begin{remark}
	Our assumptions (A1) -- (A3) and (B1) -- (B2) are natural and related but not fully comparable to the conditions formulated by Bauer and Bernard. We point at some similarities. Condition \romannum{9}) in \cite{BB17} essentially corresponds to (A3) and the assumption $p_{\ve,z}\to 0$ is related to \romannum{2}). \romannum{1}) is encoded in the example below as (E2) and (E3).
\end{remark}
We are now ready to state our main result. 
\begin{theorem}\label{t:mainThm}
Assume that the conditions (A1) to (A3), (B1) and (B2) are satisfied, then in the scaling limit $\lambda \rightarrow \infty$, $\varepsilon \rightarrow 0$ with $\lambda^2p_{\varepsilon,z}=J   \in (0,\infty)$ the law of the hitting time $T_z$ when started at $0<x<z$ equals 
\[	
    (1-\alpha_{x,z}) \,\delta_0 + 
	\alpha_{x,z}  \Exp_{J\kappa}.
\]
The choice of the level one in $p_{\varepsilon,1}$ and in the definition of $q(z)$ respectively is of course rather arbitrary. 
\end{theorem}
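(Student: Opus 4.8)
The plan is to reduce the statement to Proposition~\ref{p:mainProp} by conditioning on the first visit of the level $\alpha(\ve)$ and then applying the strong Markov property. Fix $T>0$; for $\ve$ small one has $0<\alpha(\ve)<x<z$. First split
\begin{align*}
\P_x(T_z>T)\;=\;\P_x\bigl(T_z>T,\;T_{\alpha(\ve)}<T_z\bigr)\;+\;\P_x\bigl(T_z>T,\;T_z\le T_{\alpha(\ve)}\bigr).
\end{align*}
On $\{T_z\le T_{\alpha(\ve)}\}$ one has $T_{\alpha(\ve)}\wedge T_z=T_z$, so the second term is bounded by $\P_x(T_{\alpha(\ve)}\wedge T_z>T)$, which tends to $0$ by the first part of (B1) (recall $T>0$). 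For the first term the strong Markov property at the stopping time $T_{\alpha(\ve)}$ applies: on $\{T_{\alpha(\ve)}<T_z\}$ one has $X_{T_{\alpha(\ve)}}=\alpha(\ve)$, and conditionally on $\mathcal{F}_{T_{\alpha(\ve)}}$ the residual time $T_z-T_{\alpha(\ve)}$ has the law of $T_z$ under $\P_{\alpha(\ve)}$, whence
\begin{align*}
\P_x\bigl(T_z>T,\;T_{\alpha(\ve)}<T_z\bigr)\;=\;\E_x\Bigl[\ind_{\{T_{\alpha(\ve)}<T_z\}}\,g_\ve\bigl(T_{\alpha(\ve)}\bigr)\Bigr],\qquad g_\ve(s):=\P_{\alpha(\ve)}(T_z>T-s),
\end{align*}
with the convention $g_\ve(s):=1$ for $s\ge T$.

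The delicate point is that $g_\ve$ and the random time $T_{\alpha(\ve)}$ both depend on $\ve$, so one cannot pass to the limit naively inside $g_\ve(T_{\alpha(\ve)})$. I would resolve this by upgrading Proposition~\ref{p:mainProp} to a uniform statement: the distribution functions $u\mapsto\P_{\alpha(\ve)}(T_z\le u)$ converge pointwise on $\R$ to the continuous, proper distribution function $u\mapsto(1-e^{-\kappa J u})\,\ind_{\{u\ge 0\}}$, hence by the classical fact that pointwise convergence of distribution functions to a continuous limit is uniform (P\'olya's theorem) the convergence is uniform on $\R$. Putting $g_\infty(s):=e^{-\kappa J(T-s)}$ for $s\le T$ and $g_\infty(s):=1$ for $s>T$ --- a bounded continuous function with $g_\infty(0)=e^{-\kappa JT}$ --- it follows that $\sup_{s\in\R}|g_\ve(s)-g_\infty(s)|\to 0$. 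Since $\P_x(T_{\alpha(\ve)}<T_z)\to\alpha_{x,z}>0$ by (B2), the conditioning below is well defined for small $\ve$, and
\begin{align*}
\bigl|\E_x[g_\ve(T_{\alpha(\ve)})\mid T_{\alpha(\ve)}<T_z]-g_\infty(0)\bigr|\;\le\;\sup_{s\in\R}|g_\ve-g_\infty|\;+\;\bigl|\E_x[g_\infty(T_{\alpha(\ve)})\mid T_{\alpha(\ve)}<T_z]-g_\infty(0)\bigr|.
\end{align*}
The first summand vanishes in the limit by the uniform convergence just obtained; the second vanishes because, by the second part of (B1), $T_{\alpha(\ve)}\to 0$ in distribution under $\P_x(\cdot\mid T_{\alpha(\ve)}<T_z)$ while $g_\infty$ is bounded and continuous. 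Multiplying by $\P_x(T_{\alpha(\ve)}<T_z)\to\alpha_{x,z}$ and recalling that the second term above tends to $0$, we conclude $\P_x(T_z>T)\to\alpha_{x,z}\,e^{-\kappa JT}$ for every $T>0$.

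Finally, $t\mapsto(1-\alpha_{x,z}e^{-\kappa Jt})\,\ind_{\{t\ge 0\}}$ is exactly the distribution function of $(1-\alpha_{x,z})\delta_0+\alpha_{x,z}\Exp_{J\kappa}$, whose unique discontinuity is at $t=0$; since $\P_x(T_z\le t)=1-\P_x(T_z>t)$ has just been shown to converge to this function at every $t>0$ (and trivially at $t<0$), the law of $T_z$ under $\P_x$ converges weakly, in the scaling limit, to the asserted mixture. The main obstacle is precisely the random time change appearing in the first term: once Proposition~\ref{p:mainProp} is strengthened to uniform convergence via P\'olya's theorem, the remaining steps are routine, the genuine probabilistic content sitting in (A1)--(A3) (through Proposition~\ref{p:mainProp}) and in the degeneracy of $T_{\alpha(\ve)}$ encoded by (B1).
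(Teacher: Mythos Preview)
Your argument is correct and follows the same overall strategy as the paper: decompose according to whether the process hits $\alpha(\ve)$ before $z$, discard the complementary part via (B1), and reduce the main part to Proposition~\ref{p:mainProp} through the strong Markov property together with (B1) and (B2).

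The difference lies in how the joint limit ``$g_\ve(T_{\alpha(\ve)})$ with both $g_\ve$ and $T_{\alpha(\ve)}$ moving'' is handled. The paper conditions on $\{T_{\alpha(\ve)}<T_z\}$, rewrites the conditional probability as the distribution function of the convolution $\bigl(\P_x\circ(\tT_{\alpha(\ve)})^{-1}\bigr)\ast\bigl(\P_{\alpha(\ve)}\circ T_z^{-1}\bigr)$, and then passes to the limit via characteristic functions, using that $\E_x[e^{is\tT_{\alpha(\ve)}}]\to 1$ and $\E_{\alpha(\ve)}[e^{isT_z}]$ converges by Proposition~\ref{p:mainProp}. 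You instead upgrade Proposition~\ref{p:mainProp} to uniform convergence of the distribution functions via P\'olya's theorem (legitimate, since the limiting exponential law has a continuous distribution function), and then combine this with the weak convergence $T_{\alpha(\ve)}\to 0$ under the conditional law directly. Both devices are standard and of comparable strength; your route avoids the detour through characteristic functions and is arguably a touch more elementary, while the paper's convolution formulation makes the independence structure after the strong Markov step more explicit. Your initial decomposition (splitting by which level is hit first) also differs slightly from the paper's (splitting by whether $T_{\alpha(\ve)}\wedge T_z\le T$), but this is cosmetic.
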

\begin{remark}
	In the special case of equation~\eqref{e:BB-SDE} this result corresponds to Corollary~3 in \cite{BB17}.
\end{remark}
Theorem~\ref{t:mainThm} can be interpreted in the following way, which has also been observed in \cite{BB17}. If the diffusion process starts at the point $x$ and wants to reach level $z$ then there are two options: Either the process reaches level $z$ without coming close to zero and in the scaling limit this takes no time or it first reaches a neighborhood of zero. Once it has reached the neighborhood of $0$ it needs many trials to get up to level $z$ and each trial has low success probability (see e.g. \cite{B90}). The latter follows from the form of the stochastic differential equation; the drift is weak near zero and the diffusion is slowed down near zero. The proof of this result will exactly follow this picture and we will make this rigorous in the following section. \\

The asymptotic of $p_{\ve,z} \to 0$, as $\ve\downarrow 0$ may depend on the value of $z$. We want to consider the $z$-free scaling limit. For $z\in(0,\infty)$ define \[
q(z) := \begin{cases}
1-\alpha_{z,1}, & z\le 1,\\
(1-\alpha_{1,z})^{-1} ,& z>1.
\end{cases}
\]

\begin{corollary}
	Assume that all conditions (A1) to (A3), (B1) and (B2) are satisfied, then in the scaling limit $\lambda \rightarrow \infty$, $\varepsilon \rightarrow 0$ with $\lambda^2p_{\varepsilon,1}=J   \in (0,\infty)$ the law of the hitting time $T_z$ when started at $0<x<z$ equals 
	\[	
	(1-\alpha_{x,z}) \,\delta_0 + 
	\alpha_{x,z}  \Exp_{\kappa J/q(z)}.
	\]
\end{corollary}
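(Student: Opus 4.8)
\emph{Proof idea.} The corollary differs from Theorem~\ref{t:mainThm} only in that the scaling curve is pinned through level $1$, that is $\lambda^2 p_{\ve,1}=J$, rather than through $z$. The plan is to transfer Theorem~\ref{t:mainThm} (equivalently Proposition~\ref{p:mainProp}) to this curve, the point being that along $\lambda^2 p_{\ve,1}=J$ one has
\[
\lambda^2 p_{\ve,z}=J\,\frac{p_{\ve,z}}{p_{\ve,1}}\xrightarrow[\ve\downarrow 0]{}\frac{J}{q(z)} .
\]
Hence, once the limit of the ratio $p_{\ve,z}/p_{\ve,1}$ has been identified, the argument behind Theorem~\ref{t:mainThm} runs with $J$ replaced by $J/q(z)$ and yields the exponential rate $\kappa J/q(z)$; the mixture weights $1-\alpha_{x,z}$ and $\alpha_{x,z}$ are produced exactly as in the proof of Theorem~\ref{t:mainThm}, by conditioning at $T_{\alpha(\ve)}\wedge T_z$ under $\P_x$ and invoking (B1) and (B2), and since those weights do not involve the scaling constant nothing changes there. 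So the two things to check are (i) the ratio asymptotics, and (ii) that Proposition~\ref{p:mainProp} survives replacing the exact relation $\lambda^2 p_{\ve,z}=c$ by mere convergence $\lambda^2 p_{\ve,z}\to c$.

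For (i) I would fix $z\neq 1$ and take $\ve$ small enough that $\alpha(\ve)<\beta(\ve)<\min\{1,z\}$, possible since $\alpha(\ve),\beta(\ve)\downarrow 0$. If $z>1$, then by path continuity the diffusion started at $\beta(\ve)$ must pass through $1$ before it can reach $z$, while on $\{T_{\alpha(\ve)}<T_1\}$ it hits $\alpha(\ve)$ before $z$; applying the strong Markov property at $T_1$ (where $X_{T_1}=1$) on $\{T_1<T_{\alpha(\ve)}\}$ gives
\[
p_{\ve,z}=\P_{\beta(\ve)}\bigl(T_1<T_{\alpha(\ve)}\bigr)\,\P_{1}\bigl(T_z<T_{\alpha(\ve)}\bigr)=p_{\ve,1}\,\P_{1}\bigl(T_z<T_{\alpha(\ve)}\bigr),
\]
so $p_{\ve,z}/p_{\ve,1}=1-\P_1(T_{\alpha(\ve)}<T_z)\to 1-\alpha_{1,z}=q(z)^{-1}$ by (B2). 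If $0<z<1$, the same decomposition with $z$ and $1$ interchanged gives $p_{\ve,1}=p_{\ve,z}\,\P_z(T_1<T_{\alpha(\ve)})$, hence $p_{\ve,z}/p_{\ve,1}=\bigl(1-\P_z(T_{\alpha(\ve)}<T_1)\bigr)^{-1}\to(1-\alpha_{z,1})^{-1}=q(z)^{-1}$, again by (B2); for $z=1$ the ratio is $1=q(1)^{-1}$ trivially.

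For (ii) I would use that, by uniqueness in law (A1) together with Brownian scaling, the process with diffusion parameter $\lambda$ is the time change $t\mapsto X^1_{\lambda^2 t}$ of the $\lambda=1$ process, so under $\P_{\alpha(\ve)}$ one has $T_z=\lambda^{-2}T^1_z$ and therefore, for any admissible parameters $\lambda,\mu$,
\[
\P^{(\lambda)}_{\alpha(\ve)}\bigl(T_z>T\bigr)=\P_{\alpha(\ve)}\bigl(T^1_z>\lambda^2T\bigr)=\P^{(\mu)}_{\alpha(\ve)}\Bigl(T_z>(\lambda/\mu)^2T\Bigr).
\]
Along the curve $\lambda^2 p_{\ve,1}=J$ set $c_\ve:=\lambda^2 p_{\ve,z}$, so $c_\ve\to\widetilde J:=J/q(z)$ by (i), and choose $\mu=\mu(\ve)$ with $\mu^2 p_{\ve,z}=\widetilde J$; this is itself an admissible scaling curve, as $p_{\ve,z}\downarrow 0$ forces $\mu\to\infty$. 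Then $(\lambda/\mu)^2=c_\ve/\widetilde J\to 1$, so $(\lambda/\mu)^2T\in[(1-\delta)T,(1+\delta)T]$ for $\ve$ small, whereas Proposition~\ref{p:mainProp} applied to $(\mu,\ve)$ gives $\P^{(\mu)}_{\alpha(\ve)}(T_z>s)\to e^{-\kappa\widetilde J s}$ for every fixed $s\ge 0$. Monotonicity of $s\mapsto\P^{(\mu)}_{\alpha(\ve)}(T_z>s)$, continuity of the limiting tail and letting $\delta\downarrow 0$ then give
\[
\P^{(\lambda)}_{\alpha(\ve)}\bigl(T_z>T\bigr)\xrightarrow[scaling]{}e^{-\kappa\widetilde J T}=e^{-\kappa J T/q(z)} ,
\]
and feeding this into the start-at-$x$ decomposition of Theorem~\ref{t:mainThm} (with (B1)--(B2) handling the $\delta_0$-part and its weight) delivers the claimed law $(1-\alpha_{x,z})\,\delta_0+\alpha_{x,z}\,\Exp_{\kappa J/q(z)}$.

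The step I expect to be most delicate is (ii), i.e.\ upgrading Proposition~\ref{p:mainProp} from an exact scaling identity to an asymptotic one; the sandwich works cleanly only because the limiting survival function $s\mapsto e^{-\kappa\widetilde J s}$ is continuous, and one must still check that the auxiliary curve $\mu(\ve)$ is genuinely admissible (this uses only $p_{\ve,z}\to 0$, which is built into the definition of the generalized scaling limit) and that the time-change identity $T_z=\lambda^{-2}T^1_z$ is available under $\P_{\alpha(\ve)}$ in the required generality. The remaining points — almost sure finiteness of $T_{\alpha(\ve)}\wedge T_z$, so that the two alternatives in (i) exhaust the sample space, and the convention $q(1)=1$ — are immediate from the standing assumptions.
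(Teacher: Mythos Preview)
Your proof is correct and follows essentially the same route as the paper: both identify $p_{\ve,z}/p_{\ve,1}\to q(z)^{-1}$ via the strong Markov property at the intermediate level together with (B2), then use the time-change $T_z=\lambda^{-2}T_z^1$ and a sandwich argument exploiting continuity of the exponential tail. The only cosmetic difference is that the paper invokes Theorem~\ref{t:mainThm} directly as a black box (so the $\P_x$-decomposition is already built in) and sandwiches in the threshold variable $T(q^{-1}(z)\pm\delta)$, whereas you first upgrade Proposition~\ref{p:mainProp} to the auxiliary curve $\mu^2p_{\ve,z}=J/q(z)$ and then rerun the start-at-$x$ argument; these are equivalent.
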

\begin{proof}
	\begin{align*}
		\lim\limits_{scaling} \P_x(T_z>T) &= \lim\limits_{scaling} \P_x(T_z(X^1) > \lambda^2T) \\
		&= \lim\limits_{\ve\downarrow 0} \P_x\left(T_z(X^1)\frac{p_{\ve,z}}{JT} > q^{-1}(z) + \left(\frac{p_{\ve,z}}{p_{\ve,1}}-q^{-1}(z)\right) \right)\\
		&= \lim\limits_{\substack{\lambda\to\infty\\\ve\to 0\\ \lambda^2p_{\ve,z} = J}} \P_x\left(\frac{T_z}{T}> q^{-1}(z) +    \left(\frac{p_{\ve,z}}{p_{\ve,1}}-q^{-1}(z)\right) \right).
	\end{align*} 
Below we will show, that the convergence 
\begin{align}\label{e:claimLim}
\lim\limits_{\ve\downarrow 0}\frac{p_{\ve,z}}{p_{\ve,1}} = \frac{1}{q(z)} 
\end{align} 
holds true. Using Theorem~\ref{t:mainThm} this shows 
  \[
	\limsup_{scaling} \P_x(T_z>T) \le \alpha_{x,z} e^{-\kappa J T (q^{-1}(z)-\delta)}
	\]  
	and 
	\[
	\liminf_{scaling} \P_x(T_z>T) \ge \alpha_{x,z} e^{-\kappa J T (q^{-1}(z)+\delta)}
	\] 
	for $\delta>0$ arbitrary, hence implying the assertion.
	
It remains to prove the claim~\eqref{e:claimLim}. By the strong Markov property 
\[
	p_{\ve,1\vee z} = p_{\ve, 1\wedge z} \cdot \P_{1\wedge z}(T_{1\vee z} < T_{\alpha(\ve)}).
\] 
It follows from condition (B2) 
\[
\frac{p_{\ve,z}}{p_{\ve,1}} = 
\begin{cases}
	(1-\P_z(T_{\alpha(\ve)} < T_1))^{-1}, & z\le 1,\\
	1-\P_1(T_{\alpha(\ve)} < T_z) ,& z>1
\end{cases} 
	\xrightarrow[\ve\downarrow 0]{} q^{-1}(z). 
\]
This gives the required assertion. 
\end{proof}
\begin{remark}
In both examples worked out below the scaling limit relation $\lambda^2 p_{\varepsilon,1} = const$ is essentially (meaning up to some arbitrary positive multiplicative constant) equivalent to choosing the curve $\lambda^2 Z_{\ve}=const$ which is used \cite{BB17} in order to formulate the general conjecture. There, \[Z_\ve := \int_0^\infty \frac{1}{x^4}\exp\left( -\frac{\ve}{3} \frac{1}{x^3}  + \frac{b}{2}\frac{1}{x^2} \right)\, dx.\] denotes the total mass of some invariant measure, cf. condition \romannum{7}) in section 6.1 main conjectures in \cite{BB17}. Also, $q(z)$ in that article is the same as our $q(z)$ here if the limit in (B2) has the form as in the examples. Note, that our main result corresponds to Conjecture B (\romannum{1}) and (\romannum{2}).
\end{remark}
 
\section{An embedded approximate Poisson process}

In \cite{BB17}, the distribution of the first hitting time $T_z$ is deduced by calculating the Laplace transform of $T_z$, i.e. the expectation
\begin{displaymath}
\E_x\bigl[e^{-s T_z}\bigr],\quad s >0,\,0<x<z
\end{displaymath}
making use of the fact that they solve certain ordinary differential equations.  Our approach has a somewhat different more probabilistic flavor. We are using the following rather classical strategy:
\begin{itemize}
	\item Starting the diffusion near zero, we introduce stopping times, which decompose the path up to an arbitrary time $T$ into cycles.
	\item During every cycle, the diffusion reaches with a small probability the level $z$.
	\item Counting only the hits of level $z$ now up to a time $\lambda^2T$ results in an approximate Poisson process. 
\end{itemize}
As mentioned above we call cycle a path from ${\alpha(\ve)}$ to ${\beta(\ve)}$ and back to ${\alpha(\ve)}$ when $\lambda$ is set to equal $1$. If we now speed up the time scale which is done by introducing the large time scale factor $\lambda$ we have many cycles in a time interval $[0,T]$ and in each cycle we hit the level $z$ with small probability. This is the standard situation, where the Poisson heuristic should apply. 

\subsection{A Thinned Renewal Process}
In order to motivate our approach we define the counting variable
\begin{displaymath}
{N}(T)=\max \lbrace i \in \mathbb{N}_0 \mid  {\sigma}_i \leq T\rbrace,
\end{displaymath} where \begin{equation*}
{\sigma}_0=0,\,{\tau}_1=\inf\lbrace t\geq 0\mid X_t^1=\beta(\ve)\rbrace,\,{\sigma}_1=\inf\lbrace t\geq {\tau}_1\mid X_t^1={\alpha(\ve)}\rbrace,
\end{equation*}
\begin{equation*}
{\tau}_i=\inf\lbrace t\geq {\sigma}_{i-1}\mid X_t^1=\beta(\ve)\rbrace,\,{\sigma}_i=\inf\lbrace t\geq {\tau}_{i}\mid X_t^1={\alpha(\ve)}\rbrace.
\end{equation*}
The quantity ${N}(T)$ encodes the number of cycles completed up to time $T$ and we will use known results from renewal theory (see e.g. \cite{D05}, \cite{L76} and \cite{Y91}).  For given $z>0$ we are actually not interested in the number of completed cycles up to time $T$ but in the number of cycles up to time $T$, which do cross the level $z$. Thus we have to delete those cycles which do not cross the level $z$ and we observe that this happens with probability $1-p_{\varepsilon,z}$.\\
\\
As a first motivation we consider the thinned-rescaled point process obtained by retaining every point of ${N}$ with probability $p_{\ve,z}$ independently of the other points and of the point process ${N}(T)$, and then replacing the retained point at time instant $t_i$ by a point at $\lambda^{-2}\cdot t_i$. Let us denote this counting process by 
\begin{displaymath}
{\mathfrak{N}}_{p_{\ve,z},\lambda}(T),\quad T\geq 0.
\end{displaymath}
We observe that 
\begin{equation*}
\begin{split}
{\mathfrak{N}}_{p_{\ve,z},\lambda}(T) = \xi_1+\dots +\xi_{{N}(\lambda^2T)},
\end{split}
\end{equation*}
where the random variables $\xi_1,\xi_2,\dots$ are independent and identically distributed with $\mathrm{P}(\xi_i=1)=p_{\varepsilon,z},\,\mathrm{P}(\xi_i=0)=1-p_{\varepsilon,z}$ and independent of the process $({N}(t))_{t\geq 0}$. This thinned counting process in fact converges to a Poisson process as can be deduced using standard results in the literature. Obviously, the independent thinning does not precisely describe what we are really interested in. 

\subsection{Poisson Limits in the high noise regime}
We thus consider the probability, that none of the cycles of our original process up to time $\lambda^2T$ has reached level $z$, i.e. we investigate
\begin{align*}
&\sum_{k=0}^{\infty}\P_{\alpha(\ve)}\left({N}(\lambda^2T)=k,\forall 1\leq i\leq k:\sup_{\tau_{i}\leq t\leq \sigma_i}X_t^1 < z \right)\\
=&\sum_{k=0}^{\infty} \Bigl[ \P_{\beta(\ve)}\bigl(T_{\alpha(\ve)}<T_z\bigr)^k  \\
&\qquad \times \P_{\alpha(\ve)}\Bigl({N}(\lambda^2T)=k, \forall 1\leq i\leq k:\sup_{\tau_{i}\leq t\leq \sigma_i}X_t^1 < z\Bigr)  \, \P_{\beta(\ve)}\bigl(T_{\alpha(\ve)}<T_z\bigr)^{-k} \Bigr].
\end{align*}
Let us observe that using results on the relation between conditioning and $h$\=/transforms we have for $k\ge 1$
\begin{align}\begin{split}\label{e:NandNTilde}
&\P_{\alpha(\ve)}\biggl({N}(\lambda^2T)=k,\forall 1\leq i\leq k: \sup_{\tau_{i}\leq t\leq \sigma_i}X_t^1 < z\biggr) \, \P_{\beta(\ve)}\bigl(T_{\alpha(\ve)}<T_z\bigr)^{-k}\\
&=\P_{\alpha(\ve)}\bigl(\widetilde{N}(\lambda^2T)=k\bigr),
\end{split}\end{align}
where the process $(\widetilde{N}(T))_{T\geq 0}$ is the counting process $\widetilde{N}(T)=\max\lbrace n\mid \widetilde{\sigma}_n<T\rbrace$. We need to stress that the involved quantities depend on $\varepsilon$ even though the notation does not make this explicit. 

\begin{proposition} \label{p:LimTillCycle}
Under (A1) to (A3) to hold, we have 
\[
\lim_{scaling}\sum_{k=0}^{\infty}\P_{\alpha(\ve)}\biggl({N}(\lambda^2T)=k,\forall 1\leq i\leq k : \sup_{\tau_{i}\leq t\leq \sigma_i}X_t^1 < z\biggr)= e^{-\kappa JT}.
\]
\end{proposition}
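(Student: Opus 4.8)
The plan is to recognise the displayed sum, after the change of measure in \eqref{e:NandNTilde}, as the tail probability of a geometric sum of i.i.d.\ cycle lengths, and then to show that this rescaled geometric sum converges in law to an exponential variable of parameter $\kappa$.

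Write $p:=p_{\ve,z}$. First I would apply \eqref{e:NandNTilde}, which holds for every $k\ge1$, together with $\P_{\beta(\ve)}(T_{\alpha(\ve)}<T_z)=1-p$, to identify the $k$-th summand (for $k\ge1$) with $(1-p)^k\,\P_{\alpha(\ve)}\bigl(\widetilde{N}(\lambda^2T)=k\bigr)$. Separating the $k=0$ term, the sum then equals
\[
\P_{\alpha(\ve)}\bigl(N(\lambda^2T)=0\bigr)-\P_{\alpha(\ve)}\bigl(\widetilde{N}(\lambda^2T)=0\bigr)+\E_{\alpha(\ve)}\Bigl[(1-p)^{\widetilde{N}(\lambda^2T)}\Bigr].
\]
The first two terms vanish in the scaling limit. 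Indeed, $\P_{\alpha(\ve)}(\widetilde{N}(\lambda^2T)=0)=\P_{\alpha(\ve)}(\widetilde{\sigma}_1>\lambda^2T)\le\E_{\alpha(\ve)}[\widetilde{\sigma}_1^2]/(\lambda^2T)^2\to0$ by (A3) and $\lambda\to\infty$; and $\P_{\alpha(\ve)}(N(\lambda^2T)=0)=\P_{\alpha(\ve)}(\sigma_1>\lambda^2T)\to0$ follows by splitting one unconditioned cycle into an ascent $\alpha(\ve)\to\beta(\ve)$ --- of the same law as $\widetilde{\tau}_1\le\widetilde{\sigma}_1$, hence $\le\lambda^2T/2$ with probability tending to $1$ by (A2) and Markov's inequality --- and a descent, which with probability $1-p$ is the conditioned descent, of expected length $\le\E_{\alpha(\ve)}[\widetilde{\sigma}_1]$ by (A2), and with probability $p\to0$ overshoots $z$ (whose contribution is bounded crudely by $p$). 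Hence it suffices to prove $\E_{\alpha(\ve)}[(1-p)^{\widetilde{N}(\lambda^2T)}]\to e^{-\kappa JT}$.

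By (A1) and the strong Markov property the cycle lengths $W_i:=\widetilde{\sigma}_i-\widetilde{\sigma}_{i-1}$ are, under $\P_{\alpha(\ve)}$, i.i.d.\ (the $h$-transformed descent being a time-homogeneous diffusion started afresh at $\beta(\ve)$). Then $\E_{\alpha(\ve)}[(1-p)^{\widetilde{N}(\lambda^2T)}]$ is exactly the probability that, when every renewal $\widetilde{\sigma}_i$ is marked independently with probability $p$, no mark lies in $(0,\lambda^2T]$; equivalently, that the first marked renewal $\widetilde{\sigma}_G=\sum_{i=1}^{G}W_i$ --- with $G$ geometric, $\P(G=k)=(1-p)^{k-1}p$, independent of the $W_i$ --- exceeds $\lambda^2T$. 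Using $\lambda^2p=J$ this is $\P\bigl(p\sum_{i=1}^{G}W_i>JT\bigr)$. Finally, conditioning on $G$, the Laplace transform of $p\sum_{i=1}^{G}W_i$ at $u>0$ equals $p\,\phi_\ve(pu)\bigl(1-(1-p)\phi_\ve(pu)\bigr)^{-1}$ with $\phi_\ve(s):=\E_{\alpha(\ve)}[e^{-sW_1}]$; writing $\mu_\ve:=\E_{\alpha(\ve)}[W_1]$ and using the elementary bound $|\phi_\ve(s)-1+\mu_\ve s|\le\tfrac12\E_{\alpha(\ve)}[W_1^2]\,s^2$ together with $\mu_\ve\to\kappa^{-1}$ from (A2) and $\sup_\ve\E_{\alpha(\ve)}[W_1^2]<\infty$ from (A3), a short computation shows this Laplace transform converges to $\kappa/(\kappa+u)$, the Laplace transform of $\Exp_{\kappa}$. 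By the continuity theorem for Laplace transforms, $p\sum_{i=1}^{G}W_i\xrightarrow{\mathcal{D}}\Exp_{\kappa}$, and since the exponential law has a continuous distribution function this yields $\P\bigl(p\sum_{i=1}^{G}W_i>JT\bigr)\to e^{-\kappa JT}$, as claimed.

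The main obstacle is this last step. The law of $W_1$ varies with $\ve$ while at the same time $p=p_{\ve,z}\to0$, so the geometric-sum limit is a genuine triangular-array statement, and everything hinges on controlling the remainder in the expansion of $\phi_\ve$ uniformly in $\ve$ --- which is precisely what the uniform second-moment bound (A3) delivers, in conjunction with the convergence of the means (A2). A secondary, more elementary, point is the vanishing of $\P_{\alpha(\ve)}(N(\lambda^2T)=0)$ for the unconditioned counting process, which does not follow verbatim from (A1)--(A3) but needs the short cycle-decomposition estimate indicated above.
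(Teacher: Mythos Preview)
Your argument is correct, and it is a genuinely different route from the paper's. Both proofs begin by rewriting the sum as $\E_{\alpha(\ve)}\bigl[(1-p)^{\widetilde N(\lambda^2T)}\bigr]$ (the paper phrases this as $\mathrm{P}(\widetilde{\mathfrak N}_{p,\lambda}(T)=0)$ for a Bernoulli-thinned renewal process, which is the same thing), and both must handle the small discrepancy at $k=0$ between $N$ and $\widetilde N$. From there the paper invokes two external results: a total-variation Poisson approximation bound of Yannaros, reducing the problem to $\E_{\alpha(\ve)}\bigl[|p\,\widetilde N(\lambda^2T)-\kappa JT|\bigr]\to0$, and then a uniform renewal theorem of Lai (needing (A2) and (A3)) to obtain that $L^1$ convergence. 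You instead recognise $\E[(1-p)^{\widetilde N}]$ as the tail probability $\P\bigl(p\sum_{i\le G}W_i>JT\bigr)$ of a geometric compound sum and compute its Laplace transform directly; the triangular-array issue is then resolved by the elementary second-order bound on $\phi_\ve$, with (A3) supplying the uniform constant and (A2) the convergence of the mean. Your approach is more self-contained---no Poisson approximation inequality, no uniform renewal theorem---while the paper's has the conceptual advantage of exhibiting the full Poisson limit of the thinned process rather than just its zero cell, which is what underlies the ``embedded approximate Poisson process'' picture announced in the section heading. Your treatment of the $k=0$ discrepancy is also more careful than the paper's, which silently uses \eqref{e:NandNTilde} at $k=0$; your splitting of the unconditioned cycle into ascent and (conditioned descent vs.\ overshoot) is the right way to extract this from (A1)--(A3) alone.
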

\begin{proof}
Defining \[\widetilde{\mathfrak{N}}_{p_{\varepsilon,z},\lambda}(T):=\widetilde{\xi}_1+\dots +\widetilde{\xi}_{\widetilde{N}(\lambda^2T)}\] with $(\widetilde{\xi}_i)_{i\ge 1}$ being an independent family of Bernoulli distributed random variables with $\mathrm{P}(\widetilde{\xi}_1=1)= p_{\varepsilon,z}$ and independent of the counting process $\widetilde{N}(T)$ we notice using equation~(\ref{e:NandNTilde}) 
\begin{align*}
	&\sum_{k=0}^{\infty}\P_{\alpha(\ve)}\biggl({N}(\lambda^2T)=k,\forall 1\leq i\leq k : \sup_{\tau_{i}\leq t\leq \sigma_i}X_t^1 < z\biggr) \\
	&= \sum_{k=0}^{\infty} \P_{\beta(\ve)}(T_{\alpha(\ve)} < T_z)^k \, \P_{\alpha(\ve)}(\widetilde{N}(\lambda^2T)=k) = \mathrm{P}(\widetilde{\mathfrak{N}}_{p_{\ve,z},\lambda}(T) = 0).
\end{align*}
We observe that by standard results on Poisson approximation (see e.g. equation~(23) in \cite{Y91}) for every $T>0$
\begin{equation*}
\begin{split}
d_{TV}(\widetilde{\mathfrak{N}}_{p_{\varepsilon,z},\lambda}(T),\Poi_{{\kappa} JT})\leq \frac{p_{\varepsilon,z}}{2\sqrt{1-p_{\varepsilon,z}}}+\E_{\alpha(\ve)}\bigl[\bigl| p_{\varepsilon,z}\widetilde{N}(\lambda^2T)-{\kappa}JT\bigr|\bigr].
\end{split}
\end{equation*}
Therefore it is sufficient to show the convergence 
\begin{equation*}
\lim\limits_{scaling} \E_{\alpha(\ve)}\bigl[\bigl| p_{\varepsilon,z}\widetilde{N}(\lambda^2T)-{\kappa}JT\bigr|\bigr] =0.
\end{equation*}
With $\kappa_\ve := 1/\E_{\alpha(\ve)}[\widetilde{\sigma}_1]$ we obtain 
\begin{align*}
&\E_{\alpha(\ve)}\bigl[\bigl| p_{\varepsilon,z}\widetilde{N}(\lambda^2T)-{\kappa}JT\bigr|\bigr] 
= \E_{\alpha(\ve)}\bigl[\bigl| \lambda ^{-2} \widetilde{N}(T\lambda^2) -\kappa T \bigr|\bigr] \\
&\le \E_{\alpha(\ve)}\bigl[\bigl| \lambda ^{-2} \widetilde{N}(\lambda^2T) -\kappa_\ve T \bigr|\bigr] + |\kappa-\kappa_\ve|\cdot T.
\end{align*} 
The vanishing of $|\kappa -\kappa_\ve|\to 0$ is a reformulation of (A2) and due to (A2) together with (A3) we can apply a suitable version of the uniform renewal theorem such as Theorem~10 in \cite{L76} in order to conclude \[
\lim_{scaling} \E_{\alpha(\ve)}\bigl[\bigl| \lambda ^{-2} \widetilde{N}(T\lambda^2) -\kappa_\ve T \bigr|\bigr] \le \lim_{\lambda\to\infty} \sup_{\frac{z}{4}>\ve>0} \E_{\alpha(\ve)}\bigl[\bigl| \lambda ^{-2} \widetilde{N}(T\lambda^2) -\kappa_\ve T \bigr|\bigr]=0.
\] This finishes the proof.
\end{proof}

Now we only have to do one more last step. Observe that we have not yet reached exactly what we want. In order to describe the event $\lbrace T_z>T\rbrace$ we need to consider the event 
\begin{displaymath}
\lbrace {N}(\lambda^2T)=k,\forall {\sigma}_k\leq t\leq T:X_t^1<z\rbrace,
\end{displaymath}
this means we also have to make sure that during the cycle started before time $T$ but not completed before this time the level $z$ has not been hit.
\begin{proof}[Proof of Proposition~\ref{p:mainProp}]

From the fact $T \in [\sigma_{N(\lambda^2T)},\sigma_{N(\lambda^2T)+1})$ we see \[
\P_{\alpha(\ve)}(T_z>T)  \in \left(\P_{\alpha(\ve)}(T_z>\sigma_{N(\lambda^2T)+1}),\P_{\alpha(\ve)}(T_z>\sigma_{N(\lambda^2T)})\right]
\]
and by the previous Proposition~\ref{p:LimTillCycle} the upper bound has the asserted scaling limit. For the lower, we may define \[\widetilde{\mathfrak{N}}_{p_{\varepsilon,z},\lambda}^+(T):=\widetilde{\xi}_1^++\dots +\widetilde{\xi}_{\widetilde{N}(\lambda^2T)+1}^+\] with $(\widetilde{\xi}_i^+)_{i\ge 1}$ being an independent family of Bernoulli distributed random variables with $\mathrm{P}(\widetilde{\xi}_1^+=1)= p_{\varepsilon,z}$ and independent of the counting process $\widetilde{N}(T)$ and repeat the argumentation as in the proof of Proposition~\ref{p:LimTillCycle}:
\begin{align*}
&\sum_{k=0}^{\infty}\P_{\alpha(\ve)}\biggl({N}(\lambda^2T)=k,\forall 1\leq i\leq k+1 : \sup_{\tau_{i}\leq t\leq \sigma_i}X_t^1 < z\biggr) \\
&= \sum_{k=0}^{\infty} \P_{\beta(\ve)}(T_\ve < T_z)^{k+1} \, \P_{\alpha(\ve)}(\widetilde{N}(\lambda^2T)=k) = \mathrm{P}(\widetilde{\mathfrak{N}}_{p_{\ve,z},\lambda}^+(T) = 0).
\end{align*} Then, in the scaling limit \[
\lim\limits_{scaling} \E_{\alpha(\ve)}\bigl[\bigl| p_{\varepsilon,z}\bigl(\widetilde{N}(\lambda^2T)+1\bigr)-{\kappa}JT\bigr|\bigr] =0
\] still holds and the assertion is shown.

\end{proof}

Let us now start from a point $x>0$ and derive the law of $T_z$ with respect to $\P_x$. Starting at $x$ there are two cases to consider:
\begin{itemize}
\item The diffusion reaches ${\alpha(\ve)}$ before hitting $z$.
\item The process hits $z$ before visiting ${\alpha(\ve)}$. 
\end{itemize}
\begin{proof}[Proof of Theorem~\ref{t:mainThm}]
\[
\P_x(T_z>T) = \P_x(T_{\alpha(\ve)}\wedge T_z > T) + \P_x(T_{\alpha(\ve)} \le T < T_z).
\]
From (B1) it follows, that the first summand vanishes in the scaling limit and writing 
\[
\P_x(T_{\alpha(\ve)} \le T < T_z) = \P_x(T_{\alpha(\ve)}< T_z)\cdot \P_x (T_{\alpha(\ve)} \le T < T_z \mid T_{\alpha(\ve)} < T_z)
\] 
we see by (B2) that the first factor of that product has the scaling limit 
\[
\lim_{scaling}\P_x(T_{\alpha(\ve)}< T_z) =\frac{\int_x^z \exp \left(-\int_y^z \frac{b_2(l)}{\sigma^2(l)} \,dl\right) \,dy}{\int_0^z \exp \left(-\int_y^z \frac{ b_2(l)}{\sigma^2(l)} \,dl\right) \,dy}.
\]
For the second factor, with notation $\tT_z := T_z(\tX)$ an application of the strong Markov property at time $\tT_{\alpha(\ve)}$ leads to  
\begin{align*}
	&\P_x (T_{\alpha(\ve)} \le T < T_z \mid T_{\alpha(\ve)} < T_z) =\P_x(\tT_{\alpha(\ve)} \le  T < \tT_z) \\
	&=\int\limits_{\lbrace \tT_{\alpha(\ve)} \le T\rbrace} \P_{\alpha(\ve)} (T_z > T - \tT_{\alpha(\ve)}(\omega)) \,\P_x(d\omega)\\
	& =\int\limits_{\lbrace \tT_{\alpha(\ve)} \le T\rbrace} \left(1-\P_{{\alpha(\ve)}} (T_z \le T - \tT_{\alpha(\ve)}(\omega)) \right) \,\P_x(d\omega)\\
	&= \P_x (\tT_{\alpha(\ve)} \le T)-\int\P_{\alpha(\ve)} (\tT_{\alpha(\ve)}(\omega) + T_z \le T  ) \,\P_x(d\omega).
\end{align*}

The first summand has scaling limit $1$ and the integral may be seen as a probability of the convolution \begin{align*}
&\int\P_{\alpha(\ve)} (\tT_{\alpha(\ve)}(\omega) + T_z \le T  ) \,\P_x(d\omega) = \left[ \left(\P_x \circ(\tT_{\alpha(\ve)})^{-1} \right) \ast \left(\P_{\alpha(\ve)} \circ (T_z)^{-1}\right) \right] ([0,T]).
\end{align*} Due to the independence the characteristic function (as mapping of $s$) is the product \[\E_x [e^{is\tT_{\alpha(\ve)}}] \cdot \E_{\alpha(\ve)}[e^{isT_z}]\] and while we see the first factor has scaling limit $1$ we finish the proof by recalling Proposition~\ref{p:mainProp}.
\end{proof}

\section{Examples}

In the section we present two important classes of examples which illustrate our approach. The second example is motivated by a specific quantum mechanic situation.

\begin{remark}
	The formal generator associated to our SDE is given by
	\begin{equation*}
	L:= \frac{\lambda^2}{2}\sigma^2(x)\frac{d^2}{dx^2}+\frac{\lambda^2}{2}\bigl(\varepsilon b_1(x)-b_2(x)\bigr)\frac{d}{dx}.
	\end{equation*}
	The scale function $s$ (up do multiplicative constants) defined by the relation 
	\[
	\P_x(T_R < T_r) = \frac{s(x)-s(r)}{s(R)-s(r)}
	\] 
	for $0<r<x<R$ is given by 
	\[
	s(x) = \int_c^x \exp \left(-\int_c^y \frac{\ve b_1(l) -b_2(l)}{\sigma^2(l)} \,dl\right) \,dy = \int_c^x 1/p_c(y) \,dy.
	\] 
	The speed measure is 
	\begin{align*}
	m(dx) = \frac{2}{\lambda^2 \sigma^2(x)s'(x)} \,dx = \frac{2p(x)}{\lambda^2 \sigma^2(x)}  \, dx = 2r(x) \, dx.
	\end{align*} 
	The generator can be written in divergence form as 
	\begin{equation}\label{e:div-form}
	Lu(x) = \frac{1}{2r(x)} \frac{d}{dx} \left(p(x) \frac{du}{dx} (x)\right).
	\end{equation}
	For more details we refer to standard book on stochastic processes such as e.g. \cite{B11}.
\end{remark}

\subsection{Asymptotic linear stochastic differential equations}
 
\begin{enumerate}[label=(E\theenumi)]
	\item Let $b_1$ a positive continuously differentiable mapping from $[0,\infty)$ to $(0,\infty)$ uniformly bounded away from $0$ and from above, i.e. \[0<  a_- := \inf b_1,\quad \sup b_1 =: a_+ < \infty;\] specifically, $a:= b_1(0) >0$.
	\item $b_2$ a nonnegative twice continuously differentiable function on $[0,\infty)$ with $b_2(0)=0$ and $b:= b_2'(0) > 0$.
	\item $\sigma: [0,\infty) \to [0,\infty)$ twice continuously differentiable with \begin{enumerate}[leftmargin=1.0cm] 
		\item $\sigma(x) = 0 \Leftrightarrow x=0$,
		\item $\sigma := \sigma'(0) > 0$.
	\end{enumerate}
\end{enumerate}

This example class can be viewed as generalization of the specification 
\begin{equation}\label{e:scaleinv}
b_1(x) := 1,\qquad b_2(x) := b\cdot x, \qquad \sigma(x) := x
\end{equation}
in the sense that at the origin the coefficients exhibit the same behavior. Note, that in the situation of \eqref{e:scaleinv} a strong form of scale invariance holds, i.e. $Y_t := X_t/\ve$ fulfills the SDE 
\[
dY_t = \frac{\lambda^2}{2}\left(1-b\cdot Y_t\right) \, dt + \lambda \cdot Y_t \, dB_t 
\] 
making it plausible to choose $\alpha(\ve) $ and $\beta(\ve) $ of linear order. Our next goal is to perform the needed calculations for showing (A2), (A3), (B1) and (B2) where we set $\alpha(\ve):=\ve$ and $\beta(\ve) := 2\ve$.

\begin{remark}  \label{r:taylor}
	By Taylor's theorem, there is $M>0$, $(a\wedge b \wedge \sigma^2)/(2M)>\delta_0>0$ such that 
	\begin{align*}
		|b_1(x)-a| \le M x,\,	|b_2(x)-bx| \le M x^2,\, |\sigma^2(x)-\sigma^2x^2| \le M x^3
	\end{align*}
	for all $0\le x \le \delta_0$.
\end{remark}

To verify $p_{\ve,z}\xrightarrow{\ve\downarrow 0} 0$ we choose $\delta_0>0$ so that the inequalities in Remark~\ref{r:taylor} above hold on $x\le \delta_0$, set $\delta:= \delta_0 \wedge z/2$ and write 
\begin{align}\label{e:p_ez} 
p_{\ve,z}:= \P_{2\ve}(T_z<T_\ve) = \frac{\int_\ve^{2\ve} 1/p_\delta(y) \, dy}{\int_\ve^{z} 1/p_\delta(y) \, dy}.
\end{align}
Then the numerator tends to $0$ as 
\begin{align} \label{e:p_ezNum}
\int_\ve^{2\ve} 1/p_\delta(y) \, dy \le  \int_\ve^{2\ve} \exp\left(3\,\ve\, a/\sigma^2 \cdot  (1/y-1/\delta)\right) (y/\delta)^{b/(3\sigma^2)} \, dy \to 0,
\end{align} 
whereas the denominator does not vanish:
\begin{align} \label{e:p_ezDen}
	\int_\ve^{z} 1/p_\delta(y) \, dy 
	&\ge \int_\delta^{z} \exp\left(-\delta \int_\delta^y \frac{b_1(l)}{\sigma^2(l)}\, dl\right) \exp\left(\int_\delta^y \frac{b_2(l)}{\sigma^2(l)} \,dl\right)\, dy >0.
\end{align}\\

In order to prove the validity of (A2) we investigate $\E_\ve[\widetilde{\sigma}_1]$ for small $\ve>0$. As preparation and for later use we college some explicit estimates 
\begin{lemma} \label{l:intEst}
	The following assertions are true:
	\begin{itemize}
	\item[a)] For $0<y\le w< \delta_0/\ve$ the estimates 
	\begin{align*}
	\ve^2\frac{r(y\ve)}{p(w\ve)} \,  \begin{matrix}
	\ge\\
	\le
	\end{matrix} \, & \frac{1}{\sigma^2 \cdot y^2 \pm My^3\ve}\left( \frac{\sigma^2 \mp Mw\ve}{\sigma^2 \mp My\ve} \cdot \frac{y}{w}\right)^{ \pm \ve M(a+\sigma^2)/\sigma^4}    \\     
	&\exp\left( \frac{a}{\sigma^2} (1/w - 1/y) \right) 
	 \left(\frac{w}{y}\right)^{b/\sigma^2} \left(\frac{ \sigma^2 \pm M w\ve}{\sigma^2 \pm My\ve}\right) ^{\mp (b/\sigma^2+1)}  
	\end{align*} 
	\item[b)] For $1 < w\le y < \delta/\ve $ with $\delta:= \delta_0 \wedge z/2$ with $\pm$ and $\mp$ interchanged except the first $\pm$ in the denominator.
	\item [c)] For $0<y,w<\delta_0/\ve$ and for $1 < w\le y < \delta/\ve $ we have
	\begin{displaymath}
	\lim_{\ve \rightarrow 0}\ve^2\frac{r(y\ve)}{p(w\ve)}=\frac{1}{\sigma^2}e^{\frac{a}{\sigma^2}(1/w-1/y)}\frac{w^{b/\sigma^2}}{y^{b/\sigma^2+2}}.
	\end{displaymath}
	\end{itemize}
\end{lemma}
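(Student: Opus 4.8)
\textbf{Proof plan for Lemma~\ref{l:intEst}.}

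The quantity $\ve^2 r(y\ve)/p(w\ve)$ can be written out explicitly from the definitions of $r$ and $p$ given in the preceding remark. Recall $p(x) = p_c(x) = \exp\bigl(-\int_c^x (\ve b_1(l) - b_2(l))/\sigma^2(l)\,dl\bigr)$ and $r(x) = p(x)/(\lambda^2\sigma^2(x))$, so with $\lambda = 1$ on the path scale we get, after the constants $c$ cancel,
\[
\ve^2\frac{r(y\ve)}{p(w\ve)} = \frac{\ve^2}{\sigma^2(y\ve)}\exp\left(-\int_{w\ve}^{y\ve}\frac{\ve b_1(l) - b_2(l)}{\sigma^2(l)}\,dl\right).
\]
The first step is to substitute $l = s\ve$ in the integral, turning it into $\int_w^y (\ve^2 b_1(s\ve) - b_2(s\ve))/(\sigma^2(s\ve))\,ds$, and to split the exponent into the ``$b_1$ part'' $-\ve^2\int_w^y b_1(s\ve)/\sigma^2(s\ve)\,ds$ and the ``$b_2$ part'' $\ve\int_w^y b_2(s\ve)/\sigma^2(s\ve)\,ds$ — wait, one must be careful with the powers of $\ve$ coming from the substitution; the clean way is to keep $\ve^2 r(y\ve)/p(w\ve)$ as a single object and only substitute inside the integral, noting $dl = \ve\,ds$.

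For part c), the limiting identity, the plan is to apply dominated convergence (or just pointwise asymptotics) to each of the two integral pieces: as $\ve\to 0$, Remark~\ref{r:taylor} gives $b_1(s\ve)\to a$, $b_2(s\ve)/(s\ve)\to b$ and $\sigma^2(s\ve)/(s\ve)^2\to\sigma^2$, so $\ve^2/\sigma^2(y\ve)\to 1/(\sigma^2 y^2)$, the $b_1$-part of the exponent converges to $(a/\sigma^2)\int_w^y s^{-2}\,ds = (a/\sigma^2)(1/w - 1/y)$, and the $b_2$-part converges to $-(b/\sigma^2)\int_w^y s^{-1}\,ds = -(b/\sigma^2)\log(y/w)$, i.e. a factor $(w/y)^{b/\sigma^2}$; combining the three limits with the leading $1/(\sigma^2 y^2)$ gives exactly the stated expression. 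One should record that $\ve^2 b_1(s\ve)/\sigma^2(s\ve) = O(\ve)$ uniformly on the relevant $s$-range, so this term does \emph{not} contribute to the limit — which is consistent with the claimed limit having no ``$\ve b_1$'' contribution left.

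For parts a) and b), the two-sided estimates, the plan is to insert the Taylor bounds of Remark~\ref{r:taylor} everywhere: replace $\sigma^2(y\ve)$ by $\sigma^2 y^2\ve^2 \pm M y^3\ve^3$ to get the $1/(\sigma^2 y^2 \pm My^3\ve)$ prefactor (after pulling out $\ve^2$); bound $b_1(s\ve)$ between $a \mp Ms\ve$, $b_2(s\ve)$ between $bs\ve \mp M s^2\ve^2$, and $\sigma^2(s\ve)$ between $\sigma^2 s^2\ve^2 \mp Ms^3\ve^3 = s^2\ve^2(\sigma^2 \mp Ms\ve)$ inside each exponent, then integrate the resulting elementary rational functions of $s$ over $[w,y]$ (for a), where $y\le w$ one reverses the roles, hence the ``$\pm$ and $\mp$ interchanged'' remark in b)). The integrals $\int s^{-2}(\sigma^2 \mp Ms\ve)^{-1}\,ds$ and $\int s^{-1}(\sigma^2 \mp Ms\ve)^{-1}\,ds$ are done by partial fractions, producing the $\exp((a/\sigma^2)(1/w-1/y))$ term, the $(w/y)^{b/\sigma^2}$ term, and the correction factors $\bigl((\sigma^2 \mp Mw\ve)/(\sigma^2 \mp My\ve)\bigr)^{\cdots}$ and $\bigl((\sigma^2 \pm Mw\ve)/(\sigma^2 \pm My\ve)\bigr)^{\cdots}$ with the exponents $\pm \ve M(a+\sigma^2)/\sigma^4$ and $\mp(b/\sigma^2 + 1)$ as stated; monotonicity of the exponential in its argument then upgrades the pointwise exponent bounds to the claimed one-sided inequalities on $\ve^2 r(y\ve)/p(w\ve)$.

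The only genuinely delicate point — the main obstacle — is bookkeeping the direction of every inequality through the partial-fraction integration: the sign of the correction depends jointly on the sign chosen in $\sigma^2 \mp Ms\ve$, on whether $a + \sigma^2$-type coefficients are over- or under-estimated, and on the orientation $w\le y$ versus $y\le w$, which is exactly why the statement needs separate cases a) and b) with the $\pm/\mp$ swap. I would handle this by fixing the ``$+$'' (upper-bound) case in full detail, being explicit that $s\mapsto \sigma^2 - Ms\ve > 0$ on the range (which uses $\delta_0 < (a\wedge b\wedge\sigma^2)/(2M)$ from Remark~\ref{r:taylor}, so all denominators stay bounded away from zero), and then obtain the ``$-$'' case and the reversed-orientation case by the symmetry of swapping $\pm\leftrightarrow\mp$ and $w\leftrightarrow y$ rather than redoing the computation. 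Part c) can alternatively be derived as a corollary of a) and b) by letting $\ve\to 0$ in the two-sided bound and observing both sides converge to the common limit, which avoids a separate dominated-convergence argument.
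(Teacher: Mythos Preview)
Your approach is essentially identical to the paper's: write out $r(y\ve)/p(w\ve)$ explicitly, insert the Taylor bounds of Remark~\ref{r:taylor} into numerator and denominator, evaluate the resulting elementary integrals by partial fractions to obtain a) and b), and read off c) as the common limit of the two-sided bounds. One bookkeeping slip to fix: from $s'(x)=1/p_c(x)$ one has $p_c(x)=\exp\bigl(+\int_c^x(\ve b_1-b_2)/\sigma^2\,dl\bigr)$, not the minus sign you wrote, so the exponent in your displayed formula for $\ve^2 r(y\ve)/p(w\ve)$ should carry the opposite sign (this is why your stated limit for the $b_1$-part happens to come out with the correct sign despite the earlier error).
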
 
\begin{proof}
	In order to prove assertion a) we use Remark~\ref{r:taylor} in the case $y\le w$ and conclude 
	\begin{align*}
	\frac{r(y\ve)}{p(w\ve)} &= \frac{1}{\sigma^2(y\ve)}\exp\left(\ve \int_{w\ve}^{y\ve} \frac{b_1(l)}{\sigma^2(l)} \,dl  \right) \exp\left( \int_{y\ve}^{w\ve} \frac{b_2(l)}{\sigma^2(l)} \,dl \right) \\
		& \,  \begin{matrix}
		\ge\\
		\le
		\end{matrix} \,  \frac{1}{\sigma^2 \cdot (y\ve)^2 \pm M(y\ve)^3}\exp\left(\ve \int_{w\ve}^{y\ve} \frac{a\pm Ml}{\sigma^2 l^2 \mp Ml^3} \,dl  \right)  \exp\left( \int_{y\ve}^{w\ve} \frac{b\mp Ml}{\sigma^2 l \pm Ml^2} \,dl \right).
	\end{align*} 
	An application of partial fraction decomposition allows the integrals explicitly and give the estimates given in assertion a). 
	The proof of b) is completely analogous and assertion c) follows immediately from a) and b). 
\end{proof}

Since in (A2) we consider the process $X_t = X_t^1$ with parameter $\lambda=1$, we now write $T_z$ for $T_z(X^1)$. By the strong Markov property, \[\E_{\alpha(\ve)}[\widetilde{\sigma}_1] = \E_{\alpha(\ve)}[T_{\beta(\ve)}] + \E_{\beta(\ve)}[T_{\alpha(\ve)} \mid T_{\alpha(\ve)} < T_z] = \E_{\alpha(\ve)}[T_{\beta(\ve)}] + \E_{\beta(\ve)}[\tT_{\alpha(\ve)}] \] allowing us to handle both summands separately. 

\begin{proposition}[implying (A2)] Let $0<\alpha<\beta$ arbitrary and (by an abuse of notation) we set $\alpha(\ve)=\alpha\ve$ and $\beta(\ve)=\beta\ve$. The expected time of going from $\alpha\ve$ to $\beta\ve$ and back again without hitting $z$ is well behaved in the sense, that
	\begin{align*}
	&\lim\limits_{\ve\downarrow 0} \Bigl(\E_{\alpha\ve}[T_{\beta\ve}] + \E_{\beta\ve}[\tT_{\alpha\ve}]\Bigr) = \lim\limits_{\ve\downarrow 0} \Bigl(\E_{\alpha\ve}[T_{\beta\ve}] + \E_{\beta\ve}[T_{\alpha\ve}]\Bigr) \\
	&= \frac{2}{\sigma^2} \bigg[ \int_0^\infty \int_\alpha^\beta \exp\left( \frac{a}{\sigma^2} (1/w - 1/y) \right)  \frac{w^{b/\sigma^2}}{y^{b/\sigma^2+2}}    \,dw \,dy  \bigg] \in (0,\infty).\end{align*}
\end{proposition}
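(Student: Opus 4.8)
The plan is to compute the two expectations $\E_{\alpha\ve}[T_{\beta\ve}]$ and $\E_{\beta\ve}[T_{\alpha\ve}]$ (the latter for the unconditioned process, after first arguing that the conditioning becomes negligible) via the standard one-dimensional diffusion formula for mean exit times from an interval, written in terms of the scale density $1/p$ and the speed density $r$. Recall that for a diffusion on scale $s$ with speed measure $m(dx) = 2r(x)\,dx$, started at $x\in(\ell,u)$, the expected time to exit $(\ell,u)$ decomposes via Green's function; restricting attention to the part of the path that actually does the relevant traversal, $\E_{\alpha\ve}[T_{\beta\ve}]$ (exit from $(0,\beta\ve)$ at the upper end, which happens a.s. since $0$ is inaccessible here) is
\[
\E_{\alpha\ve}[T_{\beta\ve}] = 2\int_0^{\beta\ve}\!\!\int_{\max(\xi,\alpha\ve)}^{\beta\ve} \tfrac{?}{?}\cdots
\]
— more cleanly, I would use the representation $\E_{\alpha\ve}[T_{\beta\ve}] = 2\int \,(1/p(w))\,\bigl(\int r(y)\,dy\bigr)\,dw$ with the inner integral over the appropriate range dictated by the Green's function of $L$ on $(0,\beta\ve)$, and similarly (with roles of $p$ and $r$ swapped in the nesting) for $\E_{\beta\ve}[T_{\alpha\ve}]$, the exit time from $(\alpha\ve,\infty)$ at the lower end. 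Substituting $y\ve, w\ve$ and pulling out the $\ve^2$ gives integrands of exactly the form $\ve^2 r(y\ve)/p(w\ve)$ from Lemma~\ref{l:intEst}.

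First I would dispose of the conditioning: by a Doob $h$-transform comparison (or directly from $p_{\ve,z}\to 0$ having been shown, combined with $\E_{\beta\ve}[T_{\alpha\ve}\wedge T_z]\le \E_{\beta\ve}[T_{\alpha\ve}]$ and the fact that on the complementary event the process has been absorbed at $z$), show $\E_{\beta\ve}[\tT_{\alpha\ve}] = \E_{\beta\ve}[T_{\alpha\ve} \mid T_{\alpha\ve}<T_z]$ and $\E_{\beta\ve}[T_{\alpha\ve}]$ have the same limit; the cleanest route is to note $\E_{\beta\ve}[T_{\alpha\ve};T_{\alpha\ve}<T_z] \le \E_{\beta\ve}[T_{\alpha\ve}]$ and $\E_{\beta\ve}[T_{\alpha\ve};T_{\alpha\ve}<T_z] = \P_{\beta\ve}(T_{\alpha\ve}<T_z)\,\E_{\beta\ve}[\tT_{\alpha\ve}]$ together with a matching lower bound obtained by truncating the $T_z$ event, since as $z$ is fixed and $\beta\ve\to 0$ the diffusion spends negligible mass near $z$. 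Then I would write out the exit-time Green's function integrals, perform the change of variables $y\mapsto y/\ve$, $w\mapsto w/\ve$, so that the integrand becomes $\ve^2 r(y\ve)/p(w\ve)$ integrated over $\{0<y\le w\}$-type and $\{w\le y\}$-type regions (one region for each of the two legs of the cycle). The two-sided bounds in Lemma~\ref{l:intEst}(a),(b) supply a dominating function: for fixed $y,w$ the bounds converge (Lemma~\ref{l:intEst}(c)) to $\sigma^{-2} e^{(a/\sigma^2)(1/w-1/y)} w^{b/\sigma^2}/y^{b/\sigma^2+2}$, and for $\ve$ small the upper bound is, uniformly in $\ve$, dominated by a fixed integrable function (the exponential factor $e^{(a/\sigma^2)(1/w-1/y)}\le 1$ for $y\ge w$ kills the blow-up at $y\to 0$ in one region, while in the other region the polynomial decay $y^{-(b/\sigma^2+2)}$ at infinity and the boundedness near $w=0$ do the job). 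Dominated convergence then yields the stated limit, and the two nested integrals over the two regions recombine into the single double integral $\int_0^\infty\int_\alpha^\beta$ displayed in the proposition.

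The positivity and finiteness of the limit is then immediate: the integrand is strictly positive and continuous, so the value is $>0$; for finiteness one checks $\int_0^\infty\int_\alpha^\beta e^{(a/\sigma^2)(1/w-1/y)} w^{b/\sigma^2} y^{-b/\sigma^2-2}\,dw\,dy<\infty$ by splitting the $y$-integral at, say, $y=\beta$: on $y\le\beta$ we have $1/w-1/y$ bounded above (since $w\le\beta\le$ irrelevant — rather on the relevant region $w\le y$ so the exponent is $\le 0$), removing the singularity, and on $y>\beta$ the factor $y^{-b/\sigma^2-2}$ is integrable at infinity while the $w$-integral over $[\alpha,\beta]$ is a bounded continuous function of $y$.

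The main obstacle I anticipate is producing a single $\ve$-uniform integrable dominating function valid on the whole region of integration (all the way down to $y,w\to 0$ and up to $y\to\infty$) from the somewhat unwieldy two-sided estimates in Lemma~\ref{l:intEst}: one must carefully track that the prefactors like $\bigl((\sigma^2\mp Mw\ve)/(\sigma^2\mp My\ve)\bigr)^{\pm\ve M(a+\sigma^2)/\sigma^4}$ and $\bigl(1/(\sigma^2 y^2\pm My^3\ve)\bigr)$ stay bounded by $\ve$-free integrable expressions on the range $0<y,w<\delta_0/\ve$ — in particular that the exponent $\pm\ve M(\cdots)$ being $O(\ve)$ makes those ratio-powers tend to $1$ and stay bounded, and that near $y=0$ the genuine singularity is $\sigma^{-2}y^{-2}$ which is tamed only after multiplication by the exponential gap $e^{(a/\sigma^2)(1/w-1/y)}$ with $w\le y$ (and symmetrically in the other leg). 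Handling the boundary terms in the Green's-function representation of the exit time (the cycle is a concatenation $\alpha\ve\to\beta\ve\to\alpha\ve$, and the two legs have different natural enclosing intervals, $(0,\beta\ve)$ and $(\alpha\ve,\text{something}\ge z)$) so that the integration regions come out exactly as $\{0<y\le w<\beta\}$ and $\{?\}$ requires some care, but is routine once the formula is written correctly.
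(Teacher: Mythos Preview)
Your overall strategy—Green's kernel representation of mean hitting times, the substitution $y,w\mapsto y\ve,w\ve$ to extract the factor $\ve^2 r(y\ve)/p(w\ve)$, then Lemma~\ref{l:intEst} for pointwise limits and integrable majorants, then dominated convergence—is exactly the paper's approach. Two places where your outline diverges from or is less complete than the paper's execution deserve comment.

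\textbf{Handling the conditioning.} You propose to first argue that $\E_{\beta\ve}[\tT_{\alpha\ve}]$ and $\E_{\beta\ve}[T_{\alpha\ve}]$ have the same limit, then compute only the unconditioned version. The upper bound $\E_{\beta\ve}[\tT_{\alpha\ve}]\le \E_{\beta\ve}[T_{\alpha\ve}]/(1-p_{\ve,z})$ is immediate, but your ``matching lower bound by truncating the $T_z$ event'' unwinds to showing $p_{\ve,z}\cdot\E_z[T_{\alpha\ve}]\to 0$; since $\E_z[T_{\alpha\ve}]$ actually grows (roughly like $\log(1/\ve)$) as $\ve\downarrow 0$, this needs the polynomial decay rate of $p_{\ve,z}$—true, but more work than you indicate. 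The paper instead carries the $h$-transform through: the Green's-kernel integrand for the conditioned process is $(h(y\ve)/h(w\ve))^2\,r(y\ve)/p(w\ve)$, and on the integration domain $w\le y$ one has $h(y\ve)/h(w\ve)\le 1$ (since $h$ is non-increasing), giving the same majorant as in the unconditioned case; pointwise $h(y\ve)/h(w\ve)\to 1$. This delivers both directions at once and avoids the detour through $\E_z[T_{\alpha\ve}]$.

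\textbf{The region where Lemma~\ref{l:intEst} does not apply.} You correctly flag as the main obstacle the construction of an $\ve$-free integrable majorant, but Lemma~\ref{l:intEst} only covers $y\ve,w\ve<\delta_0$, while the down-leg integral runs up to $y\ve=z$. The paper resolves this by splitting at a fixed $\delta=\delta_0\wedge z/2$: on $y\ve\in[\beta\ve,\delta]$ the lemma applies and dominated convergence goes through; the remaining piece on $[\delta,z]$ is bounded crudely as a product of a factor bounded uniformly in $\ve$ and a factor $\int_{\alpha\ve}^{\beta\ve}\exp\bigl(\int_w^\delta(\ve b_1-b_2)/\sigma^2\,dl\bigr)\,dw$ that vanishes as $\ve\downarrow 0$. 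Your sketch does not say how to handle this tail, and the Lemma's bounds do not extend there since the Taylor approximations of Remark~\ref{r:taylor} fail on $[\delta,z]$.
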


\begin{proof} 
	
We observe that for non-negative bounded and continuous functions $f$ we will have \begin{align}\label{e:exAsGreen}\E_{\alpha\ve}\biggl[\int_0^{T_{\beta\ve}} f(X_s) \, ds \biggr] = \int_0^{\beta\ve} g(\alpha\ve,y) f(y) r(y) \,dy ,\end{align} where $g$ denotes the Green kernel of $L$. In order to determine the Green kernel we calculate two solutions $u$ and $v$ of $Lw=0$:
\begin{itemize}
	\item First the constant function $u\equiv1$ is a solution and notice that the function $u$ belongs to $L^2(r(x)dx)$.
	\item $v(x):=\int_x^{\beta\ve}\frac{1}{p(w)}\,dw$ solves $Lv=0$ with the additional property that ${v(\beta\ve)=0}$. 
\end{itemize}
Therefore we conclude that the Green kernel is given by 
\begin{align} \label{e:green} 
	g(x,y)=\begin{cases}
	\frac{2}{W(v,u)}v(x)u(y)\quad &\text{if $x\geq y$}\\
	\frac{2}{W(v,u)}u(x)v(y)\quad &\text{if $x < y$},
	\end{cases}
\end{align}
where $W(v,u)=v\cdot pu'- u\cdot pv'= 1$ is the Wronskian determinant (cf. Theorem~13.21 in \cite{W03}).
Inserting $f\equiv 1$ in equation~(\ref{e:exAsGreen}) we conclude \begin{align*}
	\E_{\alpha\ve}[T_{\beta\ve}] & = 2 \ve^2 \bigg[ \int_0^\alpha \int_\alpha^{\beta}\frac{r(y\ve)}{p(w\ve)}\,dw \,dy  + \int_\alpha^{\beta} \int_y^{\beta} \frac{r(y\ve)}{p(w\ve)}\,dw  \,dy \bigg].
\end{align*}
For $\ve<\delta_0/\beta$ Lemma~\ref{l:intEst} part a) demonstrates using $ \sigma^2-M\beta\ve >\sigma^2 - M\delta_0 > \sigma^2/2 $ that  
\begin{align} \label{e:maj}
\frac{1}{y^2 \sigma^2/2 } \left( \frac{w}{y}\right)^{(a+\sigma^2)/(2\beta\sigma^2)} \exp\left( \frac{a}{\sigma^2} (1/w - 1/y) \right)  \left(\frac{w}{y}\right)^{b/\sigma^2} 
\end{align} 
is majorizing the integrand. The majorant given in \eqref{e:maj} is integrable on the domain $D= (\alpha,\beta)\times (0,\alpha) \cup \lbrace(w,y) \mid \alpha < y < \beta,\, y \le w < \beta\rbrace$. Using dominated convergence and Lemma~\ref{l:intEst} part c) we conclude 
\[
0 < \lim\limits_{\ve\downarrow 0} \E_{\alpha\ve}[T_{\beta\ve}] = \frac{2}{\sigma^2} \bigg[ \int_D \exp\left( \frac{a}{\sigma^2} (1/w - 1/y) \right)  \frac{w^{b/\sigma^2}}{y^{b/\sigma^2+2}}    \,d(w,y)  \bigg] < \infty.
\]
We now turn to $\E_{\beta\ve}[\tT_{\alpha\ve}]$. The generator of the diffusion process conditioned not to hit $z$ before hitting $\alpha\ve$ can be calculated as an $h$-transform of $L$: 
\[
L^h f = \frac{1}{2}\sigma^2(x) f'' + \left(\frac{1}{2}\left(\varepsilon b_1(x)-b_2(x)\right) + \sigma^2(x) \frac{h'(x)}{h(x)}\right)  f',
\] 
where 
\begin{equation}\label{e:harm-hitting}
h(x) := \P_{x}(T_{\alpha\ve} < T_z) = \int_x^z 1/{p}(y) \,dy \left/  \int_{\alpha\ve}^z 1/{p}(y) \,dy .\right.
\end{equation} 
$L^h$ may be rewritten in divergence form as in \eqref{e:div-form} by letting \[p^h(x) = \exp \left(\int_c^x \frac{\ve b_1(l) -b_2(l)}{\sigma^2(l)} + 2\frac{h'(l)}{h(l)} \,dl\right)\qquad \text{and} \qquad r^h(x) = \frac{p^h(x)}{ \sigma^2(x)}.
\]
Again using the corresponding Green's function we find 
\begin{align}
\E_{\beta\ve}[\tT_{\alpha\ve}] \label{e:eDownCross}&=2\ve^2\bigg[\int_\alpha^{\beta}\int_\alpha^y\frac{r^h(y\ve)}{p^h(w\ve)}\,dw \,dy + \int_{\beta}^{z/\ve}\int_\alpha^{\beta}\frac{r^h(y\ve)}{p^h(w\ve)}\,dw  \,dy  
\bigg].
\end{align}
Here we have
\[\frac{r^h(y\ve)}{p^h(w\ve)} = \frac{r(y\ve)}{p(w\ve)} \left(\frac{h(y\ve)}{h(w\ve)}\right)^2.\]

Since on the integration domains the relation $w\le y$ holds and since by \eqref{e:harm-hitting} the harmonic function $h$ is non-increasing, we get  \begin{equation}\label{e:h-quotient}
\left(\frac{h(y\ve)}{h(w\ve)}\right)^2 \leq 1.
\end{equation} 
Similar to (\ref{e:p_ezNum}) and (\ref{e:p_ezDen}) one sees that \[
\frac{h(y\ve)}{h(w\ve)} \ge \frac{h(y\ve)}{h(\alpha\ve)} = \frac{\int_{y\ve}^z 1/{p}(l) \,dl}{\int_{\alpha\ve}^z 1/{p}(l) \,dl} = 1- \frac{\int_{\alpha\ve}^{y\ve} 1/{p}(l) \,dl}{\int_{\alpha\ve}^z 1/{p}(l) \,dl} \xrightarrow[\ve\downarrow 0]{} 1. 
\] 
So applying Lemma~\ref{l:intEst} part b) and c) in order to find an integrable majorant as well as the pointwise limit we derive by Lebegue's theorem 
\[
\lim\limits_{\ve\downarrow 0} 2\ve^2\int_\alpha^{\beta}\int_\alpha^y\frac{r^h(y\ve)}{p^h(w\ve)}dw \,dy = \frac{2}{\sigma^2} \bigg[ \int_\alpha^\beta \int_\alpha^y \exp\left( \frac{a}{\sigma^2} (1/w - 1/y) \right)  \frac{w^{b/\sigma^2}}{y^{b/\sigma^2+2}}    \,dw \, dy  \bigg].
\]

We decompose the second integral in (\ref{e:eDownCross}) into two parts
\[ 
2\int_{\beta\ve}^z \int _{\alpha\ve}^{\beta\ve} \frac{r(y)}{p(w)}  \left(\frac{h(y)}{h(w)}\right)^2 \,dw\,dy = 2(I_1 + I_2), 
\] 
where 
\begin{align*}
I_1 &:= \int_{\beta\ve}^\delta \int _{\alpha\ve}^{\beta\ve}\frac{r(y)}{p(w)}    \left(\frac{h(y)}{h(w)}\right)^2 \,dw\,dy,\\
I_2 &:= \int_{\delta}^z \int _{\alpha\ve}^{\beta\ve} \frac{1}{\sigma^2(y)}\exp\left( \int_{w}^{\delta} \frac{\ve b_1(l)-b_2(l)}{\sigma^2(l)} \,dl  \right)\\
&\qquad \qquad\quad\times \exp\left( \int_{\delta}^{y} \frac{\ve b_1(l)-b_2(l)}{\sigma^2(l)} \,dl  \right)  \left(\frac{h(y)}{h(w)}\right)^2 \,dw\,dy
\end{align*} 
with $\delta := \delta_0 \wedge z/2$. We can apply the argument leading to \eqref{e:h-quotient}and Lemma~\ref{l:intEst} b) to $I_1$ since we have for $\ve < \delta/\beta \wedge \left( \sigma^4/[M(a+\sigma^2)] \right) $ in order to get the majorant 
\[ 
\frac{1}{y^2 \sigma^2/2} \cdot \frac{2y}{w} \exp\left( \frac{a}{\sigma^2} \left(\frac{1}{w} - \frac{1}{y}\right)\right) \left(\frac{w}{y}\right)^{b/\sigma^2} \left(\frac{3}{2}\right)^{b/\sigma^2+1}.
\] 
By part c) of Lemma~\ref{l:intEst} his results in 
\[
I_1\xrightarrow{\ve\downarrow 0} \int_{\beta}^\infty \int _\alpha^{\beta} \frac{1}{y^2\sigma^2 }    \exp\left( \frac{a}{\sigma^2} \left(\frac{1}{w} - \frac{1}{y}\right)\right) \left(\frac{w}{y}\right)^{b/\sigma^2}  \,dw\,dy.
\] 
The statement $I_2\to 0$ can be deduced in bounding \[ I_2 \le \int_{\delta}^z   \frac{1}{\sigma^2(y)}\exp\left( \int_{\delta}^{y} \frac{\ve b_1(l)-b_2(l)}{\sigma^2(l)} \,dl  \right) \,dy \cdot \int_{\alpha\ve}^{\beta\ve}\exp\left( \int_{w}^{\delta} \frac{\ve b_1(l)-b_2(l)}{\sigma^2(l)} \,dl  \right) \,dw   \] as product where the first factor is monotonously decreasing and bounded (e.g. set $\ve :=1$ in that expression) and the second one vanishes: \begin{align*}
&\int_{\alpha\ve}^{\beta\ve}\exp\left( \int_{w}^{\delta} \frac{\ve b_1(l)-b_2(l)}{\sigma^2(l)} \,dl  \right) \,dw \le\int_{\alpha\ve}^{\beta\ve}\exp\left( \ve \int_{\alpha\ve}^{\delta} \frac{a+Ml}{\sigma^2 l^2 - Ml^3} \,dl  \right) \,dw \\
&= (\beta-\alpha) \ve \left( \frac{\sigma^2-M\alpha\ve}{\sigma^2-M\delta} \cdot \frac{\delta}{\alpha\ve}\right)^{\ve \frac{M(a+\sigma^2)}{\sigma^4}} \exp\left( \frac{a}{\sigma^2}\left(1/\alpha - \ve/\delta \right) \right)   \to 0,
\end{align*}
where we have again used \eqref{r:taylor} in the inequality.
\end{proof}
This gives the required property of the first moment of the cycle length. We will now establish the uniform boundedness of the second moment. 
\begin{proposition}[A3] \label{p:e1A3}
	The Cycle lengths have finite second moment uniformly in $\ve>0$: 
	\[
	\limsup_{\ve\downarrow 0} \E_{\alpha\ve}[\widetilde{\sigma}_1^2] <\infty.
	\]
\end{proposition}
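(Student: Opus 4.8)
The plan is to decompose the cycle length $\widetilde{\sigma}_1$ into its two natural pieces, the upward excursion $T_{\beta\ve}$ started from $\alpha\ve$ and the conditioned downward excursion $\tT_{\alpha\ve}$ started from $\beta\ve$, exactly as in the preceding proposition. Since $\widetilde{\sigma}_1^2 \le 2\,T_{\beta\ve}^2 + 2\,\tT_{\alpha\ve}^2$ after this decomposition (using $(u+v)^2 \le 2u^2 + 2v^2$ together with the strong Markov property at time $T_{\beta\ve}$), it suffices to bound $\E_{\alpha\ve}[T_{\beta\ve}^2]$ and $\E_{\beta\ve}[\tT_{\alpha\ve}^2]$ uniformly in small $\ve$. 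For each of these I would use the standard second-moment formula for hitting times of a one-dimensional diffusion in terms of the Green kernel: if $g$ is the Green kernel of $L$ on $(0,\beta\ve)$ associated with exit at $\beta\ve$, then
\begin{equation*}
\E_{\alpha\ve}[T_{\beta\ve}^2] = 2 \int_0^{\beta\ve} g(\alpha\ve,y)\, \E_y[T_{\beta\ve}]\, r(y)\, dy = 2\int_0^{\beta\ve} g(\alpha\ve,y)\, \Bigl(\int_0^{\beta\ve} g(y,y')\, r(y')\, dy'\Bigr) r(y)\, dy,
\end{equation*}
and analogously for the $h$-transformed process with $g^h$, $r^h$, $p^h$ in place of $g$, $r$, $p$. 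This reduces the whole matter to estimating iterated integrals of $r(y\ve)/p(w\ve)$ of the same type already controlled in Lemma~\ref{l:intEst}.

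The execution then parallels the proof of the previous proposition, but with one extra layer of integration. After the substitutions $y = \bar y \ve$, $w = \bar w \ve$ the quantity $\ve^4\, r(y\ve)/p(w\ve)\cdot r(y'\ve)/p(w'\ve)$ splits, via Lemma~\ref{l:intEst} a)--b), into a product of two majorants of the form displayed in \eqref{e:maj}; the key point is that the exponential factors $\exp\bigl(\tfrac{a}{\sigma^2}(1/\bar w - 1/\bar y)\bigr)$ and the polynomial factors $(\bar w/\bar y)^{b/\sigma^2}/\bar y^2$ are integrable over the relevant wedge-shaped domains \emph{uniformly} in $\ve < \delta_0/\beta$, because for such $\ve$ one has $\sigma^2 - M\beta\ve > \sigma^2/2$ and the perturbative exponents $\pm \ve M(a+\sigma^2)/\sigma^4$ stay bounded by, say, $1$, so the correction factors stay in a fixed compact range. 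For the conditioned part one additionally uses $(h(y\ve)/h(w\ve))^2 \le 1$ from \eqref{e:h-quotient} on the inner wedge and, on the outer region reaching up to $z/\ve$, the same factorization of the integral into a bounded $y$-integral (bounded uniformly, e.g. by setting $\ve := 1$ in the exponent as before) times a $w$-integral over $(\alpha\ve,\beta\ve)$ that is $O(\ve)$; one just has to iterate this once more to accommodate the extra $\E_y[\tT_{\alpha\ve}]$ factor, which is itself uniformly bounded by the previous proposition's estimate together with a monotonicity argument in the starting point.

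Concretely I would proceed in the following steps. First, write $\E_{\alpha\ve}[T_{\beta\ve}^2] = 2\,\E_{\alpha\ve}\bigl[\int_0^{T_{\beta\ve}} \E_{X_s}[T_{\beta\ve}]\, ds\bigr]$ by the Markov property and the standard identity $\E_x[\tau^2] = 2\E_x\bigl[\int_0^\tau \E_{X_s}[\tau]\,ds\bigr]$, and observe that $\sup_{0<y<\beta\ve} \E_y[T_{\beta\ve}]$ is itself bounded by the Green-kernel expression; combining, $\E_{\alpha\ve}[T_{\beta\ve}^2]$ is bounded by a constant times a double Green integral. Second, insert the explicit Green kernel from \eqref{e:green} and change variables to scaled coordinates, reducing to iterated integrals of $\ve^2 r(y\ve)/p(w\ve)$. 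Third, apply Lemma~\ref{l:intEst} a) to produce, for all small $\ve$, a fixed integrable majorant on the wedge domains (this is where the uniformity is extracted, using $\sigma^2 - M\beta\ve > \sigma^2/2$ and boundedness of the perturbative exponents), and conclude $\limsup_{\ve\downarrow 0}\E_{\alpha\ve}[T_{\beta\ve}^2] < \infty$ by dominated convergence. Fourth, repeat for $\E_{\beta\ve}[\tT_{\alpha\ve}^2]$ using $r^h/p^h = (r/p)(h(\cdot)/h(\cdot))^2$, the bound $(h(y\ve)/h(w\ve))^2 \le 1$, Lemma~\ref{l:intEst} b), and the same splitting of the far region $[\beta\ve, z]$ into a bounded $y$-integral times a vanishing $w$-integral that appeared in the treatment of $I_2$ above. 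Fifth, combine via $\widetilde{\sigma}_1^2 \le 2T_{\beta\ve}^2 + 2\tT_{\alpha\ve}^2$. The main obstacle is purely bookkeeping: keeping the $\pm/\mp$ choices in Lemma~\ref{l:intEst} straight across the now doubled integration so that the majorant is genuinely $\ve$-independent and genuinely integrable on the unbounded pieces (the $1/\bar y^4$-type decay at infinity coming from the two factors of $1/\bar y^2$ must beat the polynomial growth $(\bar w)^{b/\sigma^2}$), but no new idea beyond the previous proposition is needed.
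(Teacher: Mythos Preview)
Your overall strategy coincides with the paper's: split $\widetilde\sigma_1$ into the up- and down-crossings, apply Kac's moment formula to each, and reduce everything to iterated integrals of $\ve^2 r(y\ve)/p(w\ve)$ controlled via Lemma~\ref{l:intEst}. Two points in your sketch, however, do not go through as stated.

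\textbf{Upward phase.} After Kac's formula you need $\E_y[T_{\beta\ve}]$ for \emph{all} $y\in(0,\beta\ve)$, which forces the scaled $y$-integration all the way down to $0$, not just to $\alpha$ as in the first-moment computation. The majorant \eqref{e:maj} is not obviously integrable on $\{0<y<w<\beta\}$: as $y\downarrow 0$ the inner $w$-integral $\int_y^\beta e^{a/(\sigma^2 w)} w^{\mathrm{power}}\,dw$ diverges and must be balanced against the prefactor $e^{-a/(\sigma^2 y)} y^{-2-\mathrm{power}}$. The paper settles this with a short L'H\^opital computation showing the ratio tends to $\sigma^2/a$; some such argument is needed, since ``$\sigma^2 - M\beta\ve > \sigma^2/2$'' and bounded perturbative exponents only give uniformity in $\ve$, not integrability at the origin.

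\textbf{Downward phase.} The claim that ``the extra $\E_y[\tT_{\alpha\ve}]$ factor \ldots\ is itself uniformly bounded by \ldots\ a monotonicity argument in the starting point'' fails on the outer region $y\in(\beta\ve,z)$. Monotonicity in the starting point gives $\E_y[\tT_{\alpha\ve}]\ge \E_{\beta\ve}[\tT_{\alpha\ve}]$ there, the wrong direction; and indeed for fixed $y>0$ this expectation grows like $\log(1/\ve)$ (compare the estimates in the proof of Proposition~\ref{p:B1p1Proof}). The paper therefore does \emph{not} bound this factor uniformly. Instead it replaces $r/p$ by an explicit envelope $f(y)=c_1 y^{-2}$ on $(0,\delta)$ and $f(y)=c_2$ on $[\delta,z]$, inserts the analogous bound for $\E_y[T_{\alpha\ve}]$, and computes the resulting iterated integrals directly: the pieces touching $[\delta,z]$ contribute $O(\ve\log(1/\ve))\to 0$, while the pure $(0,\delta)$ piece reduces to the $\ve$-free integral $\int_\beta^\infty y^{-2}\bigl(\log(y/\alpha)+1\bigr)\,dy<\infty$. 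The logarithmic growth of the inner expectation is genuine and is absorbed by the $y^{-2}$ decay of the outer Green weight; asserting a uniform bound skips the actual content of this step.
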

\begin{proof}
We show the finiteness of both $\limsup_{\ve\downarrow 0} \E_{\alpha\ve} [(T_{\beta\ve})^2]$ and $\limsup_{\ve\downarrow 0} \E_{\beta\ve}[\bigl(\tT_{\alpha\ve}\bigr)^2]$.\\

With the already calculated Green's kernel (\ref{e:green}), we use a generalized version of Kac's moments formula as stated in section 4 of \cite{LLL11} (see also \cite{FP99} for a general extensive analysis) to infer 
\begin{align*}
   \E_{\alpha\ve}[T_{\beta\ve}^2] = 2\int_{0}^{\beta\ve} g(\alpha\ve,y) \, \E_y[T_{\beta\ve}] \, r(y) \, dy.
\end{align*} 
Together with \begin{align*}
	\E_y[T_{\beta\ve}] \le 2 \ve^2 \int_{0}^{\beta} \int_{\wy}^{\beta} \frac{r(\wy\ve)}{p(\ww\ve)} \, d\ww \, d\wy
\end{align*} 
this yields 
\begin{align*}
   \E_{\alpha\ve}[T_{\beta\ve}^2] &\le 4\ve^2 \int_{0}^{\beta} \int_{y}^{\beta} \frac{r(y \ve)}{p(w\ve)} \E_{y\ve}[T_{\beta\ve}] \,dw\,dy \le 2  \biggl[ 2\ve^2 \int_{0}^{\beta} \int_{y}^{\beta} \frac{r(y \ve)}{p(w\ve)} \,dw\,dy \biggr]^2 .\\
   \xrightarrow[\ve\downarrow 0]{} & 2  \biggl[ \frac{2}{\sigma^2} \int_{0}^{\beta} \int_{y}^{\beta} \exp\left(\frac{a}{\sigma^2} \left(1/w-1/y\right)\right) \frac{w^{b/\sigma^2}}{y^{b/\sigma^2+2}} \,dw\,dy \biggr]^2,
\end{align*} 
where we used Lemma~\ref{l:intEst} c) with majorant~(\ref{e:maj}) in the last step. To see the integrability of the majorant (\ref{e:maj}) on the extended integration domain, notice that by L'H\^opital's rule 
\begin{align}\label{e:integrability}
	\lim_{y\downarrow 0} \frac{\int_y^\beta \exp\left(\frac{a}{\sigma^2} \frac{1}{w}\right) w^{\frac{b}{\sigma^2}+\frac{a+\sigma^2}{2\beta \sigma^2}} \,dw}{ \exp\left(\frac{a}{\sigma^2} \frac{1}{y}\right) y^{\frac{b}{\sigma^2}+2+\frac{a+\sigma^2}{2\beta \sigma^2}}} = \lim_{y\downarrow 0} \frac{-\exp\left(\frac{a}{\sigma^2}\frac{1}{y}\right) y^{\frac{b}{\sigma^2} + \frac{a+\sigma^2}{2\beta \sigma^2}}}{\exp\left(\frac{a}{\sigma^2}\frac{1}{y}\right) y^{\frac{b}{\sigma^2} + \frac{a+\sigma^2}{2\beta \sigma^2}}(-\frac{a}{\sigma^2} + (\frac{b}{\sigma^2}+2+\frac{a+\sigma^2}{2\beta \sigma^2})y)}
\end{align}
and therefore
\begin{displaymath}
\lim_{y\downarrow 0} \frac{\int_y^\beta \exp\left(\frac{a}{\sigma^2} \frac{1}{w}\right) w^{\frac{b}{\sigma^2}+\frac{a+\sigma^2}{2\beta \sigma^2}} \,dw}{ \exp\left(\frac{a}{\sigma^2} \frac{1}{y}\right) y^{\frac{b}{\sigma^2}+2+\frac{a+\sigma^2}{2\beta \sigma^2}}} = \frac{\sigma^2}{a}. 
\end{displaymath} 
The required integrability now follows because the integral on the left hand side of \eqref{e:integrability} is bounded near zero. 

Estimating the quotients of $h$-functions by $1$, the second moment of second cycle phase is bounded by 
\begin{align}\label{e:secondmom-down-2}
	\E_{\beta\ve}[(\tT_{\alpha\ve})^2] \le 4 \biggl[ \int_{\alpha\ve}^{\beta\ve} \int_{\alpha\ve}^{y} \frac{r(y)}{p(w)} \E_y[T_{\alpha\ve}] \,dw\,dy  + \int_{\beta\ve}^{z} \int_{\alpha\ve}^{\beta\ve} \frac{r(y)}{p(w)} \E_y[T_{\alpha\ve}] \,dw\,dy\biggr].
\end{align} 
The first integral is readily seen to be finite uniformly in $\ve>0$ by noting $\E_y[T_{\alpha\ve}] \le \E_{\beta\ve}[T_{\alpha\ve}]$. The latter implies that 
\begin{equation*}
\int_{\alpha\ve}^{\beta\ve} \int_{\alpha\ve}^{y} \frac{r(y)}{p(w)} \E_y[T_{\alpha\ve}] \,dw\,dy \leq 2 \mathbb{E}_{\beta\ve}[T_{\alpha\ve}]^2.
\end{equation*}
We have seen before that the last expectation remains bounded. To analyze the second integral on the right hand side of \eqref{e:secondmom-down-2}, we estimate on the integration domain $\alpha\ve \le w \le y \le z$: 
\begin{align}\label{e:estimatebyf}
	\frac{r(y)}{p(w)} \le \sigma^{-2}(y) \exp \left( \ve \int_{\alpha\ve}^y \frac{b_1(l)}{\sigma^2(l)} \,dl \right) \le f(y) :=\begin{cases}
	c_1 y^{-2} & \text{ for } y \le \delta:= \delta_0 \wedge z/2, \\
	c_2        & \text{ for } y \in [\delta,z],
	\end{cases}
\end{align} 
with positive constants 
\begin{align*}
	c_1 &:= \frac{2}{\sigma^2} \exp\left(\frac{3a}{\sigma^2\alpha} \right), \\
	c_2 &:= \sup_{u\in[\delta,z]} \sigma^{-2}(u)  \exp\left(\frac{3a}{\sigma^2\alpha} \right)  \exp\left(\int_\delta^z \frac{b_1(l)}{\sigma^2(l)} \, dl \right).
\end{align*}
This gives
\begin{equation*}
\begin{split}
\int_{\beta\ve}^{z} \int_{\alpha\ve}^{\beta\ve} &\frac{r(y)}{p(w)} \E_y[T_{\alpha\ve}] \,dw\,dy \leq \int_{\beta\ve}^{z} \int_{\alpha\ve}^{\beta\ve} f(y)\E_y[T_{\alpha\ve}] \,dw\,dy \\ 
&= \ve^2 \int_{\beta}^{z / \ve} \int_{\alpha}^{\beta}f(\ve y)\E_{y\ve}[T_{\alpha\ve}] \,dw\,dy = (\beta-\alpha)\ve^2 \int_{\beta}^{z/\ve} f(\ve y)\E_{y\ve}[T_{\alpha\ve}] \,dy.
\end{split}
\end{equation*}
Estimating in our formula for the expectation $\E_{y\ve}[T_{\alpha\ve}]$ the quotient $r/p$ by $f$ as above one sees that it suffices to show 
\begin{align}\label{e:last-estimate-goal}
	\limsup_{\ve\downarrow 0} \ve^4 \int_{\beta}^{z/\ve} f(y\ve)  \biggl[ \int_{\alpha}^{y} \wy \, f(\wy\ve) \,d\wy + \int_{y}^{z/\ve} y \,  f(\wy\ve) \, d\wy  \biggr] \,dy <\infty.
\end{align}
For $y \ge \delta/\ve$ we bound using \eqref{e:estimatebyf} the $y$-integrand by 
\begin{align*}
	f(y\ve)  \biggl[ \int_{\alpha}^{y} \wy \, f(\wy\ve) \,d\wy + \int_{y}^{z/\ve} y \,  f(\wy\ve) \, d\wy  \biggr] &\le c_2  \biggl[ \int_{\alpha}^{\delta/\ve} \wy \, c_1 \frac{1}{\wy^2\ve^2} \,d\wy + \int_{\delta/\ve}^{z/\ve} y \,  c_2 \, d\wy  \biggr] \\ 
	&\le  \frac{c_2c_1}{\ve^2} \ln \left(\frac{\delta}{\alpha\ve}\right) + c_2^2 z \frac{y}{\ve}.
\end{align*} 
It follows 
\begin{equation}\label{e:last-estimate-I}
\begin{split}
	&\ve^4 \int_{\delta/\ve}^{z/\ve} f(y\ve)  \biggl[ \int_{\alpha}^{y} \wy \, f(\wy\ve) \,d\wy + \int_{y}^{z/\ve} y \,  f(\wy\ve) \, d\wy  \biggr] \,dy \\
	& \le c_1 c_2  z \, \ve \ln \left(\frac{\delta}{\alpha\ve}\right) + \frac{1}{2} c_2^2  z^3\,  \ve \xrightarrow[\ve\to 0]{} 0.
\end{split}
\end{equation}
Analogously 
\begin{align*}
	\ve^4 \int_{\beta}^{\delta/\ve} f(y\ve)   \int_{\delta/\ve}^{z/\ve} y \,  f(\wy\ve) \, d\wy   \,dy \le \ve^4 \int_{\beta}^{\delta/\ve} \frac{c_1}{y^2\ve^2} c_2 y \frac{z}{\ve} \, dy = c_1c_2 z \, \ve \, \ln\left(\frac{\delta}{\beta\ve}\right) \to 0.
\end{align*} 
On the other hand we have 
\begin{align*}
	\ve^4 \int_{\beta}^{\delta/\ve} f(y\ve)  \biggl[ \int_{\alpha}^{y} \wy \, f(\wy\ve) \,d\wy + \int_{y}^{\delta/\ve} y \,  f(\wy\ve) \, d\wy  \biggr] \,dy \le c_1^2 \int_{\beta}^{\infty} \frac{1}{y^2} \biggl[ \ln\left(\frac{y}{\alpha}\right) +1  \biggr] \,dy,
\end{align*} which is a finite bound independent of $\ve$. Therefore summing the last two estimates shows that 
\begin{equation}\label{e:last-estimate-II}
\limsup_{\ve \rightarrow 0}	\ve^4 \int_{\beta}^{\delta/\ve} f(y\ve)  \biggl[ \int_{\alpha}^{y} \wy \, f(\wy\ve) \,d\wy + \int_{y}^{z/\ve} y \,  f(\wy\ve) \, d\wy  \biggr] \,dy<\infty.
\end{equation}
As \eqref{e:last-estimate-I} and \eqref{e:last-estimate-II} imply \eqref{e:last-estimate-goal} this finishes the proof.
\end{proof}

It remains to consider an arbitrary starting point $x>0$. (In other words: proving (B1) and (B2).) We first make the following preparation:\\

\begin{lemma}
	In the scaling limit $\lambda \rightarrow \infty$, $\varepsilon \rightarrow 0$ with $\lambda^2p_{\varepsilon,z}=J   \in (0,\infty)$ we have \[ p_{\ve,z} \in O(\ve^{b/\sigma^2 +1})	. \] In particular, $\lim\limits_{scaling}\frac{\ln \ve }{\lambda ^2} = 0$.
\end{lemma}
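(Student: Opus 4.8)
The plan is to read $p_{\ve,z}$ off the scale-function ratio \eqref{e:p_ez} and to estimate its numerator and denominator separately, bounding the former above by a constant times $\ve^{b/\sigma^2+1}$ and the latter below by a positive constant that does not depend on $\ve$.

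First I would estimate the numerator $\int_\ve^{2\ve}1/p_\delta(y)\,dy$. For $\ve$ small enough that $2\ve\le\delta_0$, the whole interval $[\ve,2\ve]$ lies in the Taylor regime of Remark~\ref{r:taylor}, and one writes
\[
\frac{1}{p_\delta(y)}=\exp\Bigl(\ve\int_y^\delta\frac{b_1(l)}{\sigma^2(l)}\,dl\Bigr)\exp\Bigl(-\int_y^\delta\frac{b_2(l)}{\sigma^2(l)}\,dl\Bigr).
\]
For the first factor the crude bounds $b_1(l)\le a+Ml$ and $\sigma^2(l)\ge\tfrac12\sigma^2 l^2$ give $\ve\int_y^\delta\tfrac{b_1}{\sigma^2}\le\frac{2\ve a}{\sigma^2}\bigl(\frac1y-\frac1\delta\bigr)+\frac{2\ve M}{\sigma^2}\ln\frac\delta y$, which stays bounded uniformly for $y\in[\ve,2\ve]$ because $\ve/y\le1$ and $\ve\ln(\delta/\ve)\to0$; so the first factor is $O(1)$. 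For the second factor one needs a first-order-accurate lower bound: combining the Taylor estimates on $b_2$ and on $\sigma^2$ yields $\frac{b_2(l)}{\sigma^2(l)}\ge\frac{b}{\sigma^2 l}-C$ on $(0,\delta_0]$, hence $-\int_y^\delta\tfrac{b_2}{\sigma^2}\le\frac{b}{\sigma^2}\ln\frac y\delta+C\delta$, and the second factor is at most $\mathrm{const}\cdot(y/\delta)^{b/\sigma^2}\le\mathrm{const}\cdot\ve^{b/\sigma^2}$ on $[\ve,2\ve]$. Thus $1/p_\delta(y)\le C\ve^{b/\sigma^2}$ uniformly on $[\ve,2\ve]$, and integrating over an interval of length $\ve$ gives $\int_\ve^{2\ve}1/p_\delta(y)\,dy\le C\ve^{b/\sigma^2+1}$.

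Next I would bound the denominator from below. Since $1/p_\delta\ge0$ and $\ve<\delta$, $\int_\ve^z1/p_\delta(y)\,dy\ge\int_\delta^z1/p_\delta(y)\,dy$; on $[\delta,z]$ one has $\sigma>0$ and $b_1$ bounded, so $\ve\int_\delta^y\tfrac{b_1}{\sigma^2}\le\ve K$ for a fixed $K$, while $\int_\delta^y\tfrac{b_2}{\sigma^2}\ge0$ because $b_2\ge0$; hence $1/p_\delta(y)\ge e^{-K}$ for $\ve\le1$ and the denominator is $\ge e^{-K}(z-\delta)>0$ uniformly in small $\ve$ (this is \eqref{e:p_ezDen}). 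Dividing, $p_{\ve,z}\le C'\ve^{b/\sigma^2+1}$, which is the first claim. The ``in particular'' then follows at once: along the scaling curve $\lambda^2p_{\ve,z}=J$, so $\frac{\ln\ve}{\lambda^2}=\frac{(\ln\ve)\,p_{\ve,z}}{J}$, and since $b/\sigma^2+1>0$ we have $\ve^{b/\sigma^2+1}|\ln\ve|\to0$ as $\ve\downarrow0$, whence $\ln\ve/\lambda^2\to0$.

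The only delicate point is the numerator estimate: one must exploit the two-sided Taylor information (in particular combine the bounds on $b_2$ and on $\sigma^2$) to pin down the exponent $b/\sigma^2$ exactly; the one-sided estimate underlying \eqref{e:p_ezNum} would only give the weaker exponent $b/(3\sigma^2)$, which is still enough for the ``in particular'' clause but not for the stated order $O(\ve^{b/\sigma^2+1})$. Everything else is elementary bookkeeping.
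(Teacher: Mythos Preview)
Your proof is correct and follows essentially the same route as the paper's: bound the denominator of \eqref{e:p_ez} away from zero via \eqref{e:p_ezDen}, and estimate the numerator on $[\ve,2\ve]$ using the Taylor information of Remark~\ref{r:taylor} to extract the factor $\ve^{b/\sigma^2+1}$. The only cosmetic difference is that the paper obtains the sharp exponent through the partial-fraction-type bounds in the spirit of Lemma~\ref{l:intEst} and then invokes dominated convergence, whereas you reach it more directly via the pointwise estimate $b_2(l)/\sigma^2(l)\ge b/(\sigma^2 l)-C$; your closing remark that the cruder bound behind \eqref{e:p_ezNum} would only give exponent $b/(3\sigma^2)$ is exactly the point.
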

\begin{proof}
	By inequality~(\ref{e:p_ezDen}) the denominator in expression~(\ref{e:p_ez}) is bounded away from $0$. For the numerator, assuming $\ve < \delta_0/2$ we are entirely in the regime, where the approximations of coefficient functions given in Remark~\ref{r:taylor} hold. It follows \begin{align*}
		&\int_\ve^{2\ve} \exp\left(-\ve \int_{\delta_0}^y \frac{b_1(l)}{\sigma^2(l)}\, dl\right) \exp\left(\int_{\delta_0}^y \frac{b_2(l)}{\sigma^2(l)} \,dl\right)\, dy \\
		&\le \ve \int_1^{2} \left(\frac{\sigma^2-My\ve}{\sigma^2 -M\delta_0}\right)^{\ve M (a+\sigma^2)/\sigma^4} \left(\frac{\delta_0}{y\ve}\right)^{\ve M(a+\sigma^2)/\sigma^4} \\
		&\qquad\qquad \times \exp\left(\frac{a}{\sigma^2}\left(\frac{1}{y}-\frac{\ve}{\delta_0}\right)\right)  \left(\frac{y\ve}{\delta_0}\right)^{b/\sigma^2} \left(\frac{\sigma^2+M\delta_0}{\sigma^2+My\ve}\right)^{b/\sigma^2+1}    \, dy.
	\end{align*} An application of the dominated convergence theorem finishes the proof.
\end{proof}
\begin{proposition}[First part of (B1)] \label{p:B1p1Proof}
	\[	\lim_{scaling}\E_x[T_\ve \wedge T_z] = 0 \text{ for } 0 < x <z. \]
\end{proposition}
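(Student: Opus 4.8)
The plan is to reduce the statement to an \emph{a priori} growth bound for the expected exit time of $X^1$ from the interval $(\ve,z)$ and then to kill that bound with the prefactor $\lambda^{-2}$. Recall $\alpha(\ve)=\ve$. Since the process $X$ and the unit-noise process $X^1$ are related by the deterministic time change $X_t=X^1_{\lambda^2 t}$, we have $T_\ve\wedge T_z=\lambda^{-2}\bigl(T_\ve(X^1)\wedge T_z(X^1)\bigr)$ and hence
\[
\E_x[T_\ve\wedge T_z]=\lambda^{-2}\,\E_x\bigl[T_\ve(X^1)\wedge T_z(X^1)\bigr].
\]
By the preceding Lemma, $\lambda^{-2}=p_{\ve,z}/J=O(\ve^{b/\sigma^2+1})$, so that $\lambda^{-2}\,\lvert\ln\ve\rvert\to0$ along the scaling curve. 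It therefore suffices to establish
\begin{equation}\label{e:B1p1-apriori}
\E_x\bigl[T_\ve(X^1)\wedge T_z(X^1)\bigr]=O\bigl(\ln(1/\ve)\bigr)\qquad(\ve\downarrow0);
\end{equation}
in fact any rate $o(\ve^{-b/\sigma^2-1})$ would already be enough.

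For \eqref{e:B1p1-apriori} I would use the occupation-time formula exactly as in the proof of the Proposition implying (A2): with $f\equiv1$,
\[
\E_x\bigl[T_\ve(X^1)\wedge T_z(X^1)\bigr]=\int_\ve^z G_{(\ve,z)}(x,y)\,m(dy),\qquad m(dy)=2r(y)\,dy,
\]
where $G_{(\ve,z)}$ is the Green kernel of $L$ (with $\lambda=1$) on $(\ve,z)$ killed at \emph{both} endpoints, namely
\[
G_{(\ve,z)}(x,y)=\frac{\bigl(s(x\wedge y)-s(\ve)\bigr)\bigl(s(z)-s(x\vee y)\bigr)}{s(z)-s(\ve)},\qquad s(y)=\int_c^y\frac{dl}{p(l)} .
\]
Monotonicity of $s$ gives $0\le G_{(\ve,z)}(x,y)\le s(x\wedge y)-s(\ve)=\int_\ve^{x\wedge y}1/p(l)\,dl$. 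The crux is to balance this against the speed density near the (inaccessible but slowly approached) boundary $0$. Applying Remark~\ref{r:taylor} together with the partial-fraction computations used in the proof of Lemma~\ref{l:intEst}, one obtains, uniformly for small $\ve$,
\[
\frac{1}{p(l)}\le C_1\,l^{b/\sigma^2}\quad(\ve\le l\le\delta_0),\qquad r(y)=\frac{p(y)}{\sigma^2(y)}\le C_2\,y^{-b/\sigma^2-2}\quad(0<y\le\delta_0),
\]
where for the first estimate $\exp(\tfrac{\ve a}{\sigma^2 l})\le\exp(a/\sigma^2)$ is controlled because $l\ge\ve$ and the remaining $\ve$-dependent exponent corrections are of the form $(\delta_0/l)^{O(\ve)}\le(\delta_0/\ve)^{O(\ve)}\to1$, and for the second $p(y)\le C\,y^{-b/\sigma^2}$ since $\exp(-\ve a/(\sigma^2 y))\le1$ while $\sigma^2(y)\ge\sigma^2 y^2/2$ on $(0,\delta_0]$ by Remark~\ref{r:taylor}. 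Thus $s(y)-s(\ve)\le C_1'\,y^{b/\sigma^2+1}$ for $\ve\le y\le\delta_0$, while on $[\delta_0,z]$ both $G_{(\ve,z)}$ and $m$ stay bounded uniformly in $\ve$ (here $\int_\ve^z 1/p$ remains bounded above and $\sigma^2$ is bounded below away from $0$, cf.\ \eqref{e:p_ezDen}). Splitting the integral at $\delta_0$,
\[
\E_x\bigl[T_\ve(X^1)\wedge T_z(X^1)\bigr]\le C_0+2C_1'C_2\int_\ve^{\delta_0}y^{b/\sigma^2+1}\,y^{-b/\sigma^2-2}\,dy=C_0+2C_1'C_2\,\ln(\delta_0/\ve),
\]
which is \eqref{e:B1p1-apriori}: the exact cancellation of the powers — the Green kernel growing like $y^{b/\sigma^2+1}$ against the speed density blowing up like $y^{-b/\sigma^2-2}$ — turns a genuine divergence into a mere logarithm.

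Combining the two displays with $\lambda\to\infty$ and $\lambda^{-2}\ln(1/\ve)\to0$ yields $\E_x[T_\ve\wedge T_z]\to0$; Markov's inequality then upgrades this to the distributional statement $T_{\alpha(\ve)}\wedge T_z\xrightarrow{\mathcal D}0$ asked for in the first half of (B1). The main obstacle, and the only genuinely new feature compared with the cycle-length estimates of the earlier propositions, is that $\E_x[T_\ve(X^1)\wedge T_z(X^1)]$ does \emph{not} remain bounded as $\ve\downarrow0$ — the diffusion lingers ever longer near $0$ — so one must quantify the divergence and verify it is slow enough to be absorbed by $\lambda^{-2}$; the Green-kernel/speed-measure balance above does precisely that.
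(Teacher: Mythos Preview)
Your proof is correct and follows essentially the same route as the paper's. Both arguments express $\E_x[T_\ve\wedge T_z]$ via the Green kernel of $L$ on $(\ve,z)$, bound the kernel by $s(x\wedge y)-s(\ve)$, use the Taylor estimates of Remark~\ref{r:taylor} to show that the singularity of the speed density near $0$ is compensated by the vanishing of the scale increment so that the unit-noise expectation diverges only like $\ln(1/\ve)$, and then invoke the preceding Lemma ($p_{\ve,z}\in O(\ve^{b/\sigma^2+1})$, hence $\lambda^{-2}\ln\ve\to0$) to conclude. The paper carries this out by writing the Green integral as a sum of two double integrals $\int\!\!\int r/p$ and decomposing the dangerous one at $\delta=\delta_0\wedge x/2$ into three explicit pieces, tracking the $\ve$-dependent correction factors $(\delta/\ve)^{O(\ve)}$ by hand; your version is somewhat more streamlined in that you bound $G_{(\ve,z)}(x,y)\le C\,y^{b/\sigma^2+1}$ and $r(y)\le C\,y^{-b/\sigma^2-2}$ directly and read off the $\int y^{-1}$ cancellation, but the content is the same.
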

\begin{proof}
    Making use of Green's kernel $g(x,y) = \frac{2}{K\lambda^2} u(x\wedge y) v(x\vee y)$, where $K := \int_{\ve}^{z}\frac{1}{p(w)}\,dw$, $u(x) := \int_{\ve}^{x} 1/p(w)\,dw$ and $v(x):= \int_{x}^{z} 1/p(w) \,dw$ we write \begin{align*}
    	\E_x[T_\ve \wedge T_z] &= \frac{2}{K \lambda^2} \left[ v(x)\int_{\ve}^{x}\int_{\ve}^{y} \frac{r(y)}{p(w)} \,dw \, dy + u(x)\int_{x}^{z}\int_{y}^{z} \frac{r(y)}{p(w)} \,dw \, dy \right]\\
    	&\le \frac{2}{\lambda^2} \left[ \int_{\ve}^{x}\int_{\ve}^{y} \frac{r(y)}{p(w)} \,dw \, dy + \int_{x}^{z}\int_{y}^{z} \frac{r(y)}{p(w)} \,dw \, dy \right]
    \end{align*} with integrand \begin{align*}
    	\frac{r(y)}{p(w)} = \frac{1}{\sigma^2(y)} \exp\left(\ve \int_w^y \frac{b_1(l)}{\sigma^2(l)}\, dl\right) \exp\left(\int_y^w \frac{b_2(l)}{\sigma^2(l)}\,dl\right).
    \end{align*} Since on the integration domain of the second integral $w\ge y$ holds, the exponential with the $\ve$ term in it is bounded by $1$ and the integral is overall bounded. It follows that the second integral will vanish in the scaling limit. The first integral may be decomposed in \begin{align} \label{e:int0Decomp}
    	\int_{\ve}^{x}\int_{\ve}^{y} \frac{r(y)}{p(w)} \,dw \, dy = \int_{\ve}^{\delta}\int_{\ve}^{y} \frac{r(y)}{p(w)} \,dw \, dy + \int_{\delta}^{x}\int_{\ve}^{\delta} \frac{r(y)}{p(w)} \,dw \, dy + \int_{\delta}^{x}\int_{\delta}^{y} \frac{r(y)}{p(w)} \,dw \, dy
    \end{align} with $\delta:= \delta_0 \wedge x/2$. The last one is bounded, since the $\ve$-exponential is monotonically decreasing. For the second one we note \begin{align*}
    	&\int_{\delta}^{x}\int_{\ve}^{\delta} \frac{r(y)}{p(w)} \,dw \, dy \\
    	&\le \exp\left(\ve \int_{\delta}^x \frac{b_1(l)}{\sigma^2(l)}\, dl\right) \sup_{z\in[\delta,x]}  \sigma^{-2}(z)  \int_{\delta}^{x}\int_{\ve}^{\delta}  \exp\left(\ve \int_w^{\delta} \frac{b_1(l)}{\sigma^2(l)}\, dl\right)   \,dw \, dy;
    \end{align*} the remaining integrand being bounded by \[
    \left(\frac{\sigma^2-Mw}{\sigma^2-M\delta} \frac{\delta}{w}\right)^{\ve M(a+\sigma^2)/\sigma^4} \exp\left(\frac{a}{\sigma^2}\ve(w^{-1}-\delta^{-1})\right) \le \left( \frac{2\delta}{\ve}\right)^{\ve M(a+\sigma^2)/\sigma^4} \exp\left(\frac{a}{\sigma^2}\right) .
    \] For the first integral in expression~(\ref{e:int0Decomp}) we attain the estimate
    \begin{align*}
    	&\int_{\ve}^{\delta}\int_{\ve}^{y} \frac{r(y)}{p(w)} \,dw \, dy \le 2e^{a/\sigma^2} \int_{1}^{\delta/\ve}\int_{1}^{y} \frac{1}{y^2} \left(\frac{2\delta}{\ve}\right)^{\ve M(a+\sigma^2)/\sigma^4} \,dw \, dy.
    \end{align*}  For $\ve$ sufficiently small, this is \begin{align*}
    	&\le 4e^{a/\sigma^2} \int_{1}^{\delta/\ve}\int_{1}^{y} \frac{1}{y^2}  \,dw \, dy \le 4e^{a/\sigma^2} \int_{1}^{\delta/\ve} \frac{1}{y}   \, dy = 4e^{a/\sigma^2} \ln \delta + 4e^{a/\sigma^2} \ln \frac{1}{\ve}
    \end{align*} which proves the asserted limit of the product by the previously shown scaling limit $\frac{\ln \ve}{\lambda^2}\to 0$.
\end{proof}

In order to finish the proof of (B1) we first show (B2).
\begin{proposition} [B2] \label{p:B2proof}
	\[\P_x (T_\ve < T_z) \xrightarrow[\ve\to 0]{} \frac{
		\int_x^z \exp \left(-\int_y^z \frac{b_2(l)}{\sigma^2(l)} \,dl\right) \,dy}{\int_0^z \exp \left(-\int_y^z \frac{ b_2(l)}{\sigma^2(l)} \,dl\right) \,dy} \qquad  \text{ for } 0< x < z.\]
\end{proposition}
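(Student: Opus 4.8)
The plan is to express $\P_x(T_\ve < T_z)$ through the scale function recalled in the Remark above and then pass to the limit $\ve\downarrow 0$ in the resulting ratio of integrals. Normalizing the scale function at the base point $c = z$ (the ratio below does not depend on this choice, since changing $c$ only multiplies $1/p_c$ by a positive constant), the identity $\P_x(T_R<T_r)=(s(x)-s(r))/(s(R)-s(r))$ for $0<r<x<R$, applied with $r=\ve$ and $R=z$, gives for $0<\ve<x<z$
\[
\P_x(T_\ve < T_z) = \frac{s(z)-s(x)}{s(z)-s(\ve)} = \frac{\int_x^z 1/p(w)\,dw}{\int_\ve^z 1/p(w)\,dw},\qquad \frac1{p(w)} = \exp\left(\int_w^z\frac{\ve b_1(l)-b_2(l)}{\sigma^2(l)}\,dl\right).
\]
Writing $1/p(w) = \exp\bigl(\ve\int_w^z b_1/\sigma^2\,dl\bigr)\cdot\exp\bigl(-\int_w^z b_2/\sigma^2\,dl\bigr)$, the claim reduces to showing that the numerator tends to $\int_x^z\exp(-\int_y^z b_2/\sigma^2\,dl)\,dy$ and the denominator to $\int_0^z\exp(-\int_y^z b_2/\sigma^2\,dl)\,dy$, the latter being finite and strictly positive.

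For the numerator this is immediate: on $[x,z]$ with $x>0$ fixed, $\sigma$ is bounded away from zero, so $\int_w^z b_1/\sigma^2\,dl$ is bounded there, hence $\exp(\ve\int_w^z b_1/\sigma^2\,dl)\to1$ uniformly on $[x,z]$ and one passes to the limit under the integral sign. For the denominator I would split $\int_\ve^z = \int_\ve^\delta + \int_\delta^z$ with $\delta := \delta_0\wedge z/2$ and $\delta_0$, $M$ as in Remark~\ref{r:taylor}; the piece $\int_\delta^z$ is treated exactly like the numerator. On $\int_\ve^\delta$ I would use the Taylor bounds of Remark~\ref{r:taylor} in the spirit of \eqref{e:p_ezNum}: factoring $1/p(w) = \exp\bigl(\int_w^\delta\frac{\ve b_1-b_2}{\sigma^2}\,dl\bigr)\cdot\exp\bigl(\int_\delta^z\frac{\ve b_1-b_2}{\sigma^2}\,dl\bigr)$, the second factor converges to the positive constant $\exp(-\int_\delta^z b_2/\sigma^2\,dl)$, while the first is dominated on $(0,\delta]$, uniformly in small $\ve$, by an integrable function: the contribution $\ve\int_w^\delta b_1/\sigma^2\,dl$ is bounded by $\tfrac{2(a+M\delta_0)}{\sigma^2}$ because $w\ge\ve$ on the integration domain, and $-\int_w^\delta b_2/\sigma^2\,dl$ is bounded by a constant multiple of $\ln(w/\delta)$, yielding a majorant of the form $C(w/\delta)^{b/(3\sigma^2)}$, which is integrable near zero. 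Dominated convergence (the domains increasing to $(0,\delta]$) then gives $\int_\ve^\delta 1/p(w)\,dw\to\int_0^\delta\exp(-\int_w^z b_2/\sigma^2\,dl)\,dw$, a finite quantity because near zero the integrand is itself $\le C'w^{b/(3\sigma^2)}$. Adding the two pieces identifies the limit of the denominator with $\int_0^z\exp(-\int_y^z b_2/\sigma^2\,dl)\,dy\in(0,\infty)$, and dividing the numerator limit by the denominator limit gives the assertion.

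The only genuine work is producing the uniform integrable majorant for $\int_\ve^\delta 1/p(w)\,dw$; but this is a near-verbatim repetition of the estimates already carried out for $p_{\ve,z}$ in \eqref{e:p_ezNum}--\eqref{e:p_ezDen} and in Lemma~\ref{l:intEst}, so no new difficulty arises. The one point to keep in mind is that the constraint $w\ge\ve$ on the domain of the denominator integral is exactly what prevents the term $\ve\int_w^\delta b_1/\sigma^2\,dl$ from blowing up as $\ve\downarrow0$, and that the exponent $b/\sigma^2>0$ (equivalently $>-1$) is what makes the limiting denominator integrable at the origin.
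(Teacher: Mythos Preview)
Your proof is correct and follows essentially the same route as the paper: express $\P_x(T_\ve<T_z)$ via the scale function, pass to the limit in the numerator by dominated convergence on the fixed interval $[x,z]$, and for the denominator use the Taylor bounds of Remark~\ref{r:taylor} to produce a uniform integrable majorant on $(0,\delta]$. Your write-up is in fact more explicit than the paper's at the key step, spelling out that $w\ge\ve$ forces $\ve\int_w^\delta b_1/\sigma^2\,dl$ to stay bounded and that the $b_2$-contribution yields the power $(w/\delta)^{b/(3\sigma^2)}$; the paper compresses this into a single displayed estimate invoking the same partial-fraction calculus as in \eqref{e:p_ezNum}.
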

\begin{proof} We first recall that
	\begin{align*} \P_x (T_\ve < T_z) = \frac{
		\int_x^z \exp \left(\int_y^z \frac{\ve b_1(l) -b_2(l)}{\sigma^2(l)} \,dl\right) \,dy}{\int_\ve^z \exp \left(\int_y^z \frac{\ve b_1(l) -b_2(l)}{\sigma^2(l)} \,dl\right) \,dy}.
	\end{align*} 
	By an direct application of dominated convergence \begin{align}\label{e:b2Num}
	\lim_{\ve\downarrow 0} \int_x^z \exp \left(\int_y^z \frac{\ve b_1(l) -b_2(l)}{\sigma^2(l)} \,dl\right) \,dy = \int_x^z \exp \left(-\int_y^z \frac{b_2(l)}{\sigma^2(l)} \,dl\right) \,dy.
	\end{align} Since \begin{align*}
	&\int_{\ve}^{\delta_0} \exp \left(\ve \int_y^z \frac{ b_1(l) }{\sigma^2(l)} \,dl\right) \,dy \\
	&\le \exp \left(\ve \int_{\delta_0}^{z}\frac{b_1(l)}{\sigma^2(l)} \,dl \right) \int_{\ve}^{\delta_0} \left(\frac{2\delta_0}{\ve}\right)^{\ve M (a+\sigma^2)/\sigma^4} \exp(a/\sigma^2) \, dy
	\end{align*} we obtain \begin{align}\label{e:b2Den}
		\lim_{\ve \rightarrow 0} \int_\ve^z \exp \left(\int_y^z \frac{\ve b_1(l) -b_2(l)}{\sigma^2(l)} \,dl\right) \,dy = \int_0^z \exp \left(-\int_y^z \frac{ b_2(l)}{\sigma^2(l)} \,dl\right) \,dy
	\end{align} Both assertions \eqref{e:b2Num} and \eqref{e:b2Den} together imply the Proposition.
\end{proof}

We now complete the discussion of the example with
\begin{proposition}[Finishing (B1)]
	\[	\lim_{scaling}\E_x[\tT_\ve ] = 0 \text{ for } x >0. \]
\end{proposition}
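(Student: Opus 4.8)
The plan is to follow the same Green's-kernel strategy that was used for $\E_{\beta\ve}[\tT_{\alpha\ve}]$, but now starting from a fixed point $x>0$ rather than from $\beta\ve$, and to show that the resulting quantity is $O(\lambda^{-2}\ln(1/\ve))$, which vanishes in the scaling limit by the lemma establishing $\ln\ve/\lambda^2\to 0$. Concretely, $\tT_\ve = T_\ve(\tX)$ is the hitting time of $\ve$ for the process conditioned on $\{T_\ve < T_z\}$, whose generator is the $h$-transform $L^h$ with $h(x) = \P_x(T_\ve<T_z) = \int_x^z 1/p(l)\,dl \,/\, \int_\ve^z 1/p(l)\,dl$ exactly as in \eqref{e:harm-hitting}. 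Writing the expectation via the Green kernel $g^h$ of $L^h$ on the interval $(\ve,z)$ gives
\[
\E_x[\tT_\ve] = 2\int_\ve^z g^h(x,y)\,r^h(y)\,dy,
\]
and since $r^h/p^h = (r/p)(h(y)/h(w))^2$ with the $h$-quotient bounded by $1$ on the relevant domain (the hit of $\ve$ forces $w\le y$ there, and $h$ is non-increasing), one bounds $\E_x[\tT_\ve]$ from above by the same expression with all $h$-factors removed, i.e. by (a constant times) $\lambda^{-2}$ times a double integral of $r(y)/p(w)$ over $\ve<w,y<z$.

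Next I would split this double integral in the same way as in the proof of Proposition~\ref{p:B1p1Proof}: a part where both variables stay in $[\delta,z]$ with $\delta := \delta_0\wedge(\text{something below }x\text{ and }z)$, which is bounded uniformly in $\ve$ because there the exponential carrying the $\ve b_1$ term is $\le 1$ on $w\ge y$ and monotone otherwise; a "mixed" part with one variable in $[\ve,\delta]$ and one in $[\delta,z]$; and the genuinely singular part with both variables in $[\ve,\delta]$. On the last, Remark~\ref{r:taylor} gives $r(y)/p(w) \lesssim y^{-2}\exp\!\big(\tfrac{a}{\sigma^2}(1/w-1/y)\big)(w/y)^{b/\sigma^2}$ up to the slowly varying factor $(2\delta/\ve)^{\ve M(a+\sigma^2)/\sigma^4}\to 1$; after rescaling $y = \ve\hat y$, $w=\ve\hat w$, integrating the $\hat w$-integral produces at worst a $\ln(\delta/\ve)$, and then $\int_1^{\delta/\ve} \hat y^{-1}\,d\hat y = \ln(\delta/\ve)$, so the whole singular part is $O((\ln(1/\ve))^2)$ — but crucially with a bounded coefficient. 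The mixed part contributes a single $\ln(1/\ve)$. Thus $\E_x[\tT_\ve] = O\!\big(\lambda^{-2}(\ln(1/\ve))^2\big)$, and since $\lambda^2 p_{\ve,z} = J$ with $p_{\ve,z}\in O(\ve^{b/\sigma^2+1})$ forces $\lambda^2 \gtrsim \ve^{-(b/\sigma^2+1)}$, the factor $(\ln(1/\ve))^2/\lambda^2$ tends to $0$; in fact $(\ln\ve)^2/\lambda^2\to 0$ follows from $\ln\ve/\lambda^2\to 0$ together with $\lambda\to\infty$ since $(\ln\ve)^2/\lambda^2 = (\ln\ve/\lambda^2)\cdot(\ln\ve)$ and one must check $\ln\ve$ grows slower than $\lambda^2/\ln\ve$ — equivalently just bound $\E_x[\tT_\ve] \lesssim \lambda^{-2}\ln(1/\ve)$ directly by being a bit more careful, since the doubly-logarithmic term only appears on the far piece $[\delta/\ve, z/\ve]$ which is actually $O(\ve\ln(1/\ve))$ as in \eqref{e:last-estimate-I} and so is negligible.

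The main obstacle is bookkeeping: one must verify that the integrable-majorant argument survives the extra $h$-factors (it does, because they are $\le 1$) and, more delicately, that the upper endpoint of the inner $y$-integral is $z$ rather than $z/\ve$ — here, unlike the cycle computation $\E_{\beta\ve}[\tT_{\alpha\ve}]$ which ran up to $z/\ve$ after rescaling, the level $z$ is fixed and $x$ is fixed, so the "outer" region $[\delta,z]$ is a fixed compact set and contributes only a bounded constant, while all the $\ve$-dependence is concentrated in the small interval $[\ve,\delta]$ where the rescaling $y\mapsto y/\ve$ applies. Once this is organized, one concludes $\E_x[\tT_\ve] \le C\lambda^{-2}\big(1 + \ln(1/\ve)\big)$ for a constant $C = C(x,z)$ independent of $\ve$, and the already-established limit $\ln\ve/\lambda^2\to 0$ yields $\lim_{scaling}\E_x[\tT_\ve] = 0$. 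Combined with Proposition~\ref{p:B1p1Proof}, this completes the verification of (B1), and hence — together with Proposition~\ref{p:B2proof} — all the hypotheses (A1)--(A3), (B1), (B2) hold for the example class, so Theorem~\ref{t:mainThm} applies.
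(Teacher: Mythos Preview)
Your approach is correct in principle but takes a far more laborious route than the paper. The paper's proof is two lines:
\[
\E_x[\tT_\ve] \;=\; \E_x[T_\ve \mid T_\ve < T_z] \;=\; \frac{\E_x[T_\ve\,\ind_{\{T_\ve<T_z\}}]}{\P_x(T_\ve<T_z)} \;\le\; \frac{\E_x[T_\ve\wedge T_z]}{\P_x(T_\ve<T_z)} \xrightarrow[\text{scaling}]{} 0,
\]
using that on $\{T_\ve<T_z\}$ one has $T_\ve = T_\ve\wedge T_z$. The numerator tends to $0$ by Proposition~\ref{p:B1p1Proof} and the denominator stays bounded away from $0$ by Proposition~\ref{p:B2proof}. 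No Green-kernel computation, no $h$-transform analysis, no logarithmic bookkeeping is needed.

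Your route---rewriting $\E_x[\tT_\ve]$ via the Green kernel of $L^h$, bounding $(h(y)/h(w))^2\le 1$, and then redoing the integral estimates---does work, and in fact after you drop the $h$-factors your first double integral $\int_\ve^x\int_\ve^y r(y)/p(w)\,dw\,dy$ is literally the first integral already estimated in the proof of Proposition~\ref{p:B1p1Proof}, which is where the $\ln(1/\ve)$ comes from. Your $(\ln(1/\ve))^2$ is an overestimate arising from not exploiting the factor $\exp\bigl(\tfrac{a}{\sigma^2}(1/w-1/y)\bigr)$ sharply; a cleaner argument gives $O(\ln(1/\ve))$ as in that proof, and then $\ln\ve/\lambda^2\to 0$ suffices without any side discussion. (Even the cruder $(\ln(1/\ve))^2$ bound is enough, since $\lambda^2\gtrsim \ve^{-(b/\sigma^2+1)}$ kills any polylog.) So your argument can be completed, but it essentially re-derives Proposition~\ref{p:B1p1Proof} inside the $h$-transformed setting, whereas the paper simply \emph{quotes} that proposition via the elementary conditioning inequality above.
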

\begin{proof}

    By Propositions~\ref{p:B1p1Proof} and \ref{p:B2proof}
    \begin{align*}
    	\E_x[\tT_{\ve}] = \frac{\E_x[T_\ve \ind_{\lbrace T_\ve < T_z \rbrace} ]}{\P_x(T_\ve < T_z)} \le \frac{\E_{x}[T_\ve \wedge T_z]}{\P_x(T_\ve < T_z)} \xrightarrow[\ve\downarrow 0]{} 0.
    \end{align*}
\end{proof}

\subsection{Homodyne detection of Rabi oscillation}
As is carefully described in \cite{BB17} an analysis of homodyne detection of Rabi oscillations leads to the following stochastic differential equation on the state space $\Theta=(0,2 \pi)$
\begin{equation}\label{e:RabiSDE}
d\theta_t=-\lambda^2 \sin\theta_t\bigl(1-\cos\theta_t\bigr)\,dt+\lambda\bigl(1-\cos\theta_t\bigr)dB_t
\end{equation}
Following a suggestion of \cite[sections 2.3, 6.2]{BB17} we investigate a 'linearized' version of \eqref{e:RabiSDE}, i.e. the case where in \eqref{e:mainSDE}
\[
b_1(x) = 1, \qquad  b_2(x) = b\cdot x, \qquad \sigma(x) = x^2,
\] 
with $b>0$ some positive real number. Note, in this model $\sigma^2(x)=x^4$.

\begin{remark}[Heuristics for the choice of $\alpha$ and $\beta$]
One way to guess the form of the functions $\alpha$ and $\beta$ appearing in the cycle decomposition is the following. First it is of course natural to assume that the point, where the drift changes sign does play a specific role. Therefore, let us define $\alpha(\varepsilon):=\varepsilon/b$. In order to get an idea, of how to choose $\beta(\varepsilon)$ one can e.g. first transform the stochastic differential equation using a transformation going back to at least to Feller \cite{F52}. We replace $X_t$ by $Y_t := F(X_t)$, where $F(x) := \int_{\infty}^{x} \frac{1}{\sigma(u)} \, du = -1/x$. According to It\^o's lemma the SDE then becomes 
\[
dY_t := \frac{1}{2} \left( \ve Y_t^2 + b Y_t + \frac{1}{Y_t} \right) \, dt + dB_t.
\] 
Thus we end up with diffusion process with unit diffusion coefficient. For the diffusion $X$ started from $\alpha(\varepsilon)$ to complete a cycle it has to get from $\alpha(\varepsilon)$ to  $\beta(\varepsilon)$ and back. During a downcrossing from $F(\beta(\varepsilon))$ to $F(\alpha(\varepsilon))$ one make use of the fact that the drift always points in towards  $F(\alpha(\varepsilon))$ and it turns out that the deterministic part is strong enough to get finite expectation for the part of the cycle. 
The diffusion $Y$ makes also an upcrossing from $F(\alpha(\varepsilon))$ to $F(\beta(\varepsilon))$ during a cycle of $X$. During such an upcrossing the drift in the equation of $Y$ is of order $\varepsilon$ near $\alpha(\varepsilon)$ and therefore the Brownian part has be essential to complete this part of the cycle sufficiently fast. Therefore, it seems reasonable to take $\beta(\varepsilon)=\alpha(\varepsilon)+\varepsilon^2$ as this gives
\begin{displaymath}
F(\beta(\varepsilon)-F(\alpha(\varepsilon)=\frac{b^2}{1+\varepsilon\,b}.
\end{displaymath}
The exit times of Brownian motion from bounded sets have moments of all order, thus this might be a reasonable first guess. Working with $\beta(\varepsilon)=\alpha(\varepsilon)+\varepsilon$ in contrast leads to a distance $F(\beta(\varepsilon))-F(\alpha(\varepsilon))$, which is of order $\varepsilon^{-1}$ and therefore the expected time to complete this part of the cycle can be expected to diverge with $\varepsilon \rightarrow 0$.
\end{remark}

We now show, that Theorem~\ref{t:mainThm} applies to this situation, which means that we need to prove $(A2), (A3), (B1)$ and $(B2)$ for 
\[
\alpha(\ve) := \ve/b, \qquad \beta(\ve) := \ve/b + \ve^2.
\] 
By Taylor's theorem, for $x \ge 1/b$ 
\begin{align} \label{e:taylor_le}
	\frac{1}{3x^3} - \frac{b}{2x^2} \le -\frac{b^3}{6} + \frac{b^5}{2} \left(x-\frac{1}{b}\right)^2
\end{align} 
and 
\begin{align} \label{e:taylor_ge}
	\frac{1}{3x^3} - \frac{b}{2x^2} \ge -\frac{b^3}{6} + \frac{b^5}{2} \left(x-\frac{1}{b}\right)^2 - \frac{4b^6}{3} \left(x-\frac{1}{b}\right)^3.
\end{align}
As preparation for the following proofs we start with 
\begin{lemma} \label{l:hitProb}
	For $l\ge 0$ \[
	\P_{\ve/b + l\ve^2} (T_z < T_{\ve/b}) \sim \ve^2 \exp\left(-\frac{1}{\ve^2} \frac{b^3}{6}\right)\frac{\int_{0}^{l} \exp\left(\frac{b^5}{2}x^2 \right) \,dx}{\int_{0}^{z} \exp\left(- \frac{b}{2x^2}\right) \,dx}  \qquad \text{as }\; \ve \downarrow 0 .
	\]
\end{lemma}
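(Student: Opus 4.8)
The plan is to reduce everything to the explicit scale function and then analyse separately the numerator and denominator of the hitting-probability formula. From the Remark on the generator, for $0<r<x<R$ we have $\P_x(T_R<T_r)=(s(x)-s(r))/(s(R)-s(r))$; in the present model $b_1\equiv 1$, $b_2(x)=bx$, $\sigma^2(x)=x^4$, so $s'(y)=1/p(y)$ equals, up to a multiplicative constant that cancels in every ratio, $\exp\!\left(\frac{\ve}{3y^3}-\frac{b}{2y^2}\right)$. Taking $r=\ve/b$, $x=\ve/b+l\ve^2$, $R=z$ reduces the claim to determining the $\ve\downarrow 0$ asymptotics of
\[
\P_{\ve/b+l\ve^2}(T_z<T_{\ve/b})=\frac{\int_{\ve/b}^{\ve/b+l\ve^2}\exp\!\left(\frac{\ve}{3y^3}-\frac{b}{2y^2}\right)dy}{\int_{\ve/b}^{z}\exp\!\left(\frac{\ve}{3y^3}-\frac{b}{2y^2}\right)dy}.
\]

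For the numerator I would substitute $y=\ve/b+x\ve^2$, turning it into $\ve^2\int_0^l\exp\!\left(\frac{1}{\ve^2}\big(\frac{1}{3(1/b+x\ve)^3}-\frac{b}{2(1/b+x\ve)^2}\big)\right)dx$. Since $1/b+x\ve\ge 1/b$ on $[0,l]$, the Taylor bounds \eqref{e:taylor_le} and \eqref{e:taylor_ge}, applied with their variable equal to $1/b+x\ve$, sandwich the exponent between $-\frac{b^3}{6\ve^2}+\frac{b^5}{2}x^2-\frac{4b^6}{3}\ve x^3$ and $-\frac{b^3}{6\ve^2}+\frac{b^5}{2}x^2$; since $e^{-\frac{4b^6}{3}\ve x^3}\uparrow 1$ uniformly on the compact interval $[0,l]$, dominated convergence gives that the numerator is asymptotically $\ve^2 e^{-b^3/(6\ve^2)}\int_0^l e^{b^5x^2/2}dx$.

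For the denominator I would apply dominated convergence directly in $y$: on the range $y\in[\ve/b,z]$ one has $\ve/y\le b$, hence $\frac{\ve}{3y^3}=\frac{1}{3y^2}\cdot\frac{\ve}{y}\le\frac{b}{3y^2}$ and the integrand is dominated by $\exp\!\left(-\frac{b}{6y^2}\right)$, which is integrable on $(0,z]$ and $\ve$-free, while for each fixed $y\in(0,z]$ the integrand converges to $\exp\!\left(-\frac{b}{2y^2}\right)$ once $\ve<by$. Thus the denominator tends to $\int_0^z\exp\!\left(-\frac{b}{2y^2}\right)dy\in(0,\infty)$, and dividing the two limits gives the asserted equivalence.

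The only genuinely delicate point is the behaviour of the denominator near $y=0$, where $\ve/(3y^3)$ is not uniformly controlled as $\ve,y\to 0$ simultaneously; the fact that the integration starts at $\ve/b$, so that $\ve/y\le b$ throughout, is precisely what yields the uniform integrable majorant $\exp(-b/(6y^2))$. Everything else is the change of variables above together with the already available pointwise Taylor estimates \eqref{e:taylor_le}--\eqref{e:taylor_ge}.
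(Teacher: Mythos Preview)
Your proof is correct and follows essentially the same route as the paper: the scale-function ratio, the Taylor sandwich \eqref{e:taylor_le}--\eqref{e:taylor_ge} for the numerator after the affine substitution, and dominated convergence for the denominator. The only cosmetic differences are that the paper performs the numerator substitution in two stages ($x\mapsto x/\ve$ first, then a shift), and that for the denominator the paper uses the cruder majorant $1$ on the finite interval $[0,z]$ (noting $\ve/(3x^3)-b/(2x^2)\le 0$ for $x\ge 2\ve/(3b)$), whereas you obtain the sharper but unnecessary majorant $\exp(-b/(6y^2))$; both yield the same limit.
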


\begin{proof}
	\begin{align*}
	\P_{\ve/b + l\ve^2} (T_z < T_{\ve/b}) = \frac{\int_{\ve/b}^{\ve/b+l\ve^2}1/p(x)\, dx}{\int_{\ve/b}^z 1/p(x)\, dx},
	\end{align*}	 where \begin{align*}
	\frac{1}{p(x)} &= \exp\left(\ve \int_{x}^{c} \frac{1}{t^4}\, dt\right) \exp\left(b\int_{c}^x \frac{1}{t^3} \, dt\right) \\
	& = \exp\left(\frac{\ve}{3} \left(\frac{1}{x^3} - \frac{1}{c^3}\right)\right)  \exp\left(\frac{b}{2}\left(\frac{1}{c^2} - \frac{1}{x^2}\right)\right).
	\end{align*} Plugging in and reducing the fraction yields \begin{align*}
	\frac{\int_{\ve/b}^{\ve/b+l\ve^2}1/p(x)\, dx}{\int_{\ve/b}^z 1/p(x)\, dx} &= \frac{\int_{\ve/b}^{\ve/b+l\ve^2}  \exp\left(\frac{\ve}{3x^3} - \frac{b}{2x^2} \right)   \, dx}{\int_{\ve/b}^z \exp\left(\frac{\ve}{3x^3} - \frac{b}{2x^2} \right) \, dx}.
	\end{align*}
	
	For the numerator, we use the estimate~(\ref{e:taylor_le}) and obtain 
	\begin{align*} 
	&\int_{1/b}^{1/b+l\ve} \exp\left( \frac{1}{\ve^2} \left(\frac{1}{3x^3} - \frac{b}{2x^2} \right)  \right) \,dx \le \exp\left( -\frac{1}{\ve^2} \frac{b^3}{6}\right) \ve \int_{0}^{l}\exp\left(  \frac{b^5}{2} x^2   \right) \,dx.
	\end{align*} 
	Estimate~(\ref{e:taylor_ge}) gives 
	\begin{align*}
	&\int_{1/b}^{1/b+l\ve} \exp\left( \frac{1}{\ve^2} \left(\frac{1}{3x^3} - \frac{b}{2x^2} \right)  \right) \,dx \\
	&\ge \int_{1/b}^{1/b+l\ve} \exp\left( \frac{1}{\ve^2} \left(     -\frac{b^3}{6} + \frac{1}{2}b^5 \left(x-\frac{1}{b}\right)^2 - \frac{4}{3}b^6  \left(x-\frac{1}{b}\right)^3     \right)  \right) \,dx \\
	& \sim \exp\left( -\frac{1}{\ve^2}      \frac{b^3}{6} \right) \ve \int_{0}^{l} \exp  \left(    \frac{1}{2}b^5 x^2     \right)   \,dx. 
	\end{align*}
	Summarizing we arrive 
	\begin{equation}\label{e:intEpsEst}
	\int_{1/b}^{1/b+l\ve} \exp\left( \frac{1}{\ve^2} \left(\frac{1}{3x^3} - \frac{b}{2x^2} \right)  \right) \,dx \sim \exp\left( -\frac{1}{\ve^2}      \frac{b^3}{6} \right) \ve \int_{0}^{l} \exp  \left(    \frac{1}{2}b^5 x^2     \right)   \,dx. 
	\end{equation}
	For the denominator \begin{align*}
	\int_{\ve/b}^z \exp\left(\frac{\ve}{3x^3} - \frac{b}{2x^2} \right) \, dx = \int_{0}^z \ind_{\lbrace x>\ve/b \rbrace}\exp\left(\frac{\ve}{3x^3} - \frac{b}{2x^2} \right) \, dx
	\end{align*} due to $x\mapsto \ve/(3x^3) - b/(2x^2)$ being non-positive for $x \ge 2\ve/(3b)$, the integrand is bounded in-between $0$ and $1$ allowing to integrate over the limit $\ve\to 0$, which results in \[
	\lim_{\ve\to 0} \int_{0}^z \ind_{\lbrace x>\ve/b \rbrace}\exp\left(\frac{\ve}{3x^3} - \frac{b}{2x^2} \right) \, dx = \int_0^z \exp\left(-\frac{b}{2x^2}\right)\,dx.
	\] 
Thus the assertion is shown by composing these calculations.
\end{proof}

\begin{proposition}[A2] \label{p:rabiA2}
	\begin{align*}
	\lim\limits_{\ve\downarrow 0} \E_{\ve/b}[\widetilde{\sigma}_1] = \lim\limits_{\ve\downarrow 0} \E_{\ve/b}[\sigma_1] = 4b^4 \int_{0}^{\infty} \int_{0}^{1} \exp\left(\frac{b^5}{2}\left(w^2-y^2\right)\right)\,dw\,dy. \end{align*}
\end{proposition}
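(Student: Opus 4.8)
The plan is to rerun, for the present coefficients, the cycle-decomposition computation used in the proof of the proposition implying (A2) in the asymptotic-linear case. By the strong Markov property at the first visit to $\beta(\ve)=\ve/b+\ve^2$,
\[
\E_{\ve/b}[\widetilde{\sigma}_1]=\E_{\ve/b}\bigl[T_{\ve/b+\ve^2}\bigr]+\E_{\ve/b+\ve^2}\bigl[\tT_{\ve/b}\bigr],\qquad \E_{\ve/b}[\sigma_1]=\E_{\ve/b}\bigl[T_{\ve/b+\ve^2}\bigr]+\E_{\ve/b+\ve^2}\bigl[T_{\ve/b}\bigr],
\]
so it suffices to find the scaling limits of the up-crossing term $\E_{\ve/b}[T_{\ve/b+\ve^2}]$ and of the two down-crossing terms, and to check they add correctly. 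Exactly as in (\ref{e:exAsGreen}), (\ref{e:green}) and (\ref{e:eDownCross}) — taking $u\equiv 1$, $v(x)=\int_x^{\ve/b+\ve^2}1/p(w)\,dw$ with Wronskian $1$ — one has
\begin{align*}
\E_{\ve/b}\bigl[T_{\ve/b+\ve^2}\bigr]&=2\int_0^{\ve/b}\!\int_{\ve/b}^{\ve/b+\ve^2}\!\frac{r(y)}{p(w)}\,dw\,dy+2\int_{\ve/b}^{\ve/b+\ve^2}\!\int_y^{\ve/b+\ve^2}\!\frac{r(y)}{p(w)}\,dw\,dy=:2(A_1+A_2),\\
\E_{\ve/b+\ve^2}\bigl[\tT_{\ve/b}\bigr]&=2\int_{\ve/b}^{\ve/b+\ve^2}\!\int_{\ve/b}^{y}\!\frac{r(y)}{p(w)}\Bigl(\tfrac{h(y)}{h(w)}\Bigr)^{2}\!dw\,dy+2\int_{\ve/b+\ve^2}^{z}\!\int_{\ve/b}^{\ve/b+\ve^2}\!\frac{r(y)}{p(w)}\Bigl(\tfrac{h(y)}{h(w)}\Bigr)^{2}\!dw\,dy=:2(D_1+D_2),
\end{align*}
with $h$ as in (\ref{e:harm-hitting}); and $\E_{\ve/b+\ve^2}[T_{\ve/b}]$ is given by the same pair of integrals with the factors $(h(y)/h(w))^{2}$ deleted and $z$ replaced by $\infty$ in $D_2$.

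The computational core is a rescaling. Writing $1/p(x)=\text{const}\cdot\exp\bigl(\ve^{-2}g(x/\ve)\bigr)$ with $g(u)=\tfrac{1}{3u^{3}}-\tfrac{b}{2u^{2}}$ one gets $r(y)/p(w)=y^{-4}\exp\bigl(\ve^{-2}(g(w/\ve)-g(y/\ve))\bigr)$. The function $g$ has a unique minimum $-b^{3}/6$ at $u=1/b$ with $g''(1/b)=b^{5}$, and $u^{4}\tfrac{d}{du}\bigl(g(u)+\tfrac{b^{3}}{6}-\tfrac{b^{5}}{2}(u-\tfrac1b)^{2}\bigr)=-(1-bu)^{2}(1+bu)(1+b^{2}u^{2})\le 0$, so $g(u)\le-\tfrac{b^{3}}{6}+\tfrac{b^{5}}{2}(u-\tfrac1b)^{2}$ for $u\ge 1/b$ (that is (\ref{e:taylor_le})) and the reverse inequality for $u\le 1/b$, while (\ref{e:taylor_ge}) supplies the matching lower bound for $u$ near $1/b$ from above. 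Substituting $w=\ve/b+w'\ve^{2}$ and $y=\ve/b+y'\ve^{2}$ (in $A_2,D_1,D_2$) or $y=\ve/b-y'\ve^{2}$ (in $A_1$), the change-of-variables factor $\ve^{4}$ and $y^{-4}\sim b^{4}\ve^{-4}$ combine to a constant $b^{4}$, and on every bounded block one obtains the pointwise limit $\ve^{4}r(y)/p(w)\to b^{4}\exp\bigl(\tfrac{b^{5}}{2}((w')^{2}-(y')^{2})\bigr)$ together with an $\ve$-independent integrable majorant of the Gaussian form $C\exp\bigl(\tfrac{b^{5}}{2}(w')^{2}-c\,(y')^{2}\bigr)$, the $\tfrac{b^{5}}{2}(\cdot)^{2}$ terms being exactly what gives the Gaussian tails. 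The factors $(h(y)/h(w))^{2}$ are harmless: they lie in $[0,2]$ for small $\ve$, and for $y,w$ within $O(\ve^{2})$ of $\ve/b$ one has $h(y)/h(w)=1-\int_w^y(1/p)\big/\int_w^z(1/p)$ with numerator $O\bigl(\ve^{2}e^{-b^{3}/(6\ve^{2})}\bigr)$ (by the estimate underlying (\ref{e:intEpsEst})) and denominator bounded below by a positive constant, since $\int_\eta^z 1/p\to\text{const}\cdot\int_\eta^z e^{-b/(2l^{2})}\,dl>0$ for any fixed $\eta>0$; hence $(h(y)/h(w))^{2}\to 1$ pointwise, and $D_1,D_2$ have the same limits with or without the $h$-transform. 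In particular $\lim\E_{\ve/b+\ve^2}[\tT_{\ve/b}]=\lim\E_{\ve/b+\ve^2}[T_{\ve/b}]$, which already yields the first asserted equality.

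It remains to add the four limits. After the substitution $A_2$ is supported on $\{0<y'<w'<1\}$ and $D_1$ on $\{0<w'<y'<1\}$, so $A_2+D_1\to b^{4}\int_{(0,1)^{2}}\exp\bigl(\tfrac{b^{5}}{2}(w^{2}-y^{2})\bigr)\,dw\,dy$; while $A_1$ is supported on $(0,1)\times(0,\infty)$ (the variable carrying $-(y')^{2}$ is the unbounded one, since $y\in(0,\ve/b)$ corresponds to $y'\in(0,1/(b\ve))$) and $D_2$ on $(0,1)\times(1,\infty)$, so $A_1+D_2\to b^{4}\int_{(0,1)^{2}}\exp(\cdots)+2b^{4}\int_{(0,1)\times(1,\infty)}\exp(\cdots)$. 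Summing and restoring the overall factor $2$,
\[
\lim_{\ve\downarrow 0}\E_{\ve/b}[\widetilde{\sigma}_1]=2\bigl(A_1+A_2+D_1+D_2\bigr)=4b^{4}\int_{(0,1)\times(0,\infty)}\exp\Bigl(\tfrac{b^{5}}{2}(w^{2}-y^{2})\Bigr)\,dw\,dy,
\]
which, after exchanging the order of integration, is the claimed value; the identical argument with $z=\infty$ and no $h$-factors gives the same limit for $\E_{\ve/b}[\sigma_1]$.

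I expect the main difficulty to be the uniform control of the tail integrals — $A_1$ near $y=0$, and $D_2$ (together with its $z=\infty$ analogue) for $y$ ranging over a full macroscopic interval — where the majorant obtained from the quadratic, or cubic (\ref{e:taylor_ge}), Taylor bound alone is not integrable. There one splits the $y$-integral at a fixed multiple of $\ve$ (say at $\ve/(2b)$, respectively $2\ve/b$): on the window of width $O(\ve^{2})$ around $\ve/b$ one applies the Gaussian rescaling above, using a quadratic lower bound $g(u)-g(1/b)\ge c'(u-1/b)^{2}$ valid on a fixed neighbourhood of $1/b$ — which follows from $g'(u)=(bu-1)/u^{4}$ vanishing to first order at $1/b$ — rather than the cubic (\ref{e:taylor_ge}) whose error term is too large over that range; and on the remaining far part $g(y/\ve)$ stays bounded away from its minimum $-b^{3}/6$, so the integrand is super-exponentially small in $1/\ve$, small enough to beat the $y^{-4}$ singularity at the origin and the macroscopic length of the interval.
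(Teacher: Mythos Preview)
Your proof is correct and follows the same architecture as the paper's: Green-kernel representation of the two cycle phases, Taylor expansion of $g(u)=\tfrac{1}{3u^3}-\tfrac{b}{2u^2}$ around its minimum at $1/b$, pointwise convergence plus dominated convergence, and the observation that the $h$-factors tend to $1$. Your derivation of the inequality \eqref{e:taylor_le} via the factorisation $u^4\tfrac{d}{du}(g(u)-g(1/b)-\tfrac{b^5}{2}(u-1/b)^2)=-(1-bu)^2(1+bu)(1+b^2u^2)$ is a nice touch not made explicit in the paper.

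The only substantive difference is how the unbounded $y'$-ranges in $A_1$ and $D_2$ are controlled. You use a linear substitution $y=\ve/b\pm y'\ve^2$ uniformly and then split the $y$-integral at a fixed multiple of $\ve$, handling the far piece by the gap $g(y/\ve)-g(1/b)\ge c>0$ together with the integrability of $y^{-4}\exp(-\ve/(3y^3))$ near the origin. The paper instead avoids the splitting: for $A_1$ it passes to the reciprocal variable $v=(1/y-b)/\ve$, which turns the awkward factor $y^{-4}\,dy$ into $(v\ve+b)^2\,dv$ and gives an exact global Gaussian-type integrand; for $D_2$ it produces the global majorant $\exp(\tfrac{b^5}{2}(w'^2-y'^2))$ in one stroke by the algebraic identity \eqref{e:estimateexponent}, observing that on the domain $w'\le y'$ every non-quadratic term in the numerator is nonpositive. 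Both routes are equivalent in strength; the paper's is slightly slicker for the tails, yours is more uniform in style.
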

\begin{proof}
Using again the appropriate Green kernel we arrive at 
\begin{align}\label{e:cycLength}
\E_{\ve/b}[T_{\ve/b+\ve^2}] &= 2\ve^2 \left[ \int_{0}^{1/b} \int_{1/b}^{1/b+\ve} \frac{r(y\ve)}{p(w\ve)} \,dw \, dy + \int_{1/b}^{1/b+\ve} \int_{y}^{1/b+\ve} \frac{r(y\ve)}{p(w\ve)} \,dw \, dy \right] 
\end{align} 
where we can explicitly write
\begin{align} \label{e:intFrac}
\frac{r(y\ve)}{p(w\ve)} &= \frac{1}{(y\ve)^4} \exp\left( \frac{1}{\ve^2} \left(\left(\frac{1}{3w^3} - \frac{b}{2w^2} \right) - \left( \frac{1}{3y^3} - \frac{b}{2y^2} \right)\right) \right).
\end{align}
Observe that the right hand side of expression~\eqref{e:intFrac} factorizes in a function of $w$ and a function of $y$. 
To calculate the limit $\ve\to 0$ of the first term in \eqref{e:cycLength}, we consider the asymptotic behavior of both factors given
by the integral with respect to $y$ and $w$, respectively. We have 
\begin{align*}
&\int_{0}^{\frac{1}{b}} \frac{1}{y^4} \exp\left( \frac{1}{\ve^2} \left( - \frac{1}{3y^3} + \frac{b}{2y^2} \right) \right) \,dy 
= \exp\left(\frac{b^3}{6\ve^2}\right) \ve  \int_0^\infty (y\ve+b)^2 \exp \left(-\frac{b}{2}y^2 - \frac{\ve}{3} y^3\right) \,dy. 
\end{align*} 
Making use of (\ref{e:intEpsEst}) in order to find the asymptotic behavior of the integral with respect to $w$ and multiplying both together shows \begin{align*}
&2\ve^2 \int_{0}^{1/b} \int_{1/b}^{1/b+\ve} \frac{r(y\ve)}{p(w\ve)} \,dw \, dy  \xrightarrow[\ve\downarrow 0]{} 2b^4 \int_{0}^{\infty} \int_{0}^{1}\exp\left(  \frac{b^5}{2} \left(w^2-y^2\right)   \right) \,dw\,dy .
\end{align*}

The second summand in expression~(\ref{e:cycLength}) is analyzed with use of (\ref{e:taylor_ge}) and (\ref{e:taylor_le}). By a very similar analysis we obtain 
\begin{align}\begin{split}\label{e:expcycleI}
\E_{\ve/b}[T_{\ve/b+\ve^2}]\xrightarrow[\ve\downarrow 0]{} 
 &2b^4 \int_{0}^{\infty} \int_{0}^{1}\exp\left(  \frac{b^5}{2} \left(w^2-y^2\right)   \right) \,dw \,dy \\ 
 & + 2b^4 \int_{0}^{1} \int_{y}^{1} \exp\left(\frac{b^5}{2} \left(w^2-y^2\right)\right) \, dw \, dy < \infty.
\end{split}\end{align}

To infer the expected cycle length of the second phase, where the process starts from $\beta(\ve)=\ve/b +\ve^2$ and is conditioned to hit $\alpha(\ve)=\ve/b$ prior to some arbitrary level $z>\beta(\ve)$, we will again use a $h$-transform in the sense of Doob in order find the dynamics of the conditioned process.
We find  
\begin{align} \label{e:rabi1Mom}
 \E_{\frac{\ve}{b}+\ve^2}[T_{\ve/b} \mid T_{\ve/b} < T_z]  =   2\ve^2 \left[ \int_{\frac{1}{b}}^{\frac{1}{b}+\ve} \int_{\frac{1}{b}}^{y} \frac{r^h(y\ve)}{p^h(w\ve)} \,dw \, dy + \int_{\frac{1}{b}+\ve}^{\frac{z}{\ve}} \int_{\frac{1}{b}}^{\frac{1}{b}+\ve} \frac{r^h(y\ve)}{p^h(w\ve)} \,dw \, dy \right],  
 \end{align} 
where the integrand is given by 
\[
\frac{r^h(y\ve)}{p^h(w\ve)} = \left(\frac{h(y\ve)}{h(w\ve)}\right)^2 \cdot \frac{r(y\ve)}{p(w\ve)}.
\] 
We recall that the harmonic function under consideration is $h(s) := \P_s(T_{\ve/b} < T_z)$.
Let us start with the first summand. Because on the integration domain $w\le y$ holds, the estimate 
\[
\frac{r^h(y\ve)}{p^h(w\ve)} \le \frac{r(y\ve)}{p(w\ve)}
\] 
allows us use a strategy very similar to the situation of the first cycle phase. In particular, we have
\begin{align*}
&\limsup_{\ve\to 0} 2\ve^2 \int_{1/b}^{1/b+\ve} \int_{1/b}^{y} \frac{r^h(y\ve)}{p^h(w\ve)} \,dw \, dy  \le  2b^4 \int_{0}^{1} \int_{0}^{y} \exp\left(\frac{b^5}{2} \left(w^2-y^2\right)\right) \, dw \, dy.
\end{align*} 
In order to derive a matching result for the limes inferior we use our standard estimates to find 
\begin{align*}
&2\ve^2 \int_{1/b}^{1/b+\ve} \int_{1/b}^{y} \frac{r^h(y\ve)}{p^h(w\ve)} \,dw \, dy \\
&\ge 2 \int_{0}^{1}  \frac{1}{(y\ve+1/b)^4 } \exp \left( - \frac{1}{2}b^5y^2  \right) \int_{0}^{y} \left(\frac{h((y\ve+1/b)\ve)}{h((w\ve+1/b)\ve)}\right)^2 \exp \left(\frac{1}{2} b^5 w^2 - \ve \frac{4}{3}b^6y^3 \right)  \,dw \,dy.
\end{align*} 
By the bounded convergence theorem we can interchange the limit and the integrals and using Lemma~\ref{l:hitProb} to conclude 
\begin{align} \label{e:intLim}
\begin{split}
\lim\limits_{\ve\downarrow 0}  2\ve^2 \int_{1/b}^{1/b+\ve} \int_{1/b}^{y} \frac{r^h(y\ve)}{p^h(w\ve)} \,dw \, dy &= \lim\limits_{\ve\downarrow 0}  2\ve^2 \int_{1/b}^{1/b+\ve} \int_{1/b}^{y} \frac{r(y\ve)}{p(w\ve)} \,dw \, dy   \\
&=   2b^4 \int_{0}^{1} \int_{0}^{y} \exp\left(\frac{b^5}{2} \left(w^2-y^2\right)\right) \, dw \, dy. 
\end{split}
\end{align} 
Here we have used $h((l\ve+1/b)\ve)\to 1$, $l\in [0,1]$. 

We now consider the second term in equation~(\ref{e:rabi1Mom}). We rewrite this term as
\begin{equation}\label{e:last-term-1Mom}
\begin{split}
&2\ve^2 \int_{1/b+\ve}^{z/\ve} \int_{1/b}^{1/b+\ve} \frac{r^h(y\ve)}{p^h(w\ve)} \,dw \, dy = 2 \ve^4 \int_{1}^{z/\ve^2-1/(b\ve)} \int_{0}^{1} \frac{r^h((y\ve+1/b)\ve)}{p^h((w\ve+1/b)\ve)} \,dw \, dy\\
&=2 \int_{1}^{\infty} \int_{0}^{1} \ind_{\lbrace y < z/\ve^2-1/(b\ve) \rbrace} \frac{h((y\ve+1/b)\ve)^2}{h((w\ve+1/b)\ve)^2} \frac{1}{(y\ve+1/b)^4} \\
&\quad \exp\left(\frac{1}{\ve^2} \left(  \frac{1}{3(w\ve+1/b)^3} -\frac{b}{2(w\ve+1/b)^2} - \frac{1}{3(y\ve+1/b)^3} + \frac{b}{2(y\ve+1/b)^2} \right) \right)  \,dw \, dy.
\end{split}
\end{equation} 
Elementary algebra gives  
\begin{equation}\label{e:estimateexponent}
\begin{split}
&\frac{1}{\ve^2} \left(  \frac{1}{3(w\ve+1/b)^3} -\frac{b}{2(w\ve+1/b)^2} - \frac{1}{3(y\ve+1/b)^3} + \frac{b}{2(y\ve+1/b)^2} \right) \\
&= \frac{b^5(3b^2w^3\ve^2y-3b^2w\ve^2y^3+bw^3\ve + 9bw^2\ve y-9bw \ve y^2 - b\ve y^3 + 3w^2-3y^2)}{6(bw\ve+1)^3(b\ve y+1)^3}.
\end{split}
\end{equation} 
We observe that on the domain of integration in \eqref{e:last-term-1Mom} we always have $0\leq w \leq y$ and that therefore 
\begin{displaymath}
3b^2w^3\ve^2y\leq 3b^2w\ve^2y^3,\,bw^3\ve \leq b\ve y^3,\,9bw^2\ve y \leq 9bw \ve y^2.
\end{displaymath} 
Estimating the denominator $6(bw\ve+1)^3(b\ve y+1)^3$ in \eqref{e:estimateexponent} by $6$ we conclude that on the domain of integration in \eqref{e:last-term-1Mom} 
\begin{displaymath}
\exp\left(\frac{1}{\ve^2} \left(  \frac{1}{3(w\ve+1/b)^3} -\frac{b}{2(w\ve+1/b)^2} - \frac{1}{3(y\ve+1/b)^3} + \frac{b}{2(y\ve+1/b)^2} \right) \right) \leq e^{\frac{b^5}{2}(w^2-y^2)}.
\end{displaymath}
Using \eqref{e:estimateexponent} and Lebesgue's dominated convergence theorem then implies
\[
\lim\limits_{\ve \downarrow 0} 2\ve^2 \int_{1/b+\ve}^{z/\ve} \int_{1/b}^{1/b+\ve} \frac{r^h(y\ve)}{p^h(w\ve)} \,dw \, dy = 2b^4 \int_1^\infty \int_0^1 \exp\left(\frac{b^5}{2} \left(w^2-y^2\right)\right) \,dw \,dy.
\]
This gives together with \eqref{e:intLim} the required limit for the cycle phase and adding \eqref{e:expcycleI} therefore finishes the proof.
\end{proof}

\begin{proposition}[A3]
	\[\limsup_{\ve\downarrow 0} \E_{\ve/b}[\widetilde{\sigma}_1^2] <\infty.\]
\end{proposition}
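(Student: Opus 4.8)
The plan is to follow the template of the proof of Proposition~\ref{p:e1A3}, with the Taylor estimates \eqref{e:taylor_le}--\eqref{e:taylor_ge}, their integrated consequence \eqref{e:intEpsEst}, and the algebraic identity \eqref{e:estimateexponent} playing the role that Lemma~\ref{l:intEst} played there. First, by the strong Markov property at $\widetilde\tau_1 = T_{\beta(\ve)}$ we have $\widetilde\sigma_1 = T_{\beta(\ve)} + \tT_{\ve/b}$ as a sum of (conditionally) independent pieces, so $\E_{\ve/b}[\widetilde\sigma_1^2] \le 2\,\E_{\ve/b}[T_{\beta(\ve)}^2] + 2\,\E_{\beta(\ve)}[\tT_{\ve/b}^2]$, and it suffices to bound the $\limsup_{\ve\downarrow 0}$ of each term separately.

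For the first term I would apply the generalized Kac moment formula, $\E_{\ve/b}[T_{\beta(\ve)}^2] = 2\int_0^{\beta(\ve)} g(\ve/b,y)\,\E_y[T_{\beta(\ve)}]\,r(y)\,dy$, with the Green kernel $g$ from \eqref{e:green} already written down in the proof of Proposition~\ref{p:rabiA2}. The inner expectation $\E_y[T_{\beta(\ve)}]$ is bounded uniformly in $y\in(0,\beta(\ve))$ and in small $\ve$ by (a monotone variant of) the estimates behind \eqref{e:cycLength}--\eqref{e:expcycleI}: the factorisation \eqref{e:intFrac} of $r(y\ve)/p(w\ve)$ as a product of a function of $y$ and a function of $w$, together with \eqref{e:intEpsEst}, gives its convergence to a finite constant. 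Feeding this bound back in expresses $\E_{\ve/b}[T_{\beta(\ve)}^2]$ as a constant times the square of the same double integral occurring in \eqref{e:cycLength}, whose limit was already shown to be finite; dominated convergence with the majorant used there finishes this part.

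The second term is the substantial one. Writing $\E_{\beta(\ve)}[\tT_{\ve/b}^2] = 2\int_{\ve/b}^{z} g^h(\beta(\ve),y)\,\E^h_y[T_{\ve/b}]\,r^h(y)\,dy$ with the $h$-transformed coefficients $p^h,r^h$ from Proposition~\ref{p:rabiA2}, where $h(s)=\P_s(T_{\ve/b}<T_z)$ is non-increasing, and noting that on every relevant domain $w\le y$ so that $(h(y\ve)/h(w\ve))^2\le 1$ as in \eqref{e:h-quotient}, one may replace every occurrence of $r^h/p^h$ by $r/p$ (also inside the expected-hitting-time formula for the inner factor) and, exactly as in \eqref{e:secondmom-down-2}, reduce to bounding, uniformly in $\ve$, the two double integrals $\int_{\ve/b}^{\beta(\ve)}\!\int_{\ve/b}^{y}\frac{r(y)}{p(w)}\E_y[T_{\ve/b}]\,dw\,dy$ and $\int_{\beta(\ve)}^{z}\!\int_{\ve/b}^{\beta(\ve)}\frac{r(y)}{p(w)}\E_y[T_{\ve/b}]\,dw\,dy$. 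The first is at most a constant multiple of $\E_{\beta(\ve)}[\tT_{\ve/b}]^2$ (using the monotonicity $\E^h_y[T_{\ve/b}]\le\E^h_{\beta(\ve)}[T_{\ve/b}]=\E_{\beta(\ve)}[\tT_{\ve/b}]$ for $y\le\beta(\ve)$), which stays bounded by Proposition~\ref{p:rabiA2}. The main obstacle is the second integral. For $y$ close to $\ve/b$ I would bound $r(y)/p(w)$ on $\ve/b\le w\le\beta(\ve)\le y$ by $e^{\frac{b^5}{2}(w^2-y^2)}$-type expressions exactly as in \eqref{e:estimateexponent} (legitimate precisely because $w\le y$), while for $y$ bounded away from $\ve/b$ one has $r(y)/p(w)\le c\,y^{-4}$ up to $y=\delta$ and $r(y)/p(w)\le c'$ for $y\in[\delta,z]$ with $\ve$-independent constants; then one estimates the inner $\E_y[T_{\ve/b}]$ by the expected-hitting-time formula with the same bounds, which reduces the whole thing to iterated integrals of the shape of \eqref{e:last-estimate-goal}. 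The crux --- exactly as with \eqref{e:last-estimate-I} and \eqref{e:last-estimate-II} in the asymptotically linear case --- is to check that the negative powers of $\ve$ and the $\ln(1/\ve)$ factors produced by integrating $y^{-4}$ down to the scale $\ve/b$ are exactly compensated by the $\ve$-powers carried along the way; note that here the cycle has width of order $\ve^2$ rather than $\ve$ and that $\sigma^2(x)=x^4$, so the bookkeeping of exponents has to be redone from scratch even though it is conceptually routine. Once this estimate is in place the $\limsup$ is finite and the proof is complete.
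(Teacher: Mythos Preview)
Your treatment of the upcrossing half and of the near-$\ve/b$ part of the downcrossing half is fine and in fact slightly slicker than the paper's (the paper expands $\E_{\ve/b}[T_{\beta(\ve)}^2]$ into explicit quadruple integrals rather than using the uniform bound $\sup_{y}\E_y[T_{\beta(\ve)}]<\infty$). The gap is in the ``main obstacle'' you identify yourself: the outer integral $\int_{\beta(\ve)}^{z}\!\int_{\ve/b}^{\beta(\ve)}\frac{r(y)}{p(w)}\E^h_y[T_{\ve/b}]\,dw\,dy$ cannot be controlled by the polynomial bounds $r(y)/p(w)\le c\,y^{-4}$ (resp.\ $\le c'$) that worked in the asymptotically linear example. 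In the Rabi model the ratio $r/p$ carries an extra factor $\exp\bigl(\ve^{-2}(f(w/\ve)-f(y/\ve))\bigr)$ with $f(x)=\tfrac{1}{3x^3}-\tfrac{b}{2x^2}$; for $w\approx\ve/b$ and $y$ bounded away from $\ve/b$ this is of order $\exp(-b^3/(6\ve^2))$, and discarding it is fatal. Indeed, your bounds give $\E^h_y[T_{\ve/b}]\lesssim\int_{\ve/b}^{y}\wy^{-3}\,d\wy\sim b^2/\ve^2$, and then
\[
\ve^2\int_{\beta(\ve)}^{\delta} c\,y^{-4}\cdot\frac{b^2}{\ve^{2}}\,dy\;\sim\;c\,b^{2}\cdot\frac{1}{3\beta(\ve)^{3}}\;\sim\;\frac{c\,b^{5}}{3\,\ve^{3}}\longrightarrow\infty,
\]
so the analogue of \eqref{e:last-estimate-goal} diverges rather than balances. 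The ``bookkeeping of exponents'' is therefore not routine: powers of $\ve$ alone cannot compensate here, because the relevant small factor is $\exp(-b^3/(6\ve^2))$, not a power.

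The paper avoids this by never passing to $\ve$-independent polynomial majorants on that region. Instead it writes out the full quadruple integrals, performs the substitution $x\mapsto 1/x$ followed by a translation by $b$ (so that $\ve/b\mapsto 0$), and then uses the monotonicity of $f$ on $[1/b,\infty)$ together with the elementary inequality $f(1/y)=y^3/3-by^2/2\le -by^2/6$ to reduce everything to Gaussian-type integrals such as $\int y^{2}e^{-by^{2}/6}\,dy$, which are manifestly finite; see \eqref{e:downClaim1}--\eqref{e:downClaim2}. The essential point is that the exponential in $1/\ve^{2}$ must be kept and exploited, not estimated away.
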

\begin{proof}
	Analogously to the proof of Proposition~\ref{p:e1A3} we use Kac's moment formula and start with showing $\limsup_{\ve\downarrow 0}\E_{\ve/b}[(T_{\ve/b+\ve^2})^2] <\infty$.\\
	
	On the second double integral in \begin{align*}&\E_{\ve/b}[T_{\ve/b+\ve^2}^2] = \text{\Romannum{1}}_\ve + \text{\Romannum{2}}_\ve\\
	&=4\ve^2 \left[  \int_0^{\frac{1}{b}}\int_{\frac{1}{b}}^{\frac{1}{b}+\ve} \frac{r(y\ve)}{p(w\ve)} \E_{y\ve}[T_{\frac{\ve}{b}+\ve^2}] \,dw\,dy +  \int_{\frac{1}{b}}^{\frac{1}{b}+\ve}\int_y^{\frac{1}{b}+\ve} \frac{r(y\ve)}{p(w\ve)} \E_{y\ve}[T_{\frac{\ve}{b}+\ve^2}] \,dw\,dy  \right]\end{align*} we may estimate  $\E_{y\ve}[T_{\frac{\ve}{b}+\ve^2}] \le \E_{\frac{\ve}{b}}[T_{\frac{\ve}{b}+\ve^2}]$ and therefore finiteness follows by the convergence of the first moment shown in Proposition~\ref{p:rabiA2} as by using \eqref{e:cycLength}
	\begin{align} \label{e:greenNeglect}
	\limsup_{\ve \downarrow 0}\text{\Romannum{2}}_\ve \le \limsup_{\ve \downarrow 0} 2\left(\E_{\ve/b}[T_{\ve/b+\ve^2}]\right)^2 <\infty.
	\end{align}
	For the first integral we need to show \begin{align} \label{e:upClaim1}
	\limsup_{\ve\downarrow 0} \ve^4 \int_{0}^{\frac{1}{b}} \int_{\frac{1}{b}}^{\frac{1}{b}+\ve} \int_{0}^{y} \int_{y}^{\frac{1}{b}+\ve} \frac{r(y\ve)}{p(w\ve)} \frac{r(\widetilde{y}\ve)}{p(\widetilde{w}\ve)} \, d\widetilde{w} \, d\widetilde{y} \, dw \,dy < \infty
	\end{align} and \begin{align} \label{e:upClaim2}
	\limsup_{\ve\downarrow 0} \ve^4 \int_{0}^{\frac{1}{b}} \int_{\frac{1}{b}}^{\frac{1}{b}+\ve} \int_{y}^{\frac{1}{b}+\ve} \int_{\ty}^{\frac{1}{b}+\ve} \frac{r(y\ve)}{p(w\ve)} \frac{r(\widetilde{y}\ve)}{p(\widetilde{w}\ve)} \, d\widetilde{w} \, d\widetilde{y} \, dw \,dy < \infty.
	\end{align}
	Using (\ref{e:intFrac}) and (\ref{e:intEpsEst}) and writing $f(x) := \frac{1}{3x^3} - \frac{b}{2x^2}$ we conclude \begin{align*}&\ve^4 \int_{0}^{\frac{1}{b}} \int_{\frac{1}{b}}^{\frac{1}{b}+\ve} \int_{0}^{y} \int_{y}^{\frac{1}{b}+\ve} \frac{r(y\ve)}{p(w\ve)} \frac{r(\widetilde{y}\ve)}{p(\widetilde{y}\ve)} \, d\widetilde{w} \, d\widetilde{y} \, dw \,dy \sim \int_{0}^{1} \exp\left(\frac{b^5}{2} w^2 \right) \, dw \times \\
	&\times \frac{1}{\ve^3} \int_{0}^{\frac{1}{b}}  \int_{0}^{y} \int_{y}^{\frac{1}{b}+\ve} \frac{1}{(y\ty)^4} \exp\left(\frac{1}{\ve^2} \left(-\frac{b^3}{6} - f(y) +f(\tw) -f(\ty) \right)\right) \, d\widetilde{w} \, d\widetilde{y} \,dy.
	\end{align*} Substituting to the reciprocals, translating by $b$ and enlarging the integration domain implies \begin{align*}
	&\frac{1}{\ve^3} \int_{0}^{\frac{1}{b}}  \int_{0}^{y} \int_{y}^{\frac{1}{b}+\ve} \frac{1}{(y\ty)^4} \exp\left(\frac{1}{\ve^2} \left(-\frac{b^3}{6} - f(y) +f(\tw) -f(\ty) \right)\right) \, d\widetilde{w} \, d\widetilde{y} \,dy\\
	&\le \frac{1}{\ve^3} \int_{0}^{\infty}  \int_{y}^{\infty} \int_{-b^2\ve}^{y} \left(\frac{(y+b)(\ty+b)}{\tw+b}\right)^2 \\ &\qquad \qquad \qquad \times \exp\left(\frac{1}{\ve^2} \left(-\frac{b^3}{6} - f\left(\frac{1}{y+b}\right) +f\left(\frac{1}{\tw+b}\right) -f\left(\frac{1}{\ty+b}\right) \right)\right) \, d\widetilde{w} \, d\widetilde{y} \,dy.\end{align*}
	Since $-b^2\ve \le \tw \le y$,	\begin{align*}
	&-\frac{b^3}{6} - f\left(\frac{1}{y+b}\right) +f\left(\frac{1}{\tw+b}\right) -f\left(\frac{1}{\ty+b}\right) \le -\frac{b}{2}\ty^2 + \frac{b^5}{2}\ve^2;
	\end{align*} we continue our estimation with extending the integration domain and using Fubini's theorem to deduce with dominated convergence
	\begin{align*}
	&\exp\left(\frac{b^5}{2}\right)\int_{0}^{\infty}  \int_{y}^{\infty} \int_{-b^2}^{y} \left(\frac{(y\ve+b)(\ty\ve+b)}{\tw\ve+b}\right)^2 \exp\left(-\frac{b}{2}\ty^2\right) \, d\widetilde{w} \, d\widetilde{y} \,dy\\
	&=\exp\left(\frac{b^5}{2}\right)\int_{0}^{\infty}  \int_{0}^{\infty} \int_{-b^2}^{y} \left(\frac{(y\ve+b)(\ty\ve+y\ve+b)}{\tw\ve+b}\right)^2 \exp\left(-\frac{b}{2}\left(\ty+y\right)^2\right) \, d\widetilde{w} \, d\widetilde{y} \,dy\\
	&\le\exp\left(\frac{b^5}{2}\right)\int_{-b^2}^{\infty}  \int_{0}^{\infty} \int_{\tw}^{\infty} \left(\frac{(y\ve+b)(\ty\ve+y\ve+b)}{\tw\ve+b}\right)^2 \exp\left(-\frac{b}{2}\left(\ty+y\right)^2\right) dy \, d\ty \, d\tw\\
	&=\exp\left(\frac{b^5}{2}\right)\int_{-b^2}^{\infty}  \int_{0}^{\infty} \int_{0}^{\infty} \left(\frac{(y\ve+\tw\ve+b)(\ty\ve+y\ve+\tw\ve+b)}{\tw\ve+b}\right)^2 \exp\left(-\frac{b}{2}\left(\ty+y+\tw\right)^2\right) dy \, d\ty \, d\tw\\
	&\xrightarrow[]{\ve\to 0}\exp\left(\frac{b^5}{2}\right)b^2\int_{-b^2}^{\infty}  \int_{0}^{\infty} \int_{0}^{\infty}  \exp\left(-\frac{b}{2}\left(\ty+y+\tw\right)^2\right) dy \, d\ty \, d\tw < \infty
	\end{align*}
	which shows (\ref{e:upClaim1}). Proving (\ref{e:upClaim2}) can be performed very similar to (\ref{e:upClaim1}). Reusing the transformation $x\mapsto \frac{1}{x+b}$ yields the bound 
	\begin{align*}
	&\int_{0}^{\infty}  \int_{-b^2}^{y} \int_{-b^2}^{\ty} \left(\frac{(y\ve+b)(\ty\ve+b)}{\tw\ve+b}\right)^2 \exp\left(\frac{b}{2}\left(\tw^2-\ty^2-y^2\right)\right) \, d\widetilde{w} \, d\widetilde{y} \,dy.
	\end{align*} Noting $\tw^2-\ty^2 \le b^4$ and again using Fubini's theorem on the extended integration domain we end up with the same expression with $y$ and $\ty$ switched which is the same quantity.\\
	
	We now move on to the second cycle phase, i.e. proving \[\limsup_{\ve\downarrow 0}\E_{\ve/b+\ve^2}[T_{\ve/b}^2 \mid T_{\ve/b} < T_z] <\infty.\]
	In the spirit of \eqref{e:greenNeglect} it reduces to consider one summand and the analoga to \eqref{e:upClaim1} and \eqref{e:upClaim2}	are \begin{align*} 
	\limsup_{\ve\downarrow 0} \ve^4 \int_{\frac{1}{b}+\ve}^{\frac{z}{\ve}} \int_{\frac{1}{b}}^{\frac{1}{b}+\ve} \int_{\frac{1}{b}}^{y} \int_{\frac{1}{b}}^{\ty} \frac{r(y\ve)}{p(w\ve)} \frac{r(\widetilde{y}\ve)}{p(\widetilde{y}\ve)} \, d\widetilde{w} \, d\widetilde{y} \, dw \,dy < \infty.
	\end{align*} and \begin{align*} 
	\limsup_{\ve\downarrow 0} \ve^4 \int_{\frac{1}{b}+\ve}^{\frac{z}{\ve}} \int_{\frac{1}{b}}^{\frac{1}{b}+\ve} \int_{y}^{\frac{z}{\ve}} \int_{\frac{1}{b}}^{y} \frac{r(y\ve)}{p(w\ve)} \frac{r(\widetilde{y}\ve)}{p(\widetilde{y}\ve)} \, d\widetilde{w} \, d\widetilde{y} \, dw \,dy < \infty.
	\end{align*}
	
	Again using (\ref{e:intEpsEst}) and enlarging the domain of integration it is sufficient to show with familiar abbreviation $f(x):= \frac{1}{3x^3} - \frac{b}{2x^2}$
	
	\begin{align} \label{e:downClaim1}
	\limsup_{\ve\downarrow 0} \frac{1}{\ve^3} \int_{\frac{1}{b}}^{\infty}  \int_{\frac{1}{b}}^{y} \int_{\frac{1}{b}}^{\ty} \frac{1}{(y\ty)^4} \exp\left(\frac{1}{\ve^2}\left(-\frac{b^3}{6}-f(y)+f(\tw)-f(\ty)\right)\right) \, d\widetilde{w} \, d\widetilde{y}  \,dy < \infty
	\end{align} and \begin{align} \label{e:downClaim2}
	\limsup_{\ve\downarrow 0} \frac{1}{\ve^3} \int_{\frac{1}{b}}^{\infty}  \int_{y}^\infty \int_{\frac{1}{b}}^{y} \frac{1}{(y\ty)^4} \exp\left(\frac{1}{\ve^2}\left(-\frac{b^3}{6}-f(y)+f(\tw)-f(\ty)\right)\right) \, d\widetilde{w} \, d\widetilde{y}  \,dy < \infty.
	\end{align}

	We use similar techniques resulting in
	\begin{align*}
	&\frac{1}{\ve^3} \int_{\frac{1}{b}}^{\infty}  \int_{\frac{1}{b}}^{y} \int_{\frac{1}{b}}^{\ty} \frac{1}{(y\ty)^4} \exp\left(\frac{1}{\ve^2}\left(-\frac{b^3}{6}-f(y)+f(\tw)-f(\ty)\right)\right)  \, d\widetilde{w} \, d\widetilde{y} \,dy   \\
	&= \frac{1}{\ve^3} \int_{0}^{b}  \int_{0}^{y} \int_{0}^{\ty} \left(\frac{(b-y)(b-\ty)}{b-\tw}\right)^2 \exp\left(\frac{1}{\ve^2} \left(f\left(\frac{1}{y}\right) -f\left(\frac{1}{\tw}\right) +f\left(\frac{1}{\ty}\right) \right)\right) \, d\widetilde{w} \, d\widetilde{y} \,dy.\end{align*}
	
	Due to $f$ being monotonously increasing on $[1/b,\infty)$ the difference $f(1/\ty)-f(1/\tw)\le 0$ is non-positive which combined with the fact $f(1/y)=y^3/3-by^2/2 \le -by^2/6$ provides for the estimate \begin{align*} 
	&\int_{0}^{b/\ve}  \int_{0}^{y} \int_{0}^{\ty} \left(\frac{(b-y\ve)(b-\ty\ve)}{b-\tw\ve}\right)^2 \exp\left( -\frac{b}{6}y^2 \right) \, d\widetilde{w} \, d\widetilde{y} \,dy \le \frac{b^2}{2}\int_{0}^{\infty}  y^2  \exp\left( -\frac{b}{6}y^2 \right)  \,dy 
	\end{align*}
	showing (\ref{e:downClaim1}).\\
	
	Analogously for \eqref{e:downClaim2} \begin{align*}
	&\frac{1}{\ve^3} \int_{\frac{1}{b}}^{\infty}  \int_{y}^\infty \int_{\frac{1}{b}}^{y} \frac{1}{(y\ty)^4} \exp\left(\frac{1}{\ve^2}\left(-\frac{b^3}{6}-f(y)+f(\tw)-f(\ty)\right)\right) \, d\widetilde{w} \, d\widetilde{y}  \,dy\\
	&\le \int_{0}^{b/\ve}  \int_{y}^{b/\ve} \int_{0}^{y} \left(\frac{(b-y\ve)(b-\ty\ve)}{b-\tw\ve}\right)^2 \exp\left( -\frac{b}{6}\ty^2 \right) \, d\widetilde{w} \, d\widetilde{y} \,dy\\
	&\le b^2 \int_{0}^{\infty}  \int_{y}^{\infty} y \cdot  \exp\left( -\frac{b}{6}\ty^2 \right)  \, d\widetilde{y} \,dy = \frac{3\sqrt{6 \pi b}}{4}.
	\end{align*}

\end{proof}

\begin{proposition}[First part of assertion (B1)]
	\[	\lim_{scaling}\E_x[T_{\ve/b} \wedge T_z] = 0 \text{ for } 0 < x <z. \]
\end{proposition}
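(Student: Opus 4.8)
The plan is to run the argument of Proposition~\ref{p:B1p1Proof} in the present setting, the only differences being that now $\alpha(\ve)=\ve/b$ (rather than $\ve$) and that the coefficients $b_1\equiv 1,\ b_2(l)=bl,\ \sigma^2(l)=l^4$ are explicit, so no Taylor approximation is needed. First I would express the expected exit time from $(\ve/b,z)$ via the Green kernel of the killed diffusion. Setting $K:=\int_{\ve/b}^z 1/p(w)\,dw$, $u(x):=\int_{\ve/b}^x 1/p(w)\,dw$, $v(x):=\int_x^z 1/p(w)\,dw$ and $g(x,y)=\tfrac{2}{K\lambda^2}u(x\wedge y)v(x\vee y)$, this gives
\begin{align*}
\E_x[T_{\ve/b}\wedge T_z]=\frac{2}{K\lambda^2}\Bigl[&v(x)\int_{\ve/b}^x\int_{\ve/b}^y\frac{r(y)}{p(w)}\,dw\,dy\\
&+u(x)\int_x^z\int_y^z\frac{r(y)}{p(w)}\,dw\,dy\Bigr],
\end{align*}
and since $v(x)/K\le 1$ and $u(x)/K\le 1$ it remains to bound the two double integrals and divide by $\lambda^2$.

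Next I would insert the explicit coefficients. Because $\sigma^2(y)=y^4$ one has $\tfrac{r(y)}{p(w)}=\tfrac{1}{y^4}\exp(\phi(w)-\phi(y))$ with $\phi(t):=\tfrac{\ve}{3t^3}-\tfrac{b}{2t^2}$, and $\phi'(t)=t^{-4}(bt-\ve)$, so $\phi$ is strictly decreasing on $(0,\ve/b)$ and strictly increasing on $(\ve/b,\infty)$; in particular $\phi$ attains its minimum exactly at $\ve/b=\alpha(\ve)$. On the first region $\ve/b\le w\le y\le x$ this gives $\phi(w)\le\phi(y)$, hence $\exp(\phi(w)-\phi(y))\le 1$ and therefore
\[
\int_{\ve/b}^x\int_{\ve/b}^y\frac{r(y)}{p(w)}\,dw\,dy\le\int_{\ve/b}^x\frac{1}{y^3}\,dy<\frac{b^2}{2\ve^2}.
\]
On the second region $x\le y\le w\le z$ (nonempty only once $\ve<bx$, which holds in the limit), $\phi$ is increasing so $\phi(w)-\phi(y)=\int_y^w\phi'(t)\,dt\le\int_x^z\tfrac{b}{t^3}\,dt\le\tfrac{b}{2x^2}$, whence the integrand is bounded by $x^{-4}e^{b/(2x^2)}$ and the second double integral by a constant $C_{x,z}$ independent of $\ve$.

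Combining the two bounds, $\E_x[T_{\ve/b}\wedge T_z]\le\tfrac{b^2}{\lambda^2\ve^2}+\tfrac{2C_{x,z}}{\lambda^2}$. The second term vanishes since $\lambda\to\infty$. For the first term, along the scaling curve $\lambda^2 p_{\ve,z}=J$ we have $\lambda^2\ve^2=J\,\ve^2/p_{\ve,z}$, and Lemma~\ref{l:hitProb} applied with $l=1$ (since $\beta(\ve)=\ve/b+\ve^2$) gives $p_{\ve,z}\sim c\,\ve^2 e^{-b^3/(6\ve^2)}$ with $c>0$, so $p_{\ve,z}/\ve^2\to 0$ and hence $\lambda^2\ve^2\to\infty$; thus $\tfrac{b^2}{\lambda^2\ve^2}\to 0$ and the whole bound tends to $0$.

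I do not anticipate a genuine obstacle: every estimate is elementary once the structural observation is made that the minimum of $\phi$ sits precisely at the left endpoint $\alpha(\ve)=\ve/b$, which is exactly why the potentially dangerous factor $\exp(\ve/(3w^3))$ near that endpoint is always dominated by $\exp(-b/(2w^2))$. The only step that needs a word of justification beyond direct computation is quoting Lemma~\ref{l:hitProb} to get $p_{\ve,z}=o(\ve^2)$, which is what turns the merely polynomial blow-up $\ve^{-2}$ of the first double integral into a quantity negligible against $\lambda^2$.
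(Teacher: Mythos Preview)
Your proof is correct and follows essentially the same route as the paper: Green kernel representation, bounding $u(x)/K,\,v(x)/K\le 1$, using the monotonicity of the exponent $\phi$ on $[\ve/b,\infty)$ to control the integrand, and quoting Lemma~\ref{l:hitProb} to see that $p_{\ve,z}/\ve^2\to 0$ along the scaling curve. The paper carries out the same steps after first rescaling $y\mapsto y\ve$, $w\mapsto w\ve$ (so the monotone function is $f(x)=\tfrac{1}{3x^3}-\tfrac{b}{2x^2}$ on $[1/b,\infty)$), whereas you work directly in the original coordinates; your version is arguably a bit cleaner since it avoids that substitution and makes the role of $\alpha(\ve)=\ve/b$ as the minimizer of $\phi$ explicit.
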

\begin{proof}
	As in the first example class we are in the situation \[\E_x[T_{\ve/b} \wedge T_z] \le \frac{2}{\lambda^2} \left[ \int_{\ve/b}^{x}\int_{\ve/b}^{y} \frac{r(y)}{p(w)} \,dw \, dy + \int_{x}^{z}\int_{y}^{z} \frac{r(y)}{p(w)} \,dw \, dy \right]\] with the second integral being bounded. Using Lemma~\ref{l:hitProb} and equation~(\ref{e:intFrac}) we infer \[ \frac{2}{\lambda^2}  \int_{\ve/b}^{x}\int_{\ve/b}^{y} \frac{r(y)}{p(w)} \,dw \, dy =  2\int_{1/b}^{x/\ve}\int_{1/b}^{y} \frac{1}{y^4} \exp\left(\frac{1}{\ve^2} \left(-\frac{b^3}{6} + f(w) - f(y)\right)\right) \,dw \, dy. \] By observing \[f:[1/b,\infty)\to\R,\quad f(x):= \frac{1}{3x^3} - \frac{b}{2x^2}\] is monotonously increasing the claimed convergence is readily seen.
\end{proof}

\begin{proposition}[B2]
	\[\P_x (T_{\ve/b} < T_z) \xrightarrow[\ve\to 0]{} \frac{
		\int_x^z \exp \left(-\frac{b}{2y^2}\right) \,dy}{\int_0^z \exp \left(-\frac{b}{2y^2}\right) \,dy} \qquad  \text{ for } 0< x < z.\]
\end{proposition}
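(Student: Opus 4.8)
\emph{Plan of proof.} The plan is to start from the scale-function representation of the hitting probability. Writing $1/p(y) \propto \exp\bigl(\tfrac{\ve}{3y^3} - \tfrac{b}{2y^2}\bigr)$, where the multiplicative constant coming from the reference point cancels between numerator and denominator, we obtain
\[
\P_x(T_{\ve/b} < T_z) = \frac{\int_x^z \exp\left(\frac{\ve}{3y^3} - \frac{b}{2y^2}\right)\,dy}{\int_{\ve/b}^z \exp\left(\frac{\ve}{3y^3} - \frac{b}{2y^2}\right)\,dy},
\]
exactly as in the first displays of the proof of Lemma~\ref{l:hitProb}. It then suffices to pass to the limit $\ve\downarrow 0$ separately in the numerator and the denominator.

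For the numerator, since $x>0$ is fixed the integrand is controlled on $[x,z]$ by $\exp\bigl(\tfrac{\ve}{3x^3}\bigr)\le \exp\bigl(\tfrac{1}{3x^3}\bigr)$ for $\ve\le 1$ (together with $\exp(-b/(2y^2))\le 1$), so dominated convergence gives $\int_x^z \exp\bigl(\tfrac{\ve}{3y^3} - \tfrac{b}{2y^2}\bigr)\,dy \to \int_x^z \exp\bigl(-\tfrac{b}{2y^2}\bigr)\,dy$.

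For the denominator I would rewrite it as $\int_0^z \ind_{\lbrace y > \ve/b\rbrace}\exp\bigl(\tfrac{\ve}{3y^3} - \tfrac{b}{2y^2}\bigr)\,dy$ and note that on $\lbrace y > \ve/b\rbrace$ the exponent is non-positive: indeed $\tfrac{\ve}{3y^3} - \tfrac{b}{2y^2} = \tfrac{1}{y^2}\bigl(\tfrac{\ve}{3y} - \tfrac{b}{2}\bigr)$, and $y > \ve/b > 2\ve/(3b)$ forces $\tfrac{\ve}{3y} < \tfrac{b}{2}$. Hence the integrand is dominated by the integrable function $\ind_{[0,z]}$ and converges pointwise to $\exp\bigl(-\tfrac{b}{2y^2}\bigr)$ for every $y\in(0,z)$ (using $\ind_{\lbrace y>\ve/b\rbrace}\to 1$), so dominated convergence yields $\int_{\ve/b}^z \exp\bigl(\tfrac{\ve}{3y^3} - \tfrac{b}{2y^2}\bigr)\,dy \to \int_0^z \exp\bigl(-\tfrac{b}{2y^2}\bigr)\,dy$, which is a finite positive number since $\exp(-b/(2y^2))$ vanishes rapidly at the origin. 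Dividing the two limits gives the claim.

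There is essentially no obstacle here: the argument is a direct transcription of the treatment of the denominator in Lemma~\ref{l:hitProb} and parallels Proposition~\ref{p:B2proof}. The only point requiring a word is the uniform bound near $y=0$ in the denominator, which is handled entirely by the sign of the exponent on the domain of integration.
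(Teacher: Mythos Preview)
Your proposal is correct and follows essentially the same route as the paper: both write the hitting probability via the scale function as a ratio of integrals of $\exp\bigl(\tfrac{\ve}{3y^3}-\tfrac{b}{2y^2}\bigr)$ and then apply dominated convergence separately to numerator and denominator. The only cosmetic difference is the choice of majorant for the denominator: the paper uses the sharper bound $\tfrac{\ve}{3y^3}-\tfrac{b}{2y^2}\le -\tfrac{b}{6y^2}$ on $\{y>\ve/b\}$, whereas you observe merely that the exponent is non-positive there and dominate by $\ind_{[0,z]}$; either bound suffices.
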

\begin{proof}
	The scale function approach leads to \[\P_x (T_{\ve/b} < T_z) = \frac{
		\int_x^z \exp \left(\frac{\ve}{3y^3}-\frac{b}{2y^2}\right) \,dy}{\int_{\ve/b}^z \exp \left(\frac{\ve}{3y^3}-\frac{b}{2y^2}\right) \,dy};\]
\end{proof} dominated convergence theorem may be applied to numerator and denominator separately finishing the proof. For the denominator observe that $\frac{\ve}{3y^3}-\frac{b}{2y^2} \le -b/(6y^2)$ holds.

\begin{proposition}[Finishing (B1)]
	\[	\lim_{scaling}\E_x[\tT_{\ve/b} ] = 0 \text{ for } x >0. \]
\end{proposition}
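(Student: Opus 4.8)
The plan is to argue exactly as in the proof of the corresponding statement for the first example class, reducing the claim to the two facts just established for the present model. First I would recall that $\tX$ is the process $X^1$ conditioned on $\lbrace T_{\ve/b}<T_z\rbrace$ via Doob's $h$\=/transform, so that, by the standard relation between conditioning on a hitting event and $h$\=/transforms already used in Section~3, $\tT_{\ve/b}$ under $\P_x$ has the law of $T_{\ve/b}$ under the conditional measure $\P_x(\,\cdot\mid T_{\ve/b}<T_z)$. This yields the identity
\[
\E_x[\tT_{\ve/b}] \;=\; \frac{\E_x\bigl[T_{\ve/b}\,\ind_{\lbrace T_{\ve/b}<T_z\rbrace}\bigr]}{\P_x(T_{\ve/b}<T_z)} .
\]

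Next I would bound the numerator from above by $\E_x[T_{\ve/b}\wedge T_z]$: on $\lbrace T_{\ve/b}<T_z\rbrace$ one has $T_{\ve/b}=T_{\ve/b}\wedge T_z$, and $T_{\ve/b}\wedge T_z\ge 0$ in any case, so the inequality is immediate. By the preceding proposition (the first part of (B1) for this example) this upper bound tends to $0$ in the scaling limit. For the denominator I would invoke the (B2)-proposition proved just above, according to which $\P_x(T_{\ve/b}<T_z)$ converges as $\ve\downarrow 0$ to $\bigl(\int_x^z e^{-b/(2y^2)}\,dy\bigr)/\bigl(\int_0^z e^{-b/(2y^2)}\,dy\bigr)$, which is strictly positive since $0<x<z$. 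Combining the two, the quotient vanishes in the scaling limit, which is precisely the assertion.

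The argument is entirely routine and there is no real obstacle here: all of the analytic effort has already been spent in establishing the first part of (B1) and (B2) for this model. The only point that deserves a moment's care is the conditional-expectation identity in the first step, i.e.\ that the $h$\=/transform description of $\tX$ indeed gives $\E_x[\tT_{\ve/b}]=\E_x[T_{\ve/b}\mid T_{\ve/b}<T_z]$; this is the same standard fact already invoked in Section~3 and in the first example class, so no new input is needed.
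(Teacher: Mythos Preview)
Your argument is correct and matches the paper's proof exactly: the paper simply writes ``As in the first example class,'' and that first-example proof is precisely the inequality $\E_x[\tT_{\alpha(\ve)}]=\E_x[T_{\alpha(\ve)}\ind_{\{T_{\alpha(\ve)}<T_z\}}]/\P_x(T_{\alpha(\ve)}<T_z)\le \E_x[T_{\alpha(\ve)}\wedge T_z]/\P_x(T_{\alpha(\ve)}<T_z)\to 0$, invoking the first part of (B1) for the numerator and (B2) for the denominator.
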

\begin{proof}
	As in the first example class.
\end{proof}

\section{Conclusion}
This work was mainly motivated by \cite{BB17} of M. Bauer and D. Bernard. Using a clear probabilistic heuristic we prove a version of Conjecture B under general abstract conditions and demonstrate their usability in the example sections. We believe that the approach presented above is flexible enough to cover most one-dimensional examples of interest. As already discussed in \cite{BB17} the natural question of extending the results to multi-dimensional situations remains unanswered, even though numerical simulations seem very promising in the sense that a point process could be obtained in an appropriate scaling regime. The tools and key concepts used throughout our approach appear relatively general and it would be clearly interesting to see, whether the approach of this work can be extended to higher dimensional situations. We leave this for future investigation. 

\section{Acknowledgment}
The authors thank M. Bauer, D. Bernard, A. Klump, A. Tilloy and P. Trykacz for useful discussions about the topic of this work.

%

\end{document}